\documentclass[a4paper,pdfa,UKenglish,cleveref, autoref, thm-restate]{lipics-v2021}

\pdfoutput=1 %
\hideLIPIcs  %

\usepackage{todonotes}
\usepackage{xspace}
\usepackage{macros}
\usepackage[linesnumbered,ruled,vlined]{algorithm2e}
\usepackage{comment}
\bibliographystyle{plainurl}%

\title{FPT Approximation of Generalised Hypertree Width for Bounded Intersection Hypergraphs} 

\titlerunning{FPT Approximation of Generalised Hypertree Width for Bounded Intersection} 

\author{Matthias Lanzinger}{TU Wien \& University of Oxford}{matthias.lanzinger@tuwien.ac.at}{https://orcid.org/0000-0002-7601-3727}{Matthias Lanzinger acknowledges support by the Royal Society  “RAISON DATA” project (Reference No. RP\textbackslash{}R1\textbackslash{}201074) and the Vienna Science and Technology Fund (WWTF) [10.47379/ICT2201, 10.47379/VRG18013, 10.47379/NXT22018].}

\author{Igor Razgon}{Birkbeck, University of London}{igor@dcs.bbk.ac.uk}{https://orcid.org/0000-0002-7060-5780}{}

\authorrunning{M. Lanzinger \& I. Razgon}

\Copyright{CC-BY}

\begin{CCSXML}
<ccs2012>
   <concept>
       <concept_id>10003752.10003809.10010052.10010053</concept_id>
       <concept_desc>Theory of computation~Fixed parameter tractability</concept_desc>
       <concept_significance>500</concept_significance>
       </concept>
   <concept>
       <concept_id>10003752.10003809.10003636</concept_id>
       <concept_desc>Theory of computation~Approximation algorithms analysis</concept_desc>
       <concept_significance>300</concept_significance>
       </concept>
   <concept>
       <concept_id>10002950.10003624.10003633.10003637</concept_id>
       <concept_desc>Mathematics of computing~Hypergraphs</concept_desc>
       <concept_significance>300</concept_significance>
       </concept>https://www.overleaf.com/project/64c58e3fb362775457afeb66
 </ccs2012>
\end{CCSXML}

\ccsdesc[500]{Theory of computation~Fixed parameter tractability}
\ccsdesc[300]{Theory of computation~Approximation algorithms analysis}
\ccsdesc[300]{Mathematics of computing~Hypergraphs}

\keywords{generalized hypertree width, hypergraphs, parameterized algorithms, approximation algorithms}  %

\nolinenumbers %

\EventEditors{Olaf Beyersdorff, Mamadou Moustapha Kant\'{e}, Orna Kupferman, and Daniel Lokshtanov}
\EventNoEds{4}
\EventLongTitle{41st International Symposium on Theoretical Aspects of Computer Science (STACS 2024)}
\EventShortTitle{STACS 2024}
\EventAcronym{STACS}
\EventYear{2024}
\EventDate{March 12--14, 2024}
\EventLocation{Clermont-Ferrand, France}
\EventLogo{}
\SeriesVolume{289}
\ArticleNo{42}

\newcommand{\ghw}{\ensuremath{\mathit{ghw}}\xspace}
\newcommand{\hw}{\ensuremath{\mathit{hw}}\xspace}
\newcommand{\tw}{\ensuremath{\mathit{tw}}\xspace}
\newcommand{\np}{\textsf{NP}\xspace}
\newcommand{\fpt}{\textsf{FPT}\xspace}
\newcommand{\akd}{\ensuremath{\alpha^\times_{k,d}}\xspace}

\newcommand{\bep}{\ensuremath{\mathsf{BE}_p}\xspace}
\newcommand{\B}{\ensuremath{\mathbf{B}}\xspace}
\newcommand{\Reject}{\ensuremath{\mathbf{Reject}}\xspace}
\newcommand{\sep}{\mathit{sep}\xspace}
\newcommand{\nop}[1]{}
\newcommand{\gcaalg}{\ensuremath{\mathit{GapCoverApprox}}\xspace}

\begin{document}

\maketitle

\begin{abstract}

Generalised hypertree width (\ghw) is a hypergraph parameter that is central to the tractability of many prominent problems with natural hypergraph structure. Computing \ghw of a hypergraph is notoriously hard. The decision version of the problem, checking whether $\ghw(H) \leq k$, is \textsc{paraNP}-hard when parameterised by $k$. Furthermore, approximation of $\ghw$ is at least as hard as approximation of \textsc{Set-Cover}, which is known to not admit any \fpt approximation algorithms.

Research in the computation of \ghw so far has focused on identifying structural restrictions to hypergraphs -- such as bounds on the size of edge intersections --  that permit \textsc{XP} algorithms for \ghw. Yet, even under these restrictions that problem has so far evaded any kind of \fpt algorithm.
In this paper we make the first step towards \fpt algorithms for \ghw by showing that the parameter can be approximated in \fpt time for graphs of bounded edge intersection size.
In concrete terms we show that there exists an \fpt algorithm, parameterised by $k$ and $d$, that
for input hypergraph $H$ with maximal cardinality of edge intersections $d$ and integer $k$ either outputs a tree decomposition with $\ghw(H) \leq 4k(k+d+1)(2k-1)$, or rejects, in which case it is guaranteed that $\ghw(H) > k$. Thus, in the special case of hypergraphs of bounded edge intersection, we obtain an \fpt $O(k^3)$-approximation algorithm for $\ghw$.
 \end{abstract}

\newpage

\section{Introduction}
\label{sec:intro}

A tree decomposition of a hypergraph $H$ is a pair $(T, \B)$ where $T$ is a tree and ${\bf B}: V(T) \to 2^{V(H)}$ assigns a \emph{bag} to each node, that satisfies certain properties. The treewidth of a decomposition is $\max_{u\in V(T)} |\B(u)|-1$ and the treewidth of $H$ is the least treewidth taken over all decompositions. In many graph algorithms, treewidth is the key parameter that determines the complexity of the problem.
However, for many problems whose underlying structure is naturally expressed in terms of hypergraphs the situation is different. The treewidth of a hypergraph is always at least as large as its \emph{rank} ($-1$), i.e., the maximal size of an edge. Yet, many standard hypergraph problems can be tractable even with unbounded rank. To counteract this problem, \emph{generalised hypertree width} (\ghw) often takes the place of treewidth in these cases. The definition of \ghw is also based on tree decompositions, with the only difference being that the \ghw of a decomposition is $\max_{u\in V(T)} \rho(\B(u))$, where $\rho(U)$ is the least number of edges required to cover $U \subseteq V(H)$.
Parallel to treewidth in graphs, low \ghw is a key criterion for tractability in the hypergraph setting.
Prominent examples include the evaluation of conjunctive queries, database factorisation~\cite{DBLP:journals/sigmod/OlteanuS16}, winner determination in combinatorial auctions~\cite{DBLP:journals/jacm/GottlobG13}, and determining Nash Equilibria in strategic games~\cite{DBLP:journals/jair/GottlobGS05}. 

Computation of \ghw is computationally challenging. The problem is known to be \textsf{paraNP}-hard~\cite{DBLP:journals/jacm/GottlobMS09,DBLP:conf/pods/FischlGP18} and \textsc{W[2]}-hard~\cite{DBLP:conf/wg/GottlobGMSS05} in the parameterised setting\footnote{When discussing the parameterised complexity of deciding whether a width parameter is at most $k$, we always refer to the parameterisation by $k$ if not specified otherwise.} (see the discussion of related work below for details). This is shown by reduction from \textsc{Set-Cover}, which together with recent breakthrough results on the approximability of \textsc{Set-Cover}~\cite{DBLP:journals/jacm/SLM19}, also implies that there can be no \fpt approximation algorithms for \ghw under standard assumptions.
In this paper, we introduce the first \fpt algorithm for approximation of \ghw in unbounded rank hypergraphs. Notably, this comes over 20 years after the parameter was first introduced ~\cite{DBLP:journals/jcss/GottlobLS03}. As there can be no such algorithm in the general case, we instead consider the restriction of the cardinality of the intersections of any two edges. Formally, a $(2,d)$-hypergraph is a hypergraph where the intersection of any pair of edges has cardinality at most $d$.
We will study the following problem $f$-\textsc{ApproxGHW}.
\begin{problem}{$f$-ApproxGHW}
  Input & $(2,d)$-hypergraph $H$, positive integer $k$ \\
  Parameters & $k$ and $d$\\
  Output & A tree decomposition of $H$ with $\ghw$ at most $f(k, d)$,
  \\ & or \textbf{Reject}, in which case $\ghw(H) > k$.
\end{problem}
Let $\alpha(k, d)$ refer to the term $k(3k+d+1)(2k-1)$. Our main result is the following.
\begin{restatable}{theorem}{RESTATEmain}
 \label{thm:approxghw}
 \textsc{$4\alpha(k, d)$-ApproxGHW} is fixed-parameter tractable,
\end{restatable}

Our algorithm follows classic ideas for \fpt algorithms for treewidth but requires significant new developments. Most critically, we propose an \fpt algorithm for computing approximate $(A, B)$-separators in $(2,d)$-hypergraph, i.e., set of vertices $S$ such that sets of vertices $A$ and $B$ are not connected in $H$ without $S$. 

\textbf{Related Work.}
Despite the close relationship between \ghw and treewidth, there is a stark difference in the complexity of recognising the respective widths. While it is famously possible to decide $\tw(\cdot) \leq k$ in fixed-parameter linear time~\cite{DBLP:journals/siamcomp/Bodlaender96}, deciding $\ghw(\cdot) \leq k$ is significantly harder. Intuitively, this is because techniques for efficiently deciding treewidth fundamentally rely on bounding the number of vertices in the bag, yet even $\alpha$-acyclic hypergraphs (those with $\ghw$ 1) can require decompositions with arbitrarily large bags to achieve minimal \ghw.
In concrete terms, deciding $\ghw(\cdot)\leq k$ has been shown to be \np-hard for all fixed $k > 1$ (or \textsf{paraNP}-hard in terms of parameterised complexity)~\cite{DBLP:journals/jacm/GottlobMS09,DBLP:conf/pods/FischlGP18}. Additionally, deciding $\ghw(\cdot) \leq k$ is known to be \textsf{W[2]}-hard by a reduction from \textsc{Set Cover}~\cite{DBLP:conf/wg/GottlobGMSS05}. 

In response, significant effort has been invested in identifying conditions under which deciding $\ghw(\cdot) \leq k$ is tractable~\cite{DBLP:journals/jacm/GottlobMS09,DBLP:journals/jacm/GottlobLPR21} for fixed $k$ (i.e., the problem is in \textsf{XP}). Of particular note here is the observation that the problem is in \textsf{XP} if we restrict the problem to so-called $(c, d)$-hypergraphs, i.e., hypergraphs where any intersection of at least $c$ edges has cardinality at most $d$. Notably, this coincides with the most general condition known to allow kernelization for \textsc{Set-Cover}~\cite{DBLP:conf/esa/PhilipRS09}.

As mentioned above, negative results for \fpt approximation of \textsc{Set-Cover} apply also to approximating \ghw. 
Nonetheless, some approximation results are known when looking beyond \fpt. Importantly, \ghw is 3-approximable in \textsf{XP} via the notion of (not generalised) \emph{hypertree width} (\hw)~\cite{DBLP:journals/jcss/GottlobLS03,DBLP:journals/ejc/AdlerGG07}. Despite this improvement in the case of fixed $k$, deciding hypertree width is also \textsf{W[2]}-hard~\cite{DBLP:conf/wg/GottlobGMSS05} by the same reduction from \textsc{Set-Cover} as \ghw. \emph{Fractional hypertree width} $\mathit{fhw}$ generalises \ghw in the sense that the width is determined by the \emph{fractional cover number} of the bags~\cite{DBLP:journals/talg/GroheM14}. This width notion is strictly more general than \ghw and allows for cubic approximation on $\mathsf{XP}$~\cite{DBLP:journals/talg/Marx10}.
Recently, Razgon~\cite{DBLP:journals/corr/abs-2212-13423} proposed an \fpt algorithm for constant factor approximation of \ghw for hypergraphs of bounded rank. While we consider this to be an important conceptual step towards our result, it should be noted that for any hypergraph $H$ it holds that $\ghw(H) \leq \tw(H) \cdot \mathit{rank}(H)$, i.e., assuming bounded rank tightly couples the problem to deciding treewidth.

\textbf{Structure of the paper.}
Our main result combines on multiple novel combinatorial observations and \fpt algorithms and the main body of this paper is therefore focused on presenting the main ideas and how these parts interact. Full proof details are provided in the appendix.
After basic technical preliminaries in \Cref{sec:prelim}, we give a high-level overview of the individual parts that lead to the main result in \Cref{sec:overview}. After the initial overview, \Cref{sec:pseudo} presents the key algorithms formally, together with sketches of their correctness and complexity. Similarly, the main combinatorial ideas are discussed in \Cref{sec:combstat}. We discuss potential avenues for future research in \Cref{sec:conclusion}. 
 
\section{Preliminaries}
\label{sec:prelim}

We will frequently write $[n]$ for the set $\{1,2,\dots,n\}$.
We say that (possibly empty) pairwise disjoint sets $X_1, X_2, \dots, X_n$ are a \emph{weak partition} of set $X$ if their union equals $X$. We assume familiarity with standard concepts of parameterised algorithms and refer to~\cite{DBLP:books/sp/CyganFKLMPPS15} for details. We use $\mathit{poly}(\cdot)$ to represent some polynomial function in the representation size of the argument. 

A \emph{hypergraph} $H$ is a pair of sets $(V(H), E(H))$ where we call $V(H)$ the \emph{vertices} of $H$ and $E(H) \subseteq 2^{V(H)}$ the \emph{(hyper)edges} of $H$. We assume throughout that $H$ has no isolated vertices, i.e., vertices that are not in any edge. For $v \in V(H)$, define $I(v) := \{e \in E(H) \mid v \in e\}$, i.e., the set of edges incident to $v$.
We say that $H$ is a $(c,d)$-hypergraph if for any set $\{e_1,\dots,e_c\} \subseteq E(H)$ of $c$ edges, it holds that $|\bigcap_i^c e_i| \leq d$. 
We will refer to the set of (maximal) connected components of  a hypergraph $H$  as $\mathit{CComp}(H)$.
The \emph{induced subhypergraph} of $H$ induced by $U \subseteq V(H)$ is the hypergraph $H'$ with $V(H')=U$ and $E(H')=\{e \cap U \mid e \in E(H)\} \setminus  \emptyset$. We use the notation $H[U]$ to mean the induced subhypergraph of $H$ induced by $U$. For tree $T$ and $v \in V(T)$ we sometimes write $T - v$ to mean the subgraph of $T$ obtained by deleting $v$ and its incident edges.
Let $A, B \subseteq V(H)$. An $(A,B)$-separator is a set $S \subseteq V(H)$ such that there is no path from an $a \in A \setminus S$ to a $b \in B \setminus S$ in $H[V(H)\setminus S]$.
For $e \in E(H)$, $U_1,\dots,U_n\subseteq V(H)$ we say that $e$ \emph{touches} $U_1,\dots,U_n$ if $e \cap U_i \neq \emptyset$ for all $i\in[n]$. 

An \emph{edge cover} $\mu$ for $U \subseteq V(H)$ is a subset of $E(H)$ such that $U \subseteq \bigcup \mu$. We sometimes refer to the cardinality of an edge cover as its \emph{weight}. The \emph{edge cover number} $\rho(U)$ for set $U \subseteq V(H)$ is the minimal weight over all edge covers for $U$. We sometimes say that $\mu$ is an edge cover of $H$ to mean an edge cover of $V(H)$. Similarly, we use $\rho(H)$ instead of $\rho(V(H))$. A set of edges $E' \subseteq E(H)$ is \emph{$\rho$-stable} if $E'$ is a minimal weight cover for $\bigcup E'$.

A set of sets $S_1,\dots$ can naturally be interpreted as a hypergraph, by considering each set $S_i$ as an edge. In that light, it is clear that deciding $\rho(H) \leq k$ is precisely the same as deciding whether a set system admits a set cover of size $k$. 
\begin{proposition}[\cite{DBLP:conf/stacs/FabianskiPST19}]
\label{fptsetcov}
    There is an \fpt algorithm parameterised by $k+c+d$ that decides for a given $(c,d)$-hypergraphs $H$ and $k \geq 1$ whether $\rho(H) \leq k$.
\end{proposition}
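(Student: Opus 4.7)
The plan is a bounded-depth branching algorithm combined with a sunflower-based kernelization. At each step the algorithm maintains a partial cover $\mu$ and a residual hypergraph containing the still-uncovered vertices. If no uncovered vertex remains, return $\mu$; otherwise pick any uncovered vertex $v$, observe that every extension of $\mu$ to a full cover must include some edge $e \in I(v)$, and branch over each such choice, adding $e$ to $\mu$, removing $V(e)$, and recursing with parameter $k-1$. The recursion depth is at most $k$, so if one can guarantee $|I(v)| \leq f(k,c,d)$ at the branch vertex, the search tree has at most $f(k,c,d)^k$ leaves and each can be reached in polynomial time, giving the desired FPT running time.

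The main technical ingredient is therefore a polynomial-time kernelization that bounds $|I(v)|$. The key input is that the $(c,d)$-property forces any $c$ edges in $I(v)$ to intersect in at most $d$ vertices, so the core of any sunflower of size at least $c$ inside $I(v)$ has size at most $d$. If $|I(v)|$ exceeds a suitable threshold $\tau(k,c,d)$, a sunflower-lemma-style argument (applied to edge traces rather than whole edges, since edge sizes are unbounded) yields a sunflower $\{e_1,\ldots,e_{k+2}\} \subseteq I(v)$ with a small core $Y$. The standard sunflower reduction for set cover then applies: in any cover of size at most $k$, at least two sunflower edges are absent, so $e_{k+2}$ can be removed from $H$ without changing whether $\rho(H)\leq k$, because any cover using $e_{k+2}$ can be swapped to use a petal-equivalent edge. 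Iterating this reduction yields $|I(v)| \leq \tau(k,c,d)$ at every vertex.

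The main obstacle is the unbounded cardinality of edges, which prevents a direct appeal to the classical sunflower lemma (whose size threshold depends on the maximum set size). The $(c,d)$-property circumvents this by bounding sunflower cores independently of edge sizes. The plan is to make this precise by first enumerating candidate cores $Y$ of size at most $d$ (there are polynomially many once we fix $d$, since a core is contained in the intersection of $c$ edges through $v$, which has $\leq d$ vertices) and then applying the sunflower lemma inside the residual family $\{e \setminus Y : e \in I(v),\, Y \subseteq e\}$, where the relevant structure outside $Y$ can be controlled combinatorially. Assembling these ingredients into a clean reduction rule and tracking the dependence of $\tau$ and $f$ on $k$, $c$, and $d$ is the principal bookkeeping needed to finish the proof.
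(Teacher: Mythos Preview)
The paper does not prove this proposition; it is cited from the literature without argument, so there is no in-paper proof to compare against. Your sketch, however, has a genuine gap. The sunflower rule you invoke is the reduction for \textsc{Hitting Set}, and it does not transfer to \textsc{Set Cover}. In a sunflower $\{e_1,\ldots,e_{k+2}\}\subseteq I(v)$ with core $Y$, the petals $e_i\setminus Y$ are pairwise disjoint vertex sets that must still be covered. If some $u\in e_{k+2}\setminus Y$ lies in no edge of $H$ outside the sunflower, then $e_{k+2}$ is indispensable; no other $e_i$ contains $u$, so the swap you propose fails, and deleting $e_{k+2}$ can turn a yes-instance into a no-instance. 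For set cover it is the dual situation---many \emph{elements} sharing the same incidence pattern---that permits safe deletion, not many sets through a fixed element.

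Your workaround for unbounded edge sizes does not close the gap either: after fixing a candidate core $Y$ of size at most $d$, the residuals $e\setminus Y$ are still of unbounded cardinality, so the classical sunflower lemma remains inapplicable to them, and the ``relevant structure outside $Y$'' is not controlled by the $(c,d)$-hypothesis alone. The argument in the cited reference takes a different route altogether: the vertex--edge incidence bipartite graph of a $(c,d)$-hypergraph excludes $K_{c,d+1}$ (with the $c$ vertices on the edge side), and \textsc{Set Cover} is exactly \textsc{Red--Blue Dominating Set} on this graph; the polynomial kernel of Philip, Raman, and Sikdar for dominating set on $K_{i,j}$-free graphs then yields the claimed \fpt algorithm.
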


A \emph{tree decomposition} (TD) of hypergraph $H$ is a pair $(T,\B)$ where $T$ is a tree and $\B: V(T) \to 2^{V(H)}$ labels each node of $T$ with its so-called \emph{bag}, such that the following hold:
\begin{romanenumerate}
    \item for each $e \in E(H)$, there is a $u \in V(T)$ such that $e \subseteq \B(u)$, and
    \item for each $v \in V(H)$, the set $\{u \in V(U) \mid v \in \B(u)\}$ induces a non-empty subtree of $T$.
\end{romanenumerate}
We refer to the first property as the \emph{containment property} and to the second as the \emph{connectedness condition}.
For a set $U \subseteq V(T)$ we use $\B(U)$ as a shorthand for $\bigcup_{u \in U} \B(u)$, i.e., all the vertices that occur in bags of nodes in $U$. Similarly, for subtree $T'$ of $T$, we sometimes use $\B(T')$ instead of $\B(V(T'))$.
The \emph{generalised hypertree width} (\ghw) of a tree decomposition is $\max_{u \in V(T)} \rho(\B(u))$, and the generalised hypertree width of $H$ (we write $\ghw(H)$) is the minimal \ghw over all tree decompositions for $H$. It will be important to remember at various points that $\ghw$ is monotone under taking induced subhypergraphs, i.e., $\ghw(H[U]) \leq \ghw(H)$ for all $U\subseteq V(H)$.
 
\section{High-level Overview of Algorithm}
\label{sec:overview}
The algorithm for \textsc{$4\alpha(k,d)$-ApproxGHW}
is based on a standard approach for treewidth computation \cite{DBLP:journals/jal/RobertsonS86}.
However, this approach is in fact applied for a tree decomposition \emph{compression}
rather than approximation from scratch. 
In other words, the actual problem being solved is the following one. 

\begin{problem}{Compress}
  Input & A $(2,d)$-hypergraph $H$, integer $k$, \\
  & a TD $(T, \B)$ of $H$ with \ghw $4\alpha(k,d)+1$ , \\
  & $W \subseteq V(H)$ with $\rho(W) \leq 3\alpha(k,d)$ \\
  Parameters & $k$ and $d$\\
  Output & A tree decomposition $(T^*, \B^*)$ of $H$ with $\ghw$ at most $4\alpha(k,d)$ such that $W \subseteq \B^*(u)$ for some $u \in V(T^*)$, \\
  & or \textbf{Reject}, in which case $\ghw(H) > k$.
\end{problem}

\begin{restatable}{theorem}{RESTATEfptcompress}
  \label{thm:compress}
  \textsc{Compress} is fixed-parameter tractable.
\end{restatable}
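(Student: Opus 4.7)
The plan is to give a recursive divide-and-conquer algorithm in the spirit of Bodlaender's treewidth compression, adapted to the \ghw setting. I would generalise the subproblem slightly: given an induced subhypergraph $H[U]$ and a seed $W' \subseteq U$ with $\rho(W') \leq 3\alpha(k,d)$, either produce a tree decomposition of $H[U]$ of \ghw at most $4\alpha(k,d)$ with $W'$ in the root bag, or Reject. The top-level call uses $U = V(H)$ and $W' = W$.

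At each recursive step, the algorithm searches for a \emph{balanced extension} $S \subseteq U \setminus W'$ with $\rho(W' \cup S) \leq 4\alpha(k,d)$ such that each connected component $C$ of $H[U \setminus (W' \cup S)]$ has a boundary $W'_C$, consisting of those vertices of $W' \cup S$ lying in a common hyperedge with some vertex of $C$, satisfying $\rho(W'_C) \leq 3\alpha(k,d)$, and such that $C$ is strictly smaller than $U$ in a suitable potential (for instance, the number of nodes of $T$ whose bag lies entirely inside $C$). The set $W' \cup S$ is used as the root bag of the output, and recursively produced decompositions of each $H[C \cup W'_C]$ are attached as its children; the connectedness condition is preserved because every $W'_C \subseteq W' \cup S$ sits in the root bag. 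If the search for $S$ fails at some level, the algorithm Rejects.

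The key subroutine --- and the principal obstacle --- is finding the balanced extension $S$ in \fpt time. This relies on two ingredients: the \fpt approximate $(A,B)$-separator algorithm for $(2,d)$-hypergraphs announced in the overview, and the \fpt set-cover result of \Cref{fptsetcov}, which certifies the constraints $\rho(W' \cup S) \leq 4\alpha(k,d)$ and $\rho(W'_C) \leq 3\alpha(k,d)$ efficiently. If $\ghw(H) \leq k$ then any optimal decomposition witnesses a balanced separator of cover weight at most $k$; the approximate routine should then return one of cover weight at most $\alpha(k,d) = k(3k+d+1)(2k-1)$. The structure of this blow-up is informative about how the proof proceeds: the factor $2k-1$ stems from balancing the components against the $k$ edges of the optimal root cover; the factor $3k+d+1$ bounds the number of optimal cover edges that can touch a single component, exploiting the $(2,d)$ intersection bound; and the leading $k$ accounts for the size of the optimal cover itself. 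The guide decomposition $(T,B)$ restricts the separator search to a polynomial-size candidate set of bag-based cuts.

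Termination follows from the balancedness of $S$, which forces recursion depth $O(\log |V(H)|)$ in the chosen potential, while each recursive level performs one \fpt separator computation plus polynomial bookkeeping, yielding an overall \fpt running time. Correctness of Reject is by contrapositive: if $\ghw(H) \leq k$, every recursive call succeeds. The main technical hurdle is the tight control of $\rho(W'_C)$ through the balancedness of $S$ combined with the $(2,d)$ bound; this is precisely what fixes the constants entering $\alpha(k,d)$ and, in turn, justifies both the approximation ratio and the \fpt promise of Compress.
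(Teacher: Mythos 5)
Your overall architecture (recurse on components of a small-cover separator, make the separator plus seed the root bag, re-attach subdecompositions) matches the paper's Algorithm~\ref{alg:compress}. However, your termination argument contains a genuine gap, and a key invariant is left unjustified.

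The termination claim is the serious problem: you assert that balancedness of the extension $S$ forces recursion depth $O(\log |V(H)|)$. The only balance result available in this setting, Theorem~\ref{balancedsep}, is a balance on the fixed edge set $E_W$ covering $W$: it yields a weak partition $E'_0, E'_1, E'_2$ with $|E'_1|, |E'_2| \leq \tfrac{2}{3}|E'|$ and a separator $S$ covering $\bigcup E'_0$ separating $\bigcup E'_1$ from $\bigcup E'_2$. This says nothing about the number of vertices (or of tree nodes of the guide TD) falling into any individual connected component of $H \setminus S$; those components can be arbitrarily unbalanced. Indeed the paper never claims logarithmic depth. It instead shows (Claims~\ref{algo2clm2} and \ref{algo2clm4}) that each recursive call strictly decreases $|V(H)|$, so depth is at most $|V(H)|$, and separately bounds the number of leaves of the recursion tree by $\rho(H)^2$ using a superadditivity argument on edge cover numbers (Claim~\ref{algo2clm3}), giving at most $\rho(H)^2 \cdot |V(H)|$ total calls. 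Without a substitute for that counting, your algorithm's FPT running-time bound does not go through.

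A second, smaller gap is the bound $\rho(W'_C) \leq 3\alpha(k,d)$ on the component boundaries, which you state but do not derive. The paper gets this by explicitly passing the seed $(\bigcup E_W \cap U_i) \cup X$ to the recursion and decomposing the bound as $\rho(\bigcup E_W \cap U_i) \leq 2\alpha(k,d)$ (each restriction is covered by either $E'_1$ or $E'_2$, both of cover number at most $2\alpha(k,d)$) plus $\rho(X) \leq \alpha(k,d)$; the precise partition of $E_W$ from Theorem~\ref{balancedsep} is what makes this bookkeeping close. Also, the search for the separator is not over a "polynomial-size candidate set of bag-based cuts": it is over the $3^{|E_W|}$ weak partitions of $E_W$ (an FPT-sized, not polynomial, set), calling $\mathit{AppSep}$ for each.
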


One natural question is how the \textsc{Compress} being \fpt
implies \textsc{$4\alpha(k,d)$-ApproxGHW}. This is done through the use
of \emph{iterative compression}, a well known methodology for the design of
\fpt algorithms. The resulting algorithm for \Cref{thm:approxghw} is presented in \Cref{alg:main}. In particular, we let $V(H)=\{v_1, \dots, v_n\}$,
set $V_i=\{v_1, \dots,v_i\}$ and $H_i=H[V_i]$ and solve the \textsc{$4\alpha(k,d)$-ApproxGHW}
for graphs $H_1, \dots, H_n$. 
If some intermediate $H_i$ is rejected, the whole $H$
can be rejected. Otherwise, the application to $H_i$ results in a tree decomposition
$(T_i,{\bf B}_i)$ of \ghw at most $4\alpha(k,d)$. Add $v_{i+1}$ to each bag of 
$(T_i,{\bf B}_i)$. If the \ghw of the resulting decomposition is still at most $4\alpha(k,d)$
simply move on to the next iteration. Otherwise, apply the algorithm for \textsc{Compress}
rejecting if the algorithm rejects and moving to the next iteration if a compressed
tree decomposition is returned.

\begin{algorithm}[t]
    \DontPrintSemicolon
    \SetKwInOut{Input}{Input}\SetKwInOut{Output}{Output}
    \Input{$(2,d)$-hypergraph $H$, positive integer $k$}
    $\{v_1,\dots,v_n\} \gets V(H)$\;
    Let $T_1$ be the tree with a single node $r$\;
    Let ${\bf B}_1$ be the function $r \mapsto \{v_1\}$\;
    \For{$1 < i \leq n$}{
      $V_i \gets \{v_1, \dots, v_i\}$\;
      $H_i \gets H[V_i]$ \;
      $T_i \gets T_{i-1}$\;
      ${\bf B}_i \gets \{t \mapsto {\bf B}_{i-1}(t) \cup \{v_i\} \mid t \in T_i\}$\;
      \If{$\ghw((T_i,{\bf B}_i)) > 4\alpha(k,d)$}{
        $X \gets Compress(H_i, k, (T_i, {\bf B}_i), \emptyset)$\;
        \If{$X$ is \textbf{Reject}}{
          \KwRet{\textbf{Reject}}\;
        }
        $(T_i,B_i) \gets X$\;
      }
    }
    \KwRet{$(T_n,{\bf B}_n)$}\;
    \caption{An \fpt algorithm for $4\alpha(k,d)$-\textsc{ApproxGHW}.}
    \label{alg:main}
  \end{algorithm}

Let us turn our attention to the algorithm for \textsc{Compress}.
A central ingredient of the algorithm \cite{DBLP:journals/jal/RobertsonS86} considers a set $S$
of size $O(k)$ goes through all partitions of $S$ into two balanced
subsets and for each such a partition checks existence of a small
separator. However, in our setting the set $S$ can contain arbitrarily many vertices, as long as it can be covered by a bounded number of hyperedges. The following statement provides us with an appropriate variant of the classic result for treewidth that is applicable to our setting.

\begin{restatable}{theorem}{RESTATEbalancedsep}
    \label{balancedsep}
Let $H$ be a hypergraph with $\ghw(H) \leq k$ and let $E' \subseteq E(H)$. Then there exists a weak partition of $E'$ into three sets $E'_0,E'_1,E'_2$ such that
    \begin{enumerate}
    \item there is a $(\bigcup E'_1,\, \bigcup E'_2)$-separator $S$ such that $\bigcup E'_0 \subseteq S$ and $\rho(S) \leq k$,
        \item $|E'_1| \leq \frac{2}{3} |E'|$, and $|E'_2| \leq \frac{2}{3} |E'|$.
    \end{enumerate}
\end{restatable}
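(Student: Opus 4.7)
My plan is to adapt the classical balanced separator argument for tree decompositions to the hypergraph setting. Fix a tree decomposition $(T,\B)$ of $H$ with $\ghw(T,\B) \leq k$, so every bag satisfies $\rho(\B(u)) \leq k$. For each $e \in E'$, use the containment property to pick a node $\pi(e) \in V(T)$ with $e \subseteq \B(\pi(e))$. This induces a node weighting $w(u) := |\pi^{-1}(u)|$ on $T$ of total weight $|E'|$.

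Next, apply the standard weighted-centroid argument: there exists $u^* \in V(T)$ such that every connected component of $T - u^*$ has weight at most $|E'|/2$ (root $T$ arbitrarily and descend from the root while the subtree weight exceeds $|E'|/2$). Set $S := \B(u^*)$, so $\rho(S) \leq k$ holds automatically, and define $E'_0 := \{e \in E' : e \subseteq S\}$. Observe that if $e \in E' \setminus E'_0$, then $\pi(e) \neq u^*$, so $\pi(e)$ lies in a unique component $T_{c(e)}$ of $T - u^*$. Moreover, combining $e \not\subseteq \B(u^*)$ with the connectedness property, every vertex of $e \setminus S$ appears only in bags of $T_{c(e)}$; this is the key structural fact needed for the separator property.

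The third step is combinatorial: pack the components of $T - u^*$ into two disjoint groups $I_1, I_2$ with total weight at most $\tfrac{2}{3}|E'|$ each. Since every individual component has weight at most $|E'|/2$, this is a routine greedy argument: either some component has weight $\geq |E'|/3$ and forms one side alone, or all components have weight $< |E'|/3$ and we fill $I_1$ greedily until it exceeds $|E'|/3$, after which $I_2$ gets the rest. Define $E'_j := \{e \in E' \setminus E'_0 : c(e) \in I_j\}$ for $j \in \{1,2\}$; this yields the weak partition and the balance condition~(2), while $\bigcup E'_0 \subseteq S$ is immediate.

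It remains to check that $S$ is a $(\bigcup E'_1, \bigcup E'_2)$-separator. Suppose for contradiction there is a path $v_0, v_1, \dots, v_\ell$ in $H[V(H)\setminus S]$ with $v_0 \in \bigcup E'_1 \setminus S$ and $v_\ell \in \bigcup E'_2 \setminus S$. Each $v_j$ lies in bags forming a subtree disjoint from $u^*$, hence contained in a single component of $T - u^*$. Consecutive $v_{j-1}, v_j$ share a hyperedge contained in some bag, forcing their subtrees to lie in the same component. Transitivity places $v_0$ and $v_\ell$ in the same component of $T-u^*$, contradicting $c(e_{v_0}) \in I_1$ and $c(e_{v_\ell}) \in I_2$ being disjoint. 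The main conceptual hurdle is the ``localisation'' step in the second paragraph, where I must show that each edge outside $E'_0$ is confined to one side of $u^*$ via the interaction between containment and connectedness; everything else reduces to well-known weighted-centroid and greedy packing bookkeeping.
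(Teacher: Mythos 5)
Your proof is correct, but it takes a genuinely different route from the paper. The paper invokes the hyperlinkedness bound of Adler, Gottlob, and Grohe~\cite{DBLP:journals/ejc/AdlerGG07}, $\mathit{hlink}(H) \leq \ghw(H)$, which directly yields a set $S$ of at most $k$ hyperedges such that no connected component of $H \setminus \bigcup S$ is $E'$-big; the separator is then $\bigcup S$, with $E'_0 := S\cap E'$, and the remaining edges are grouped by which component they touch (sorted by extension size and cut at a threshold $m$). You instead re-derive the separation from scratch, working directly on a width-$k$ tree decomposition: assign each $e \in E'$ to a node $\pi(e)$, locate a weighted centroid $u^*$, take $S := \B(u^*)$, set $E'_0$ to the edges contained in $S$, and greedily pack the components of $T - u^*$ into two sides of weight at most $\tfrac{2}{3}|E'|$ each. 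Both arguments have the same flavour --- your construction essentially internalizes the proof of the hyperlinkedness bound --- but yours is self-contained where the paper's is modular and shorter given the citation. A small advantage of your formulation: taking $E'_0 := \{e \in E' : e \subseteq S\}$ makes it immediate that $E'_0,E'_1,E'_2$ is a weak partition of all of $E'$; the paper's choice $E'_0 := S \cap E'$ leaves implicit how to handle an $e \in E' \setminus S$ that is a subset of $\bigcup S$ and hence touches no component. The key localisation step you flag (an edge outside $E'_0$ is confined to one component of $T - u^*$) and the centroid and greedy-packing steps are all sound; note in the packing step that if some component $C$ has weight at least $\tfrac{1}{3}|E'|$, putting $C$ alone in $I_1$ already gives $w(I_2) = W - w(C) \le \tfrac{2}{3}|E'|$, which is the case worth spelling out.
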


\smallskip
Theorem \ref{balancedsep} allows us to consider all partitions of a small set of hyperedges covering
the given potentially large set of vertices thus guaranteeing an \fpt upper bound for the number
of such partitions. 

The other obstacle in upgrading the result \cite{DBLP:journals/jal/RobertsonS86}
is that a balanced separator is no longer required to be small but rather
to have a small edge cover number. In order to compute such a separator
we will need a witnessing tree decomposition of $H$ of a small \ghw. This is also the reason why we employ iterative compression rather than providing a direct algorithm for approximation.
However, even in presence of the tree decomposition, we were still unable
to design a 'neat' algorithm that would either produce an (approximately) small
separator or reject, implying that a small separator does not exist. 
Instead, we propose an algorithm for the following problem 
with a \emph{nuanced} reject that is still suitable
for our purposes. 

\begin{problem}{ApproxSep}
    Input & A $(2,d)$-hypergraph $H$, sets $A_1,A_2 \subseteq V(H)$, \\
    & TD $(T,\B)$ of $H$ with \ghw $p$, integers $0 \leq k_0 \leq k \leq p$ \\
    Parameters & $p$ and $d$ \\
    Output & An $(A_1,A_2)$-separator with edge cover number\\ & at most $(3k+d+1)(2k-1)k_0$, \\
    & or \textbf{Reject}, in which case there either exists no $(A_1,A_2)$-separator with edge cover number at most $k_0$ or $ghw(H)>k$. 
\end{problem}

\begin{restatable}{theorem}{RESTATEfptsep}
\label{fptsep}
    \textsc{ApproxSep} is fixed-parameter tractable.
\end{restatable}

We postpone to the next section a more detailed consideration
of the algorithm for \textsc{ApproxSep}. 
In the rest of this section we discuss the criterion for large \textsc{ghw}
used by the algorithm and the context in which the critetion is checked. 
For this purpose, 
we will require some technical definitions.
For hypergraph $H$, let us call $U \subseteq V(H)$ a \emph{subedge} (of $H$) if there is $e \in E(H)$ such that
$U \subseteq e$. We say that two subedges $U_1,U_2$ are \emph{incompatible} if their union $U_1 \cup U_2$ is not a subedge. 

\begin{definition}
  \label{def:shyg}
  An \emph{$(a,b)$ subedge hypergrid} (or $(a,b)$-shyg) consists of pairwise incompatible subedges $U_1, \dots, U_a, S_1,\dots, S_b$ such that:
  \begin{enumerate}
  \item $U_1,\dots,U_a$ are pairwise disjoint, and
  \item for each $j\in[b]$, $S_j$ touches $U_1,\dots,U_a$.
  \end{enumerate}
\end{definition}

Throughout this paper we will be interested in a specific dimension of subedge hypergrid. Namely for deciding width $k$ in $(2,d)$-hypergraphs, we will be interested in the existence of $(3k+d+1, \xi(k,d))$ subedge hypergrids, where $\xi$ is a function in $O((kd)^d)$ (refer to~\Cref{sec:combstat} for details). 
We will denote the set of all subedge grids of $H$ of this dimension by ${\bf S}_{k,d}(H)$. The reason we care about these subedge hypergrids in particular is that their existence is a sufficient condition for high $\ghw$.

\begin{restatable}{theorem}{RESTATEnogrid}
    \label{nogrid}
Let $H$ be a $(2,d)$-hypergraph such that ${\bf S}_{k,d}(H) \neq \emptyset$. Then $ghw(H)>k$.
\end{restatable}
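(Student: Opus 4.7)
The plan is to argue by contradiction: assume $\ghw(H) \leq k$ and fix a witnessing tree decomposition $(T,\B)$ with $\rho(\B(u)) \leq k$ for every node $u$. First I would set up notation. By the pairwise incompatibility of the $U_i$'s, for each $i \in [3k+d+1]$ there is a distinct edge $g_i$ containing $U_i$. Similarly, each $S_j$ lies inside a distinct edge $f_j$, and $f_j \neq g_i$ for every $(i,j)$ (otherwise $U_i \cup S_j$ would be a subedge, contradicting incompatibility). The $(2,d)$-property then forces $|f_j \cap g_i| \leq d$, and in particular $|f_j \cap U_i| \leq d$, for every pair $(i,j)$. For each such pair I will fix a witness $v_{ij} \in S_j \cap U_i$; since the $U_i$'s are disjoint, the $3k+d+1$ witnesses $v_{1j}, \dots, v_{(3k+d+1)j}$ are pairwise distinct.

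Next I would zoom in on the bag containing each $S_j$. For every $j$, pick a node $t_j$ with $f_j \subseteq \B(t_j)$ and let $\mu_j$ be a minimum cover of $\B(t_j)$, so $|\mu_j| \leq k$. Each $v_{ij}$ lies in some edge $\phi_j(i) \in \mu_j$. Writing $G_j := \{i : g_i \in \mu_j\}$, we have $|G_j| \leq k$, and for each $i \notin G_j$ the edge $\phi_j(i)$ is distinct from $g_i$, so by the $(2,d)$-property $|\phi_j(i) \cap U_i| \leq d$; hence at most $d$ indices outside $G_j$ can share the same cover edge. Direct pigeonhole then forces $|G_j| \geq 3k+d+1 - kd$, and combining with $|G_j| \leq k$ already yields a contradiction in the small-parameter regime (e.g., $d \leq 2$, since there $3k+d+1-kd > k$).

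For the remaining parameter regimes I plan to exploit $\xi(k,d) \in O((kd)^d)$ via a signature argument. Tag each $j$ with a combinatorial signature that records $G_j$, the identification of each $g_i$ ($i \in G_j$) with its occurrence in $\mu_j$, the partition of $[3k+d+1]\setminus G_j$ into at most $k$ blocks of size at most $d$ induced by $\phi_j$, and the precise location of each $v_{ij}$ inside the bounded intersection $\phi_j(i) \cap g_i$. The number of such signatures is bounded by $O((kd)^d)$, so for $\xi(k,d)$ chosen sufficiently large two distinct indices $j \neq j'$ must share a signature. Since $f_j \neq f_{j'}$ but the witness patterns on the $U_i$'s coincide block-by-block, following the shared coordinates through the common cover edges will exhibit a pair of edges of $H$ sharing more than $d$ vertices, contradicting $(2,d)$.

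The main obstacle will be the last step: defining the signature tightly enough that its total count matches the stated $O((kd)^d)$ bound, and verifying that a signature collision genuinely produces a $(2,d)$-violating pair of edges rather than a harmless coincidence. This requires careful bookkeeping of where each witness $v_{ij}$ sits inside the at-most-$d$-sized intersections $\phi_j(i) \cap g_i$, and a precise argument that uses $(2,d)$ simultaneously against $f_j$ versus $f_{j'}$ and against the shared cover edges. Making the combinatorics line up exactly with the function $\xi(k,d)$ promised in \Cref{sec:combstat} is the most delicate accounting step of the proof.
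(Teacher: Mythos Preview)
Your approach diverges from the paper's, and the central signature-collision step has a genuine gap. First a smaller issue: from $|\phi_j(i)\cap U_i|\le d$ you cannot conclude that at most $d$ indices share the same cover edge $e$, since different $i$'s lie in different $U_i$'s and nothing bounds how many of the $U_i$ a single edge may meet. The bound you actually want comes from $v_{ij}\in S_j\subseteq f_j$, giving $|\phi_j^{-1}(e)|\le|e\cap f_j|\le d$ whenever $e\neq f_j$; but then if $f_j\in\mu_j$ it may absorb all $3k+d+1$ witnesses, $G_j$ can be empty, and your inequality $|G_j|\ge 3k+d+1-kd$ is unjustified.

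The real problem is the signature argument. Your signature records only abstract data: which block of $\phi_j$ each index lands in, and an ordinal position of $v_{ij}$ inside the set $\phi_j(i)\cap g_i$. A collision in this data forces nothing about concrete vertices of $H$: the covers $\mu_j$ and $\mu_{j'}$ are different edge sets sitting in different bags, and matching ordinal positions inside two unrelated $\le d$-element subsets of $g_i$ does not make the underlying vertices coincide, so no pair of edges with intersection $>d$ is produced. You flag this as the delicate step, but I do not see a mechanism by which it can be made to work. Separately, your signature count already includes $\binom{3k+d+1}{\le k}$ choices for $G_j$, which for fixed $d$ grows like $2^{\Theta(k)}$; this cannot match $\xi(k,d)$, which for fixed $d$ is only polynomial in $k$ (it is defined recursively precisely to drive a different argument). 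The paper's route is entirely different: it first reduces an arbitrary element of $\mathbf{S}_{k,d}(H)$ to a \emph{strong} $(3k+1,(3k+1)d+1)$-shyg by induction on a strength parameter (this is what the recursion defining $\xi$ is tailored to), and then, for strong shygs, it invokes the balanced-separator consequence of $\ghw\le k$ from \Cref{balancedsep} together with a direct count showing that separating any two of the $U_i$ already needs more than $k$ edges. No per-bag cover analysis or signature counting is involved.
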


In fact, the algorithm for the \textsc{ApproxSep} problem
either constructs a required separator or discovers that ${\bf S}_{k,d}(H) \neq \emptyset$. 
More precisely, the identification of an element of ${\bf S}_{k,d}(H)$ takes place within 
the procedure described in \Cref{setapprox} below.
The central part the algorithm for \textsc{ApproxSep}  
is to pick a vertex $t \in T$ (recall that $(T,{\bf B})$ is the input tree decomposition for $H$)
and then guess a set $W \subseteq {\bf B}(t)$ that shall be part of the separator 
being constructed. Technically, the guessing means a loop exploring a family
of subsets of ${\bf B}(t)$. This family must be of an \fpt size and the edge cover
of each element of the family must not be too large compared to $k_0$
This idea is formalised in the notion of \emph{gap cover approximator} formally defined below. 

\begin{restatable}{definition}{RESTATEgapcovdef}
  For hypergraph $H$ and set $U \subseteq V(H)$
  a \emph{$(\beta,\gamma)$-gap cover approximator} (for $U$) is a set $\mathbf{X} \subseteq 2^{V(H)}$
  such that
  \begin{romanenumerate}
  \item For each $X \in \mathbf{X}$, $\rho(X) \leq \beta$.
  \item For each $U' \subseteq U$ with $\rho(U') \leq \gamma$, there is a $X \in \mathbf{X}$ such that $U' \subseteq X$.
  \end{romanenumerate}
\end{restatable}

\begin{example}
  \label{exgca}
  \begin{figure}[t]
    \centering
    \begin{subfigure}[b]{0.35\textwidth}
      \centering
      \includegraphics[width=0.6\textwidth]{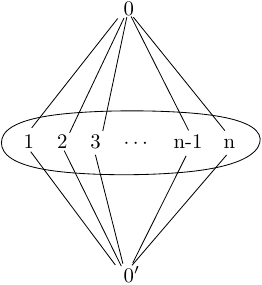}
      \caption{Hypergraph $H$}
      \label{fig:gcaex1}
    \end{subfigure}
    \hfill
    \begin{subfigure}[b]{0.6\textwidth}
      \centering
      \includegraphics[width=0.9\textwidth]{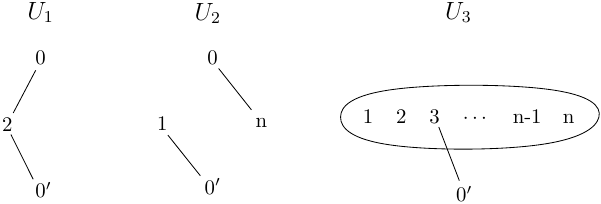}
      \caption{Examples of sets $U \subseteq V(H)$ with $\rho(U))=2$}
      \label{fig:gcaex2}
    \end{subfigure}
    \caption{Illustrations for \Cref{exgca}}
    \label{fig:gcaex}
  \end{figure}
  To illustrate the notion of gap cover approximators we consider the hypergraph $H$ in \Cref{fig:gcaex1} with edges $e_i=\{0,i\}$ and $e'_i=\{0', i\}$ for $i \in [n]$, and the edge $f = [n]$. \Cref{fig:gcaex2} illustrates some sets of vertices that can be covered by 2 edges of $H$. It is easy to see that the number of sets $U \subseteq V(H)$ with $\rho(U) \leq 2$ is roughly quadratic in $n$ (e.g., $\{0,0',i, j\}$ is such a set for every pair of $i,j \in [n]$). For our algorithmic goals such an exponential dependene on the cover number (here $2$) is problematic when we need to enumerate through all covers. Instead we can enumerate only over a gap cover approximator, for which we can potentially bound the size more tightly. In our example here, there is a single set with cover number $3$ that contains every $U$ with $\rho(U)\leq 2$, namely $V(H)$. That is $\mathbf{X}=\{V(H)\}$ is a $(3,2)$-gap cover approximator for $H$, which we could use in our algorithm to check certain properties only for $V(H)$, instead of checking for all $\Theta(n^2)$ many sets with cover number at most 2.
\end{example}

In order to produce the desired gap cover approximator, the 
algorithm solving the \textsc{ApproxSep} problem
runs a function $\gcaalg$. The function computes either
an gap cover approximator or an element of ${\bf S}_{k,d}(H)$  and, in the latter 
case,  rejects. In particular, when the algorithm rejects, we know (implicitly) that $\mathbf{S}_{k,d}(H) \neq \emptyset$, which in turn guarantees that $\ghw$ is greater than $k$ in this case and the rejection can be propagated to the top-level. A formal description of the behaviour of $\gcaalg$ is provided below.

%
%
\begin{restatable}{theorem}{RESTATEsetapprox}
    \label{setapprox}
There is an algorithm $\gcaalg(H,U,p,k,k_0)$ whose input is a $(2,d)$-hypergraph $H$, $U \subseteq V(H)$
with $\rho(U) \leq p$, and integers $k_0 \leq k \leq p$. 
The algorithm returns a
$((3k+d+1)k_0, k_0)$-gap cover approximator of $U$  
or \textbf{Reject}, in which case it is guaranteed that 
$\ghw(H)>k$. 
The algorithm is in \fpt when parameterised in $p$ and $d$. 
\end{restatable}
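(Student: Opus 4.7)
The plan is to design a recursive algorithm $\gcaalg(H, U, p, k, k_0)$ that decreases $k_0$ by one per level. In the base case $k_0 = 0$ only $U' = \emptyset$ satisfies $\rho(U') \leq 0$, so returning $\{\emptyset\}$ trivially meets both conditions. For $k_0 > 0$ I would first invoke \Cref{fptsetcov} once to compute an explicit edge cover $\mu = \{e_1, \dots, e_p\}$ of $U$; this is the only place where the input bound $\rho(U) \leq p$ is exploited.

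The core idea is: for every target $U' \subseteq U$ with $\rho(U') \leq k_0$ and every minimum cover $\mu'$ of $U'$, guess a ``first'' edge $f \in \mu'$, cover it by a bounded-weight set $Y_f$, then recurse on $(H, U \setminus Y_f, p, k, k_0 - 1)$ and union $Y_f$ into each candidate returned by the recursive call. If $\rho(Y_f) \leq 3k+d+1$ per guess and the number of distinct guesses is \fpt-bounded, then after $k_0$ levels every candidate has cover at most $(3k+d+1)k_0$, as required. I would split the guesses into two cases: (A) $f \in \mu$, which gives only $p$ choices and allows $Y_f := f$; and (B) $f \notin \mu$, in which case the $(2,d)$-property forces $|f \cap e_i| \leq d$ for every $i$, so $f$ is characterised on $U$ by its trace vector $(f \cap e_1, \dots, f \cap e_p)$.

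The main obstacle is Case (B), where the number of realised trace vectors can be linear in $|E(H)|$ and must be collapsed to an \fpt-sized family of ``fat'' candidates $Y_f$ of cover weight at most $3k+d+1$. My plan is to greedily pack pairwise incompatible subedges of the $e_i$'s that arise as such traces: as long as the packing has fewer than $3k+d+1$ disjoint pieces, its union has cover at most $3k+d+1$ and can serve as $Y_f$. Once the packing reaches $3k+d+1$ disjoint pairwise incompatible pieces we already have the $U_1, \dots, U_{3k+d+1}$ part of a prospective $(3k+d+1, \xi(k,d))$-shyg; I would then search the edges of $H$ that touch all of these pieces, and if at least $\xi(k,d)$ of them are themselves pairwise incompatible a complete shyg has been assembled, so \Cref{nogrid} lets the algorithm reject with $\ghw(H) > k$. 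If fewer than $\xi(k,d)$ such touching edges exist, a pigeonhole argument on trace vectors bounds the number of genuinely different ``extensions'' of $Y_f$ by a function of $k$ and $d$ alone.

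The hard part will be the second half of this extraction step: showing that the failure to bound the incompatible packings forces an entire subedge hypergrid rather than just many disjoint subedges. This is where the specific value $\xi(k,d) \in O((kd)^d)$ should enter, via a sunflower-style argument on the $d$-sized traces combined with pigeonhole over the edges of $H$ that touch many of the packed pieces. Modulo this combinatorial core, the complexity bookkeeping is routine: each of the $k_0$ recursion levels branches into at most $p + \phi(k,d)$ extensions for some computable function $\phi$, yielding $(p + \phi(k,d))^{k_0}$ candidates in $\mathbf{X}$, each of cover at most $(3k+d+1)k_0$, while any rejection discovered inside a recursive call propagates upward unchanged because the recursion keeps $H$ fixed and only shrinks $U$.
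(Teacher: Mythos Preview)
Your recursive framework on $k_0$ is a reasonable starting point and genuinely different from the paper's direct enumeration, but Case~(B) contains a real gap that you yourself flag and do not close.

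The issue is what $Y_f$ actually is. You need, for every edge $f \notin \mu$, some set $Y_f$ with $\rho(Y_f) \leq 3k+d+1$ and $f \cap U \subseteq Y_f$, drawn from an \fpt-sized family. Your greedy packing produces a collection of pairwise disjoint, pairwise incompatible subedges, but the union of such a packing has no reason to contain $f \cap U$ for an \emph{arbitrary} $f$ in Case~(B); different $f$'s leave different traces on the $e_i$, and one small-cover set cannot absorb them all. Conversely, once the packing reaches size $3k+d+1$ you have candidate $U_1,\dots,U_{3k+d+1}$ for a shyg, but you then need $\xi(k,d)$ pairwise incompatible edges $S_j$ each touching \emph{all} the $U_i$. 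Nothing in the packing construction guarantees that the remaining Case-(B) edges touch every $U_i$, so the dichotomy ``either bounded family or shyg'' does not follow from what you wrote. The appeal to a ``sunflower-style argument on the $d$-sized traces'' is not a proof: the number of possible traces $f \cap e_i$ depends on $|e_i|$, which is unbounded, so pigeonhole on trace vectors alone cannot give an \fpt bound.

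What the paper does instead is to fix a canonical coordinate system before looking at any target $U'$. The edges of $\mu$ split into \emph{big} edges (those with $|e \cap U| > pd$; there are at most $p$ of them and in fact every weight-$p$ cover of $U$ must contain all of them) and the rest. The unique parts $\bep^*(U)$ of the big edges are pairwise disjoint subedges of size $>d$, and every small edge $f$ is classified by its \emph{span} $sp(f) \subseteq \bep^*(U)$, the set of these parts it touches. If $|sp(f)| \leq 3k+d$ then $f \cap \bigcup \bep^*(U)$ is covered by those $\leq 3k+d$ big edges, which is exactly the $Y_f$ you were looking for. If some $E' \subseteq \bep^*(U)$ with $|E'| > 3k+d$ has more than $\xi(k,d)$ small edges of span $\supseteq E'$, then $E'$ together with those small edges \emph{is} a $(3k+d+1,\xi(k,d))$-shyg and one rejects via \Cref{nogrid}; otherwise the long-span small edges are \fpt-few and can be enumerated directly. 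The remaining vertices $U \setminus \bigcup \bep^*(U)$ form a set of size $O(p^2 d)$, so all relevant subsets there are enumerable. Note that your $\mu$ already contains every big edge, so the raw material for this argument is present in your setup; the missing step is to use $\bep^*(U)$ as the fixed grid against which small edges are measured, rather than trying to grow a packing on the fly.
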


\section{Algorithmic Details} \label{sec:pseudo}
In this section we sketch proofs of Theorems \ref{thm:compress}
and \ref{fptsep}. In particular, we provide pseudocodes of the corresponding algorithms
and intuitive justification of their correctness and \fpt membership. 
Algorithm \ref{alg:compress} uses as a subroutine the algorithm
$AppSep$, which is discussed afterwards in \Cref{sec:appsepmain}.

\subsection{An \fpt Algorithm for the Compression Step (\Cref{thm:compress})}

To prove Theorem \ref{thm:compress}
we define Algorithm \ref{alg:compress}, prove that it is correct, and that the algorithm
works \fpt time.
In general principle the algorithm follows similar ideas to previous algorithms for checking \ghw (e.g.,~\cite{DBLP:journals/jea/FischlGLP21}) in that, at each stage, we separate the problem into subproblems for each connected component, and recurse. The set $W$ provides an interface to how the subproblem connects to the rest of the decomposition. By guaranteeing that $W$ is covered in the root of the decomposition for the subproblem (line 8), we guarantee that the decompositions for all the subproblems can be assembled into a decomposition for the parent call in lines 8 to 22. (see also~\cite{DBLP:conf/pods/GottlobLOP22} where a similar idea is formalised in terms of \emph{extended hypergraphs} in the context of checking plain hypertree width).

We move on to giving an overview of the argument for the runtime and correctness of the algorithm.
For the overall time complexity of the algorithm, we first observe that
a single recursive application of the algorithm runs in \fpt time.
The search for $E_W$ in line 1 is \fpt by using \Cref{fptsetcov} to find a cover for $W$ (the procedure from the proposition is constructive). If the produced cover is smaller than $3 \alpha(k,d)+1$ we can incrementally increase the size of the cover by searching for covers for $W' \supseteq W$ created by adding a vertex outside of the cover to $W$.
For line 2 we can naively iterate through all possible partitions of $E_W$ intro three sets and call $\mathit{AppSep}$, which itself is in \fpt by \Cref{fptsep}. For line 7 we note again that testing $\rho$ is \fpt by \Cref{fptsetcov}. From line 8 onward, except for the recursion, the algorithm performs straightforward manipulations of sets and hypergraphs that are of no deeper interest to our time bound.

\begin{algorithm}[t]
  \DontPrintSemicolon
  \SetKwInOut{Input}{Input}\SetKwInOut{Output}{Output}
  \Input{$(2,d)$-hypergraph $H$, a positive integer $k$, a TD $(T,{\bf B})$ of $H$
  with \ghw $4\alpha(k,d)+1$, $W \subseteq V(H)$ wih $\rho(W) \leq 3\alpha(k,d)$}
  
  $E_W \gets $ any $\rho$-stable subset of $E(H)$ covering $W$ of cardinality $3\alpha(k,d)+1$\;
  Find a weak partition $E_0,E_1,E_2$ of $E_W$ s.t. $\rho(E_0)\leq k$,  $\rho(E_1),\rho(E_2) \leq 2\alpha(k,d)$, and $\mathit{AppSep}(H, \bigcup E_1, \bigcup E_2, (T,\B), k, k, p, V(T))$ does not return \textbf{Reject}\;
  \eIf{no such weak partition exists}{
    \KwRet{\textbf{Reject}}\;
  }{
    $X \gets \mathit{AppSep}(H, \bigcup E_1, \bigcup E_2, (T, \B),k, k, p, V(T))$\;
  }
  $U_1,\dots, U_q \gets \{C \in \mathit{CComps}(H\setminus X) \mid \rho(C \cup X) > 4\alpha(k,d)\}$\;
  Let $T^*$ be a tree with a new node $r$ and $B^*(r)=W \cup X$\;
  \For{$i \in [q]$}{
    $H_i \gets H[U_i \cup X]$\;
    $W_i \gets (\bigcup E_W \cap U_i) \cup X$\;
    ${\bf B}_i \gets \{t \mapsto {\bf B}(t) \cap (U_i \cup X) \mid t \in V(T) \}$\;
    $O_i \gets \textit{Compress}(H_i, k, (T,{\bf B}_i), W_i)$\; \label{line:compress.rec}
    \eIf{$O_i$ is \textbf{Reject}}{
      \KwRet{\textbf{Reject}}\;
    }{
      $(T'_i, {\bf B'}_i) \gets O_i$\;
      $t_i \gets $ a node $u$ of $T'_i$ s.t. $W_i \subseteq {\bf B'}_i(u)$\;
      Add $T'_i$ to $T^*$ by making $t_i$ a neighbour of $r$ and let ${\bf B^*}(u) = {\bf B'}_i(u)$ for all $u \in T'_i$.
    }
  }
  \For{$U \in \mathit{CComps}(H \setminus X)$ where $\rho(U \cup X) \leq 4 \alpha(k,d)$}{
    Add new node $u$ as a neighbour of $r$ to $T^*$.\;
    Set ${\bf B^*}(u) = U \cup X$.\;
  }
  \KwRet{$(T^*, {\bf B^*})$}
  \caption{The algorithm $\mathit{Compress}(H, k, (T,\B), W)$.}
  \label{alg:compress}
\end{algorithm}

Next, we observe that the number
of recursive applications is, in fact, polynomial in $H$. 
We naturally organise recursive applications into a successors (recursion) tree and upper 
bound the number of nodes of the tree by the product of the height of the tree 
and the number of leaves. We observe that the height of the tree is at most $|V(H)|$. 
For this we prove two auxiliary
statements. The first is that for $X$, as computed in line 6, $H \setminus X$ 
has at least two connected components. The second, immediately following from the first
one is that for each $H_i$, created in line 10, $|V(H_i)|<|V(H)|$. 
Thus it follows that the number of vertices of the input hypergraph decreases
as we go down the recursion tree thus implying the upper bound on the height
of the tree. 

Additionally, we prove that the number of leaves is no larger than $\rho(H)^2$. 
The main part of this proof is an induction for the case where the number
of sets $U_1, \dots, U_q$ obtained at line 7 is at least $2$.
In particular, we notice that since 
$\rho(U_i) \geq 3\rho(X)$ for each $i \in[q]$ and since 
$\rho(\bigcup_{i=1}^q U_i)=\sum_{i=1}^q \rho(U_i)$, it holds that 
\[
\rho(U_i \cup X)^2 \leq (\rho(U_i)+\rho(X))^2 \leq (\sum_{i=1}^q \rho(U_i))^2=
 \rho(\bigcup_{i=1}^q U_i)^2 \leq \rho(H)^2.
 \]

To prove correctness of the \Reject output, 
we observe that return of \Reject by the whole algorithm is triggered 
by return of \Reject on line 4, failure to find an appropriate weak partition, or by rejection in one of its recursive applications. 
By \Cref{balancedsep}
and the $\rho$-stability of $E'$,
the $\mathbf{Reject}$ on line 4 implies that either $H$, or one of its induced
subgraphs have \ghw greater than $k$. In the latter case, of course also $\ghw(H) > k$.

Finally, the two main aspects of correctness of the non-rejection 
output are the upper bound on the \ghw of the resulting
tree decomposition and that 
the properties of the tree decomposition are 
not lost by the 'gluing' procedure
as specified in lines 8-22 of the algorithm. 
The requirement that $E_W$ must cover $W$ is essential for ensuring that the properties
of the tree decomposition are not destroyed by the gluing. Intuitively, the parameter $W_i$ in the recursion on Line~\ref{line:compress.rec} represents the connection of the component $H_i$ with the rest of the decomposition. 

 \subsection{Finding Approximate Separators in \fpt (\Cref{fptsep})}
 \label{sec:appsepmain}
The proof of Theorem \ref{fptsep} requires significant extension of notation.
First, for a tree decomposition $(T,{\bf B})$ of a hypergraph and $X \subseteq V(T)$,
we denote by $ghw(T,{\bf B},X)$ 
the maximum of $\rho({\bf B}(t))$ among $t \in X$. 
We are looking for a separator subject to several constraints.
Repeating these constraints every time we refer to a separator
is somewhat distracting and we therefore define the set
of separators that we need to consider for this overview. 
Define $\sep(H,A,B,k_0,(T,{\bf B}),X)$
as the set of all $(A,B)$-separators $W$ of $H$ with $\rho(W) \leq k_0$
and $W \subseteq {\bf B}(X)$ where $(T,{\bf B})$ is a tree decomposition
of $H$ and $X \subseteq V(T)$.  

The following theorem is a generalisation of Theorem \ref{fptsep}
\begin{restatable}{theorem}{RESTATEgenapp}
    \label{genfptsep}
There is an algorithm $AppSep(H,A,B,(T,{\bf B}),k_0,k,p,X)$
whose input is a $(2,d)$-hypergraph $H$, $A,B \subseteq V(H)$,
three positive integers $k_0 \leq k \leq p$, a tree decomposition
$(T,{\bf B})$ of $H$ and $X \subseteq V(T)$ such that all the elements
of $V(T) \setminus X$ are leaves and $ghw(T,{\bf B},X) \leq p$. 
The algorithm either returns an element 
of $sep(H,A,B,(3k+d+1)(2k-1)k_0,(T,{\bf B}),X)$ or \textbf{Reject}.
In the latter case, it is guaranteed that either 
$ghw(H)>k$ or $sep(H,A,B,k_0,(T,{\bf B}),X)=\emptyset$
\end{restatable}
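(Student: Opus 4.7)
The plan is to design a recursive algorithm $\mathit{AppSep}(H, A, B, k_0, k, p, (T,\B), X)$ that descends through the tree decomposition, at each step using $\gcaalg$ (\Cref{setapprox}) to produce a bounded family of candidates for the portion of the separator contained in a single bag. The core observation making this viable is that, by the connectedness condition of tree decompositions, any $(A,B)$-separator $W$ with $W\subseteq\B(X)$ decomposes along $T$: for every $t\in X$ the slice $W\cap\B(t)$ has $\rho(W\cap\B(t))\le k_0$, so it must be contained in some element of the gap cover approximator of $\B(t)$.

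\textbf{Skeleton of the algorithm.} First handle the trivial case in which $A$ and $B$ already lie in different connected components of $H$, returning $\emptyset$. Otherwise pick a node $t\in X$ (e.g.\ a centroid of $T[X]$, which will help the recursion be shallow) and invoke $\gcaalg(H,\B(t),p,k,k_0)$. If $\gcaalg$ rejects, $\mathit{AppSep}$ rejects: by \Cref{setapprox} this certifies $\ghw(H)>k$, which is one of the two acceptable reasons to reject. Otherwise we obtain a $(3k+d+1)k_0,k_0$-gap cover approximator $\mathbf{X}_t$. For every $W_t\in\mathbf{X}_t$, interpret $W_t$ as a guess for $W\cap\B(t)$, compute the connected components of $H[V(H)\setminus W_t]$, and classify each component as $A$-side, $B$-side, or mixed according to its intersection with $A$ and $B$. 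If no component is mixed, $W_t$ is already an $(A,B)$-separator. If mixed components exist, recurse on the subtrees of $T[X]-t$: in each subtree $X'$ the recursion is called with the same $k_0,k,p$, with $A$ and $B$ augmented by the already-classified components that enter $\B(X')$, and with the tree decomposition restricted to $X'$. Finally return the union of $W_t$ with the recursively returned separators, choosing any branch over $\mathbf{X}_t$ that succeeds; reject only if every branch rejects.

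\textbf{Correctness and the weight bound.} Correctness in the output case is straightforward: by construction the returned set separates $A$ from $B$ and is contained in $\B(X)$. For the reject case, a backwards induction shows that whenever some genuine separator $W\in\sep(H,A,B,k_0,(T,\B),X)$ exists, then for the chosen centroid $t$ the slice $W\cap\B(t)$ lies inside some $W_t\in\mathbf{X}_t$, and by induction each recursive subcall on a subtree of $T[X]-t$ returns a valid separator extending $W_t$; hence the algorithm cannot reject unless $\ghw(H)>k$ or no such $W$ exists. The weight bound of $(3k+d+1)(2k-1)k_0$ comes from an inductive accounting on any single successful root-to-leaf branch of the recursion: each call attaches at most $(3k+d+1)k_0$ from its local gap cover approximator, and a structural argument bounds the number of productive recursive calls on any such branch by $2k-1$. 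The precise source of the $2k-1$ factor, and the main obstacle of the proof, is this combinatorial bound: one must argue that when the centroid is chosen well, subsequent recursions either terminate (because their local $W_t$ already separates all mixed components) or strictly reduce a monovariant associated with the target cover budget, so the chain of non-terminating recursions along any branch has length at most $2k-1$.

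\textbf{FPT complexity.} By \Cref{setapprox} each $\gcaalg$-call runs in time $f(p,d)\cdot\mathit{poly}(|H|)$ and returns a family $\mathbf{X}_t$ of size bounded by some $g(p,d)$. The branching factor of the recursion is therefore FPT in $p+d$, and the $(2k-1)$ bound on productive depth means that the overall number of non-trivial recursive nodes is bounded by an FPT function of $p,d$. The non-productive base cases (those that only check disconnection of $A$ and $B$) are polynomial-time. Putting these together yields a total runtime of the form $h(p,d)\cdot\mathit{poly}(|H|)$, which establishes the claimed FPT bound. The extra generalisation in \Cref{genfptsep} over \Cref{fptsep} (namely that $X$ need not be all of $V(T)$, with leaves of $V(T)\setminus X$ allowed) is handled transparently: these leaves do not contribute bags that are ever consulted, so neither $\gcaalg$ nor the recursion is affected.
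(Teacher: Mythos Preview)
Your proposal contains a genuine gap precisely where you flag the ``main obstacle'': the $2k-1$ factor. In your recursion every subcall receives the \emph{same} budget $k_0$, so there is no monovariant that decreases. The hoped-for argument (``subsequent recursions either terminate \ldots\ or strictly reduce a monovariant associated with the target cover budget'') is never supplied, and with your recursion as written it cannot be: nothing prevents a successful branch from visiting $\Theta(\log|X|)$ centroids, each contributing a set of cover number up to $(3k+d+1)k_0$, which blows past the claimed bound. Relatedly, you recurse on \emph{all} subtrees of $T[X]-t$ rather than on two balanced halves, so even if a depth bound held, the output separator is assembled from potentially many pieces, not $2k_0-1$.

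The paper's mechanism for the $2k_0-1$ bound is to make the budget split explicit. After choosing $t$ and a two-way partition $Y_1,Y_2$ of $N_T(t)$, the algorithm first tries to find the whole separator inside $\B(X_{t,Y_1})$ or inside $\B(X_{t,Y_2})$ with unchanged $k_0$ (this handles separators lying in one half and is needed for reject-correctness). Only if both fail does it guess $W\in\mathbf{X}_t$ together with integers $k_1,k_2>0$, $k_1+k_2\le k_0$, and a weak partition $C_1,C_2$ of the connected components of $H[\B(t)\setminus W]$, then recurse on the two halves with budgets $k_1,k_2$. The resulting ``forming tree'' is binary with positive budget splits, so by a one-line induction it has at most $2k_0-1$ nodes; each node contributes $\rho\le(3k+d+1)k_0$, giving the product bound. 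The same budget-splitting is what drives the FPT runtime recurrence $T(k_0,m)\le 2T(k_0,\tfrac34 m)+g(p,d)\,T(k_0-1,\tfrac34 m)$; your sketch, which neither reduces $k_0$ nor restricts to two halves, does not yield such a recurrence. The second ingredient you are missing is the role of the component partition $C_1,C_2$ of $\B(t)\setminus W$ (not of $H\setminus W_t$): it is what lets one prove that the union of two subtree separators with $W$ is a global $(A,B)$-separator, and conversely (\Cref{rejectjust}) that a genuine small separator induces consistent choices of $W,k_1,k_2,C_1,C_2$ so that at least one branch of the loop succeeds.
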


Clearly, \Cref{genfptsep} implies \Cref{fptsep}
by setting $X=V(T)$. The reason we need this extra parameter is
that in the recursive applications of $\mathit{AppSep}$ some bags may
have the edge cover number larger than $p$, so we keep track of the set
of nodes whose bags are 'small'. 

To present the pseudocode, we define a specific choice of a subtree of the given tree.
Let $T$ be a tree, $t \in V(T)$, $Y \subset V(T)$.
Then $T_{t,Y}$ is the subtree of $T$ which is the
union of all paths starting from $t$ whose second
vertex belongs to $Y$.
The notion of $T_{t,Y}$ naturally extends to subsets
of $V(T)$ and to tree decompositions where $T$
is the underlying tree. In particular, for $X \subseteq V(T)$,
we denote $X \cap V(T_{t,Y})$ by $X_{t,Y}$. 
Next, if $(T,{\bf B})$ is a tree decomposition of $H$
then by denote by ${\bf B}_{t,Y}$ the restriction to ${\bf B}$
to $V(T_{t,Y})$ and by $H_{t,Y}$ the graph $H[{\bf B}(V(T_{t,Y}))]$. 

We also introduce a variant $T^+_{t,Y}$ of $T_{t,Y}$ which will be needed
for recursive applications of $AppSep$.
The tree $T^+_{t,Y}$ is obtained from $T_{t,Y}$ by introducing a new 
node $r$ and making it adjacent to $t$. 
The function ${\bf B}^+_{t,Y}$ is obtained from ${\bf B}_{t,Y}$
by setting ${\bf B}^+_{t,Y}(r)=\bigcup_{t' \in V(T) \setminus V(T_{t,Y})} {\bf B}(t')$. 
One final notational convention concerns 
adjusting a tree decomposition $(T,{\bf B})$ of $H$
in case a set $W \subseteq V(H)$ is removed from $H$.
In this case we set ${\bf B}^{-W}(t')={\bf B}(t') \setminus W$
for each $t' \in V(T)$. 

The following statement is important for verifying that these recursive applications are well-formed.

\begin{restatable}{theorem}{RESTATEvalidtw}
    \label{validtw}
Let $H$ be a hypergraph, $(T,{\bf B})$ a TD of $H$,
$t \in V(T)$, $Y \subseteq N_T(t)$. 
Then $(T_{t,Y},{\bf B}_{t,Y})$ is a TD of $H_{t,Y}$,
$(T^+_{t,Y},{\bf B}^+_{t,Y})$ is a TD of $H$ and
$(T,{\bf B}^{-W})$ is a TD of $H \setminus W$. 
Moreover, let $X \subseteq V(T)$ such that 
all the vertices of $V(T) \setminus X$ are leaves of $T$. 
Then all the vertices of $V(T_{t,Y}) \setminus X_{t,Y}$
are leaves of $T_{t,Y}$. 
\end{restatable}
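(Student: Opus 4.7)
The plan is to verify each of the four claims directly from the definitions of a tree decomposition. The shared observation that drives all three TD constructions is that $t \in V(T_{t,Y})$ together with $Y \subseteq N_T(t)$ implies that $t$ separates $V(T_{t,Y}) \setminus \{t\}$ from $V(T) \setminus V(T_{t,Y})$ in $T$. Combined with the connectedness condition of $(T,{\bf B})$, this forces any vertex that appears both in some bag inside $V(T_{t,Y})$ and in some bag outside to lie in ${\bf B}(t)$. This one lemma is what makes containment and connectedness work after the tree is chopped up.

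For the first claim, $(T_{t,Y},{\bf B}_{t,Y})$ is a TD of $H_{t,Y}$: connectedness will follow from the standard fact that the intersection of two subtrees of a tree is again a (possibly empty) subtree, applied to the original subtree $\{t' \in V(T) : v \in {\bf B}(t')\}$ and the subtree $V(T_{t,Y})$. The containment property is the main subtlety. Given an edge $e' = e \cap {\bf B}(V(T_{t,Y}))$ of $H_{t,Y}$, I take the original witness node $u$ with $e \subseteq {\bf B}(u)$. If $u \in V(T_{t,Y})$, containment is immediate. Otherwise, I apply the separation observation to each $v \in e'$: since $v$ lies in some bag of $T_{t,Y}$ and in ${\bf B}(u)$ with $u$ outside, connectedness forces $v \in {\bf B}(t)$, giving $e' \subseteq {\bf B}(t)$ with $t \in V(T_{t,Y})$.

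For the second claim, $(T^+_{t,Y},{\bf B}^+_{t,Y})$ is a TD of $H$: containment is easy since any edge $e$ of $H$ has its old witness $u$ either inside $V(T_{t,Y})$ (handled by the preserved bag) or outside (in which case $e \subseteq {\bf B}(u) \subseteq {\bf B}^+_{t,Y}(r)$ by definition of the new bag at $r$). For connectedness, I track, for each vertex $v$, the set of $T^+_{t,Y}$-nodes containing $v$. If $v \notin {\bf B}^+_{t,Y}(r)$ then the set is unchanged from the restriction to $V(T_{t,Y})$ of the original subtree, which is a subtree. If $v \in {\bf B}^+_{t,Y}(r)$ and no old $V(T_{t,Y})$-node contains $v$, the set is the singleton $\{r\}$. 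Otherwise $v$ lies in bags on both sides, so by the separation observation $v \in {\bf B}(t)$, and since $r$ is adjacent to $t$ in $T^+_{t,Y}$, the augmented set remains connected. Claim three, that $(T,{\bf B}^{-W})$ is a TD of $H \setminus W$, reduces to routine bookkeeping: containment is preserved because removing $W$ commutes with intersecting a bag with an edge, and connectedness is preserved because for any $v \notin W$ the $v$-containing nodes are identical in ${\bf B}$ and ${\bf B}^{-W}$.

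Finally, for the leaf claim, I take any $u \in V(T_{t,Y}) \setminus X_{t,Y} = V(T_{t,Y}) \setminus X$; by hypothesis $u$ is a leaf of $T$, so its degree in the subgraph $T_{t,Y}$ is at most one, making it a leaf of $T_{t,Y}$ (or an isolated vertex, which happens only when $T_{t,Y}$ is a single node and $u$ is still trivially a leaf). I expect the only genuinely non-routine step to be the second subcase of containment in the first claim, and the analogous connectedness subcase in the second claim; both are handled uniformly by the separation-through-$t$ observation stated at the outset.
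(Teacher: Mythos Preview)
Your proposal is correct and follows essentially the same approach as the paper. Both proofs hinge on the observation that $t$ separates $V(T_{t,Y})$ from its complement in $T$, and both handle the four claims in the same way: the paper's argument for containment in the first statement is your second subcase phrased by contradiction, and its connectedness argument for $(T^+_{t,Y},{\bf B}^+_{t,Y})$ is your case split on whether $v$ lies in ${\bf B}^+_{t,Y}(r)$; the third and fourth statements are treated as routine in both.
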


The algorithm (roughly speaking) chooses 
a vertex $t \in V(T)$, partitions $N(t)$ into $Y_1$
and $Y_2$ and applies recursively to 
$H_{t,Y_1}$ and $H_{t,Y_2}$. However, the triple
$(t,Y_1,Y_2)$ is chosen not arbitrarily but in
a way that both $X_{t,Y_1}$ and $X_{t,Y_2}$ are
significantly smaller than $X$. The possibility of such a 
choice is guaranteed by the following theorem. 

\begin{restatable}{theorem}{RESTATEbalancedvert}
    \label{balancedvert}
Let $T$ be a tree, $X \subseteq  V(T)$ such that all
$|X| \geq 3$ and all the vertices of $V(T) \setminus X$
are leaves. 
Then there is $t \in X$ with $deg_{T[X]}(t) \geq 2$
and a partition $Y_1,Y_2$ on $N_T(t)$ so that
for each $i \in \{1,2\}$, $|X_{t,Y_i}| \leq 3/4|X|$. 
Moreover, the triple $(t,Y_1,Y_2)$ can be computed 
in a polynomial time. 
\end{restatable}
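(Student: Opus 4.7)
My plan is to choose $t$ as a centroid of $T$ with respect to the weight function that assigns weight $1$ to vertices in $X$ and $0$ elsewhere, and then partition the subtrees of $T - t$ into two roughly balanced halves. The centroid itself is standard: root $T$ at an arbitrary vertex, let $w(v)$ denote the number of $X$-vertices in the subtree rooted at $v$, and take the deepest $v$ with $w(v) > |X|/2$. One checks immediately that every component of $T - v$ then contains at most $|X|/2$ vertices of $X$, and the search runs in polynomial time.

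Next I would use the hypothesis that $V(T) \setminus X$ consists only of leaves to rule out degenerate centroids. If $t \notin X$, then $t$ is a leaf and $T - t$ is a single component containing all $|X| \geq 3$ vertices of $X$, contradicting the centroid bound $|X|/2$. Similarly, if $\deg_{T[X]}(t) = 0$ then every neighbour of $t$ is a non-$X$ leaf, forcing $X = \{t\}$; and if $\deg_{T[X]}(t) = 1$ with unique $X$-neighbour $u$, then the component of $T - t$ through $u$ contains all of $X \setminus \{t\}$, giving $|X| - 1 \leq |X|/2$ and hence $|X| \leq 2$. In every case the assumption $|X| \geq 3$ forces $t \in X$ and $\deg_{T[X]}(t) \geq 2$.

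For the balanced partition, let $C_v$ be the component of $T - t$ containing $v$ and set $w_v = |X \cap C_v|$. Then $\sum_{v \in N_T(t)} w_v = |X| - 1$, each $w_v \leq |X|/2$ by the centroid property, and neighbours outside $X$ contribute $w_v = 0$ and may be placed arbitrarily. I would process the $X$-neighbours in decreasing order of $w_v$, adding them to $Y_1$ until $\sum_{v \in Y_1} w_v \geq (|X|-1)/3$, and putting the remainder into $Y_2$. If the first (largest) element already reaches the threshold, $Y_1$ is a single element of weight at most $|X|/2$; otherwise all $w_v$ are strictly below $(|X|-1)/3$, so adding one more never pushes the sum above $2(|X|-1)/3$. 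In both cases $\sum_{v \in Y_i} w_v$ lies in $[(|X|-1)/3,\, 2(|X|-1)/3]$, whence $|X_{t,Y_i}| = 1 + \sum_{v \in Y_i} w_v \leq (2|X|+1)/3 \leq 3|X|/4$ for $|X| \geq 4$; the boundary case $|X| = 3$ is handled by direct inspection, since the centroid constraints force exactly two $X$-neighbours of weight $1$ each. The main obstacle is attaining the sharp $3/4$ factor rather than the weaker $5/6$ that a naive greedy would yield, which is why we need both decreasing-order processing and the centroid bound $w_v \leq |X|/2$ to separate the ``one large element saturates the threshold'' case from the ``all small elements'' case; polynomial-time computability is immediate, as every step consists of a constant number of traversals of $T$.
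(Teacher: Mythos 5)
Your proof is correct, and the route differs from the paper's in an interesting way. The paper does not argue directly on $T$ with weights; instead it observes that $T[X]$ is itself a tree (since $V(T)\setminus X$ are all leaves, removing them cannot disconnect $T$), invokes an auxiliary lemma about unweighted trees (a non-leaf vertex $x$ of a tree on $\geq 3$ vertices admits a partition $Y_1,Y_2$ of $N(x)$ with $|V(T_{x,Y_i})| \leq \tfrac{3}{4}|V(T)|$, established by an iterative ``descend towards the heavy subtree'' procedure using a $2/3$ stopping threshold followed by a greedy split), and then translates back via the identity $X_{t,Y} = V(T[X]_{t,W})$. You instead take a weighted centroid of $T$ directly (weight $1$ on $X$, $0$ elsewhere), show it lands in $X$ with $\deg_{T[X]}(t)\geq 2$ using the leaf hypothesis, and then greedily split the component weights. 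Both are centroid arguments at heart; yours is more self-contained and avoids the translation between $T$ and $T[X]$, while the paper's approach factors the unweighted tree-splitting into a reusable standalone lemma. Your casework is sound: the centroid bound $w_v \leq |X|/2$ absorbs the ``one heavy element'' case, the all-small case never overshoots $2(|X|-1)/3$, the arithmetic $1 + 2(|X|-1)/3 \leq \tfrac34|X|$ holds for $|X|\geq 4$, and the $|X|=3$ case is correctly disposed of by noting the centroid forces exactly two $X$-neighbours of weight $1$.
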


Theorem \ref{balancedvert} can be seen as a variant of a classical
statement that a rooted tree has a descendant rooting a subtree
with the number of leaves between one third to two third of the total
number of leaves. The proof is based on a similar argument of picking
a root and gradually descending towards a 'large' subtree until the desired
triple is found. 

For the validity of $AppSep$, it is important to note that
each $X_{t,Y_i}$ preserves for $T_{t,Y_i}$ the invariant that
all the $V(T_{t,Y_i}) \setminus X_{t,Y_i}$ are leaves of $T_{t,Y_i}$.  
We are almost ready to consider the pseudocode, it only remains 
to identify auxiliary functions. 
In particular $\mathit{GetBalVert}(T,X)$ is a polynomial time algorithm
as specified in Theorem \ref{balancedvert}. 
Also recall that $\gcaalg$ is an \fpt algorithm constructing 
a $((3k+d+1)k_0, k_0)$-gap cover approximator in the way specified by Theorem \ref{setapprox}.

The pseudocode of $AppSep$ is presented in 
Algorithm \ref{alg:main}. 
For the sake of readability, we make two notational conventions.
First, since parameters $p$ and $k$ do not change when passed through
recursive calls, we consider them fixed and do not mention them as part of the input when recursing. 
Second, we move consideration of the case with $|X| \leq 2$ into a separate
function $\mathit{SmallSep}$ provided in Algorithm~\ref{alg:smallsep} and use it
as an auxiliary function in Algorithm~\ref{alg:sepgen}. Here the idea is straightforward, we naively test for all gap cover approximators whether they are $(A,B)$-separators.

\begin{algorithm}[t]
    \DontPrintSemicolon
    \SetKwInOut{Input}{Input}\SetKwInOut{Output}{Output}
    \Input{$(2,d)$-hypergraph $H$, $A,B \subseteq V(H)$, positive integers $k_0 \leq k \leq p$,
    a TD $(T,{\bf B})$ of $H$, $X \subseteq V(T)$, $|X| \leq 2$ all the elements of $V(T) \setminus X$ 
    are leaves}
     \eIf{$|X|=1$}{
     $\{t\} \gets X$\;
     $Sets \gets \mathit{\gcaalg}(H,{\bf B}(t),p,k,k_0)$\;
  }{
    $\{t_1,t_2\} \gets X$\;
     $Sets \gets \mathit{\gcaalg}(H,{\bf B}(t_1) \cup {\bf B}(t_2),p,k,k_0)$\;
  }
  \If{$Sets$ is \textbf{Reject}}{
     \KwRet{\textbf{Reject}}\;
  }
  \For{$W \in Sets$}{
      \If{$W$ is an $(A,B)$-separator}{
        \KwRet{$U$}\;
      }
    }
  \KwRet{\textbf{Reject}}\;
   \caption{The algorithm $\mathit{SmallSep}(H,A, B,p,k,k_0,(T,{\bf B}),X)$}
    \label{alg:smallsep}
  \end{algorithm}

\begin{algorithm}[t]
    \DontPrintSemicolon
    \SetKwInOut{Input}{Input}\SetKwInOut{Output}{Output}
    \Input{$(2,d)$-hypergraph $H$, $A,B \subseteq V(H)$, 
    a TD $(T,{ B})$ of $H$, positive integers $0 \leq k_0 \leq k \leq p$, $X \subseteq V(T)$ s.t. all the elements of $V(T) \setminus X$ 
    are leaves}
    \If{$|X| \leq 2$}{
        \KwRet{$\mathit{SmallSep}(H,A, B, p,k, k_0,(T,{\bf B}),X)$}\;
      }
    $(t,Y_1,Y_2) \gets \mathit{GetBalVert}(T,X)$\;
    $Out \gets \mathit{AppSep}(H,A,B,(T_{t,Y_1}^{+},  {\bf B}_{t,Y_1}^{+}), k_0, X_{t,Y_1})$\;
    \If{$Out$ is not \textbf{Reject}}{
        \KwRet{$Out$}\;
      }
    $Out \gets \mathit{AppSep}(H,A,B,(T_{t,Y_2}^{+}, {\bf B}_{t,Y_2}^{+}), k_0, X_{t,Y_2})$\;
    \If{$Out$ is not \textbf{Reject}}{
        \KwRet{$Out$}\;
      }
    $Sets \gets \mathit{\gcaalg}(H,{\bf B}(t),p,k,k_0)$\;
     \If{$Sets$ is \textbf{Reject}}{
          \KwRet{\textbf{Reject}}\;
        }

     \For{each $W \in Sets$, each $k_1,k_2>0$ s.t. $k_1+k_2 \leq k_0$, and
             each weak partition $C_1,C_2$ of ${\bf B}(t) \setminus W$
               into unions of connected components of $H[{\bf B}(t) \setminus W]$}{

       $Out_1 \gets \mathit{AppSep}(H_{t,Y_1} \setminus W,
      (A_{t,Y_1} \cup C_1) \setminus W,(B_{t,Y_1} \cup C_1) \setminus W,
      (T_{t,Y_1},{\bf B}_{t,Y_1}^{-W}), k_1, 
      X_{t,Y_1})$\;

      $Out_2 \gets \mathit{AppSep}(H_{t,Y_2} \setminus W,
      (A_{t,Y_2} \cup C_2) \setminus W,(B_{t,Y_2} \cup C_2) \setminus W,
      (T_{t,Y_2},{\bf B}_{t,Y_2}^{-W}), k_2,
      X_{t,Y_2})$\;
      \If{neither of $Out_1$, $Out_2$ is \textbf{Reject}}{
        \KwRet{$Out_1 \cup Out_2 \cup W$}\;
      }
    }
   \KwRet{\textbf{Reject}}\;

    \caption{The algorithm $\mathit{AppSep}(H,A,B,(T,{\bf B}),k_0,k,p, X)$.}
    \label{alg:sepgen}
  \end{algorithm}

The general case in $\mathit{AppSep}$ is considerably more complex. Intuitively, the algorithm searches for separators in small local parts of the tree decomposition by searching through the output of \gcaalg called on the component of that local part of the decomposition (Lines 1 and 2 of the algorithm).
In general this is of course not sufficient to find $(A,B)$-separators. What we do instead is to try and find partial separators that ultimately combine into a single $(A,B)$-separator. To that end we split the tree decomposition into two decompositions in a balanced fashion (Line 3). The two cases handled from Lines 4 to 9 cover the special case where the search is propagated to one part of the tree decomposition, while the body of the loop at Line 13 considers (roughly speaking) all possibilities of splitting up the separator over both parts.

The first step of proving  
Theorem \ref{genfptsep} is to prove the \fpt runtime
of $AppSep$. We first observe that a single application of
$AppSep$ takes \fpt time. 
The efficiency of $\mathit{GetBalVert}$ and $\mathit{\gcaalg}$ has been discussed
above. $SmallSep$ is effectively a loop over the output of $\mathit{\gcaalg}$
with polynomial time spent per element. Note that since \gcaalg is computed in \fpt time we also obtain a corresponding bound on the size  of the $((3k+d+1)k_0, k_0)$-gap cover approximators to iterate over. Finally, in the loop of Line 13, we
only need to observe that the number of connected components
of ${\bf B}(t) \setminus W$ is at most $p$ as otherwise the edge cover
number of ${\bf B}(t)$ is greater than $p$. Hence, the number of partitions
$C_1,C_2$ considered in the loop is $O(2^p)$. 

Next, we need to demonstrate that the number of recursive applications
of $\mathit{AppSep}$ is \fpt. We present the number of applications
as a recursive function $F(k_0,m)$ where $m=|X|$. If $m \leq 2$
then there is only a single application through running $\mathit{SmallSep}$.
Otherwise, there is one recursive application at Line 4 and one at Line 7 
where the first parameters remains the same and the second parameter is at most $3/4m$
(by selection of $t,Y_1,Y_2$). 
Additionally, there are also the recursive applications   
in Lines 14 and 15 where the first parameter is at most $k_0-1$ and the second parameter is at most $3/4m$. 
As a result, we obtain a recursive formula $F(k_0,m) \leq 2F(k_0,3/4m)+g(p)F(k_0-1,3/4m)$. 
We note that this function can be bounded above by a fixed-parameter cubic function (see \Cref{numrecur} in the appendix for details)
thus establishing the \fpt runtime of $\mathit{AppSep}$. 

Next, we need to demonstrate correctness of the non-rejection output. 
It is straightforward to see by construction that the returned set $S$ is a subset
of ${\bf B}(X)$: ultimately, the set $S$ is comprised of unions of outputs of $\gcaalg(H, U, \dots)$, either directly in Line~10, or indirectly via $\mathit{SmallSep}$. In the first case, the returned sets are a subset of $\mathbf{B}(t)$, where $t \in X$ by definition of $\mathit{GetBalVert}$. In the second case, $U \subseteq \mathbf{B}(X_{t,Y_i})$ for $i=1$ or $i=2$ by definition of $\mathit{SmallSep}$. By definition, both $X_{t,Y_i}$ are subsets of $X$ and thus $\mathbf{B}(X_{t,Y_i})$ is a subset of $\mathbf{B}(X)$. Since the recursion always restricts parameter $X$ to either $X_{t,Y_1}$ or $X_{t,Y_2}$ the inductive application of this observation is immediate.

We need to show $S$ is an $(A,B)$-separator and that its edge cover number is within a specified upper bound. 
Both claims are established by induction. The main part of proving that $S$ is an
$(A,B)$-separator is showing that if $S$ as returned on Line 17 
then it is an $(A,B)$-separator. This follows from the induction assumption applied 
to $Out_1$ and $Out_2$ and the following statement. 

\begin{restatable}{lemma}{RESTATEcomposedsep}
    \label{composedsep}
Let $H$ be a hypergraph, $V_1,V_2 \subseteq V(H)$ be such that
$V_1 \cup V_2=V(H)$ and $Y=V_1 \cap V_2$ is a $(V_1,V_2)$-separator. 
Let $W \subseteq Y$ and let $C_1,C_2$ be a weak partition of $Y \setminus W$. 
Let $H_1=H[V_1 \setminus W]$ and $H_2=H[V_2 \setminus W]$.
Let $A,B \subseteq V(H)$. For each $i \in \{1,2\}$
let $A_i=(A \cap V(H_i)) \cup C_1$, let $B_i=(B \cap V(H_i)) \cup C_2$, and let $W_i$ be an $(A_i,B_i)$-separator of $H_i$.
Then $W_1 \cup W_2 \cup W$ is an $(A,B)$-separator of $H$. 
\end{restatable}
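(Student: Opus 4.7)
My plan is to argue by contradiction. I assume there exists a walk $P : v_0, v_1, \ldots, v_n$ in $H \setminus (W_1 \cup W_2 \cup W)$ with $v_0 = a \in A$ and $v_n = b \in B$, and I extract from $P$ a sub-walk inside one of $H_1, H_2$ that violates the separator hypothesis on the corresponding $W_i$. The first step is the structural observation that, because $Y = V_1 \cap V_2$ is a $(V_1,V_2)$-separator of $H$, every hyperedge $e \in E(H)$ satisfies $e \subseteq V_1$ or $e \subseteq V_2$; otherwise $e$ would contain both a vertex of $V_1 \setminus Y$ and a vertex of $V_2 \setminus Y$, contradicting separation. For each edge $e_i \in E(H)$ carrying the step $v_{i-1} \to v_i$ of $P$, I fix a side $\sigma_i \in \{1,2\}$ with $e_i \subseteq V_{\sigma_i}$, breaking ties arbitrarily when $e_i \subseteq Y$.

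Next, I partition the edges of $P$ into maximal runs of consecutive edges with the same $\sigma$-value, writing $s_1, \ldots, s_r$ for the successive sides and $v_{i_1}, \ldots, v_{i_{r-1}}$ for the transition vertices between runs. Each transition vertex is incident to edges of both sides, so it lies in $V_1 \cap V_2 = Y$, and since $P$ avoids $W$, it actually lies in $Y \setminus W = C_1 \cup C_2$. Inspecting the definitions gives $C_1 \subseteq A_1 \cap A_2$ and $C_2 \subseteq B_1 \cap B_2$, which lets me tag every run boundary as type-$A$ or type-$B$: the initial endpoint satisfies $v_0 = a \in A \cap V(H_{s_1}) \subseteq A_{s_1}$ and is typed $A$, the final endpoint satisfies $v_n = b \in B_{s_r}$ and is typed $B$, and each internal transition vertex inherits its type from membership in $C_1$ or $C_2$.

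Reading the resulting type-word across the $r+1$ run boundaries, it starts with $A$ and ends with $B$, so some run $j^{*}$ has a type-$A$ starting endpoint and a type-$B$ ending endpoint. By construction, every vertex and every edge of this run lies in $V_{s_{j^{*}}} \setminus W = V(H_{s_{j^{*}}})$; restricting each such edge to $V(H_{s_{j^{*}}})$ turns the run into a walk in $H_{s_{j^{*}}}$ from a vertex of $A_{s_{j^{*}}}$ to a vertex of $B_{s_{j^{*}}}$, and since $P$ avoids $W_{s_{j^{*}}}$, so does this sub-walk. This contradicts the hypothesis that $W_{s_{j^{*}}}$ is an $(A_{s_{j^{*}}}, B_{s_{j^{*}}})$-separator of $H_{s_{j^{*}}}$. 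The degenerate case $n=0$ reduces to the same argument: $a = b$ lies in $V(H_j)$ for some $j$, hence in $A_j \cap B_j \setminus W_j$, which already contradicts the separator hypothesis.

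The main obstacle I anticipate is the careful handling of the $\sigma$-labelling, in particular for edges contained in $Y$ and for transition vertices whose two incident runs place them on opposite sides; it must be checked that the run I ultimately select is a genuine walk in the corresponding induced sub-hypergraph. The separator property is precisely what makes the labelling well-defined, and because $H_i = H[V_i \setminus W]$ retains the restriction $e \setminus W$ of every edge $e \subseteq V_i$, each edge of the chosen run survives as an edge of $H_{s_{j^{*}}}$ containing the consecutive path vertices, so the extracted walk is valid without further surgery.
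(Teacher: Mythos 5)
Your proof is correct. Let me compare it with the paper's.

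The paper's proof proceeds in two stages: it first proves a separate claim that $W^* = W_1 \cup W_2 \cup W$ is a $(C_1,C_2)$-separator of $H$ (by taking a \emph{shortest} $C_1,C_2$-path, noting its interior avoids $Y$, hence lies entirely within $V_1$ or $V_2$), and then analyses the hypothetical $A,B$-path $P$ by locating its first and last $Y$-vertices $a_1,a_2$: the prefix up to $a_1$ lies wholly in $H_1$ or $H_2$, forcing $a_1 \in C_1$ (else the prefix would be an $A_i,B_i$-path); symmetrically $a_2 \in C_2$; the middle segment is then a $C_1,C_2$-path contradicting the claim. You instead make one pass over $P$: you observe that, since $Y$ separates $V_1$ from $V_2$, every edge lies in $V_1$ or $V_2$, label each edge of $P$ by a side, partition $P$ into maximal same-side runs, type the run boundaries (endpoints via $A$ and $B$, interior transitions via $C_1$/$C_2$, which live in $Y\setminus W$ by construction and are in $A_i$ resp.\ $B_i$ for both $i$), and then extract a single run whose endpoints are typed $A$ then $B$, yielding an $(A_{s_{j^*}}, B_{s_{j^*}})$-walk in $H_{s_{j^*}}\setminus W_{s_{j^*}}$. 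This avoids the intermediate $(C_1,C_2)$-separator lemma and the shortest-path manoeuvre, at the price of a bit more bookkeeping with the $\sigma$-labelling; it is arguably the more uniform argument (it also swallows the case that $P$ lies entirely in one side, which the paper handles separately, and you dispatch the $n=0$ degenerate case cleanly). One small point worth making explicit in a polished write-up: you work with walks, while separators are defined via paths, so you should note that a walk in $H_i\setminus W_i$ from $A_i\setminus W_i$ to $B_i\setminus W_i$ yields such a path by elementary walk-to-path reduction. Both proofs hinge on the same structural observation -- edges do not cross $Y$ -- but the proof architecture is genuinely different, so your route is a valid and slightly more self-contained alternative.
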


To prove that the size of the output $S$ of $\mathit{AppSep}$ matches the
required upper bound, we observe that the output is
the union of several sets $S_1, \dots, S_q$ each of which is in a family
returned by an application of $\gcaalg$. 
This guarantees that $\rho(S_i) \leq(3k+d+1)k_0$. 
It only remains to show that $q \leq 2k_0-1$ ($q$ is the number of sets $S_1,\dots,S_q$ whose union make up $S$). 
To this end we observe that the recursive applications invoking
$\gcaalg$ can be naturally organised into a recursion tree
where each node has two children, accounting for the recursive applications 
in Lines 14 and 15  (the applications on Lines 4 and 7 are not relevant since there  is no invocation of $\gcaalg$ associated with them). 
Then $q$ is simply the number of nodes  of the tree. 
By a simple induction we observe that if $k_1$ and $k_2$ are
the numbers as obtained in Line 13 then $k_1+k_2 \leq k_0$
and the number of nodes rooted by children of the tree is 
at most $2k_1-1$ and $2k_2-1$, respectively.
Hence, the total number of nodes, accounting for the root, 
is at most $(2k_1-1)+(2k_2-1)+1 \leq 2(k_1+k_2)-1 \leq 2k_0-1$ as required. 

Let us now discuss correctness of the \textbf{Reject} output.
For this, we first recursively define the \textbf{Reject} \emph{triggered} by 
$\gcaalg$. This happens when $\mathit{AppSep}$ runs $\mathit{SmallSep}$ and the latter returns 
\textbf{Reject} in Lines 5 or 10 of Algorithm~\ref{alg:smallsep}, or
\textbf{Reject} is returned in Line 12 of Algorithm~\ref{alg:sepgen},
or when \textbf{Reject} is returned in Line 18 of Algorithm~\ref{alg:sepgen}
and one of recursive applications leading to this output returns 
\textbf{Reject} triggered by $\gcaalg$. By inductive application of 
 \Cref{setapprox} we observe that \textbf{Reject} triggered by $\gcaalg$
implies that the \ghw of some induced subgraph of $H$ 
(and hence of $H$ itself) is greater than $k$. 

Finally, we demonstrate that if the \textbf{Reject} output is not
triggered by $\gcaalg$ then $\sep(H,A,B,k_0,(T,{\bf B}),X)=\emptyset$. 
First, we assume that $|X| \leq 2$ and demonstrate this for 
Algorithm~\ref{alg:smallsep}. 
It follows from the description 
that, in the considered case, no element of $Sets$ is an
$(A,B)$ separator. As each $W' \subseteq {\bf B}(X)$ with
$\rho(W') \leq k_0$ is a subset of some element of $Sets$,
it follows that no such $W'$ is an $(A,B)$-separator of $H$. 
In the case where $|X| \geq 3$, a \textbf{Reject} not inherited from
$\gcaalg$ can only be returned in Line 18 of Algorithm~\ref{alg:main}. 
This, in particular, requires \textbf{Reject} to be returned by recursive
applications in Lines 4 or 7. 
By the induction assumption,
$\sep(H,A,B,k_0,(T_{t,Y_i}^{+},{\bf B}_{t,Y_i}^{+}),X_{t,Y_i})=\emptyset$ for each $i \in \{1,2\}$.
This means that if there is $W^* \in \sep(H,A,B,k_0,(T,{\bf B}),X)$
then $W^*$ is \emph{not} a subset of ${\bf B}(X_{t,Y_1})$ nor 
of ${\bf B}(X_{t,Y_2})$. We conclude that by the induction assumption,
such a $W^*$ would cause a non-rejection output in one of
iterations of the loop in Line 13. Since the algorithm passes through to Line 18,
such an iteration does not happen so we conclude that such a 
$W^*$ does not exist. 

\section{Combinatorial Statements} \label{sec:combstat}
In this section we prove \Cref{balancedsep} and sketch the proofs for Theorems \ref{nogrid} and \ref{setapprox}. While \Cref{setapprox} also refers to the existence of an algorithm, we consider the nature of the theorem to be purely combinatorial. The resulting algorithm is simply a naive enumeration of all possibilities of combining certain sets. 

\subsection{\Cref{balancedsep}}
For \Cref{balancedsep} we can make use of a result from the literature and prove the statement in full here. We first recall key terminology from Adler et al.~\cite{DBLP:journals/ejc/AdlerGG07}, who proved the result that we will use. 
For a set $E' \subseteq E(H)$, and $C \subseteq V(H)$ define $\mathit{ext}(C,E') := \{e \in E' \mid e \cap C \neq \emptyset \}$. We say that $C$ is \emph{$E'$-big} if $|\mathit{ext}(C,E')| > \frac{|E'|}{2}$. A set $E' \subseteq E(H)$ is \emph{$k$-hyperlinked} if for every set $S \subseteq E(H)$ with $|S|< k$, $H \setminus \bigcup S$ has an $E'$-big connected component. The hyperlinkedness $\mathit{hlink}(H)$ of $H$ is the maximal $k$ such that $H$ contains a $k$-hyperlinked set. From another perspective, if $\mathit{hlink}(H) \leq k$, then for any set $E' \subseteq E(H)$, there is an $S\subseteq E(H)$ with $|S|\leq k$ such that no connected component of $H\setminus \bigcup S$ is $E'$-big.
Adler et al.~\cite{DBLP:journals/ejc/AdlerGG07} showed the following.
\begin{proposition}[\cite{DBLP:journals/ejc/AdlerGG07}]
\label{hlink}
For every hypergraph $H$, $\mathit{hlink}(H) \leq \ghw(H)$.
\end{proposition}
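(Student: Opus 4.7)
The plan is to prove the contrapositive: if $\ghw(H) \leq w$, then no $E' \subseteq E(H)$ is $(w+1)$-hyperlinked, which immediately yields $\mathit{hlink}(H) \leq \ghw(H)$. Fix an optimal tree decomposition $(T,\B)$ of $H$ with $\rho(\B(u)) \leq w$ for every $u \in V(T)$, and let $E' \subseteq E(H)$ be arbitrary. Using the containment property, assign each $e \in E'$ to some node $u_e \in V(T)$ with $e \subseteq \B(u_e)$, and define a weight function $\omega(u) := |\{e \in E' : u_e = u\}|$, so that $\sum_{u \in V(T)} \omega(u) = |E'|$.

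First I would invoke the standard centroid lemma for vertex-weighted trees to pick a node $u^* \in V(T)$ such that every connected subtree of $T - u^*$ has total $\omega$-weight at most $|E'|/2$. Then I would take $S \subseteq E(H)$ to be a minimum edge cover of $\B(u^*)$, so that $|S| = \rho(\B(u^*)) \leq w$ and in particular $|S| < w+1$.

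The core step is to show that no connected component of $H \setminus \bigcup S$ is $E'$-big. Because $\B(u^*) \subseteq \bigcup S$, any component $C$ of $H \setminus \bigcup S$ is contained in some component $C'$ of $H \setminus \B(u^*)$, and hence $\mathit{ext}(C,E') \subseteq \mathit{ext}(C',E')$. By a standard consequence of the connectedness property, $C'$ lies inside $\B(T') \setminus \B(u^*)$ for exactly one subtree $T'$ of $T - u^*$. If $e \in E'$ touches $C'$, pick any $v \in e \cap C'$; the set of nodes whose bags contain $v$ forms a subtree of $T$ that avoids $u^*$ (since $v \notin \B(u^*)$) and must therefore lie entirely inside $T'$. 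As $u_e$ is such a node, $u_e \in V(T')$, giving $|\mathit{ext}(C',E')| \leq \omega(V(T')) \leq |E'|/2$, so $C$ is not $E'$-big.

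Concluding, the chosen $S$ witnesses that $E'$ fails to be $(w+1)$-hyperlinked, and since $E'$ was arbitrary, $\mathit{hlink}(H) \leq w = \ghw(H)$. The main step to justify carefully is the containment $C' \subseteq \B(T') \setminus \B(u^*)$ and the routing of $u_e$ into the correct subtree $T'$; both rest entirely on the connectedness property of the tree decomposition and a case analysis on where vertices of $C'$ can live. The existence of a weighted centroid is a textbook fact that can be dispatched in one line, so the obstacle, such as it is, consists in getting the interaction between the separator $\bigcup S$ and the subtrees of $T - u^*$ exactly right.
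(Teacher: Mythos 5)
Your proof is correct. Note that the paper does not supply its own proof of this proposition; it cites Adler, Gottlob, and Grohe~\cite{DBLP:journals/ejc/AdlerGG07}, so there is no in-paper argument to compare against. What you have written is the standard balanced-separator argument: push each $e \in E'$ to a witnessing node $u_e$, take a weighted centroid $u^*$ of the tree decomposition, and cover $\B(u^*)$ with at most $\ghw(H)$ edges. The key step---that a component $C'$ of $H \setminus \B(u^*)$ sits inside $\B(T')\setminus\B(u^*)$ for a single component $T'$ of $T-u^*$, and that any $e \in E'$ touching $C'$ therefore has $u_e \in V(T')$---is routed correctly through the connectedness condition: the subtree of nodes whose bags contain a chosen $v \in e\cap C'$ avoids $u^*$ and so lies wholly in $T'$, and $u_e$ belongs to that subtree. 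Combined with the centroid bound $\omega(V(T'))\leq |E'|/2$ and the inclusion $\mathit{ext}(C,E')\subseteq\mathit{ext}(C',E')$, this gives exactly that no component of $H\setminus\bigcup S$ is $E'$-big, so $E'$ is not $(w+1)$-hyperlinked. Since $E'$ was arbitrary and the hyperlinked property is monotone downwards in $k$, this yields $\mathit{hlink}(H)\leq w=\ghw(H)$. This is, to the best of my knowledge, essentially the argument in the cited source; your reconstruction is sound.
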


\begin{proof}[Proof of \Cref{balancedsep}]
    By assumption and \Cref{hlink}, we have $\mathit{hlink}(H)\leq k$. Then for $E'$, there is a set of $k$ edges $S \subseteq E(H)$ such that no connected component of $H \setminus S$ is $E'$-big.
    Let $C_1,\dots,C_\ell$ be the connected components of $H \setminus \bigcup S$. First, observe that $\mathit{ext}(C_i,E')\cap \mathit{ext}(C_j,E') = \emptyset$ for any distinct $i,j\in [\ell]$.
    Suppose, w.l.o.g., that $|\mathit{ext}(C_i,E')| \geq |\mathit{ext}(C_{i+1},E')|$ for $i \in [\ell-1]$. Let $m$
    be the highest integer such that $\sum_{i=1}^m |\mathit{ext}(C_i,E')| < \frac{2}{3}|E'|$. Such an $m\geq 1$ always exists because no component is $E'$-big.
    We claim that $E'_0 = S \cap E'$, $E'_1 = \bigcup_{i=1}^m \mathit{ext}(C_i,E')$, and $E'_2 = \bigcup_{i=m+1}^\ell \mathit{ext}(C_i,E')$ are as desired by the statement.

    It is clear that $E'_0$ satisfies the condition of the lemma (for separator $\bigcup S$).
    Furthermore, $\bigcup S$  is an $\left(\bigcup E'_1,\bigcup E'_2 \right)$-separator as the two sets touch unions of different $H \setminus \bigcup S$ components. The size bound $|E'_1| \leq \frac{2}{3}|E'|$ holds by construction. 

    What is left to show is that the size bound also holds for $E'_2$. To that end we first observe that $|E'_1| \geq \frac{1}{3}|E'|$. Indeed, by the ordering of components by the size of $\mathit{ext}$, we have that $\mathit{ext}(C_{m+1},E') \leq |E'_1|$. Thus, $|E'_1| < \frac{1}{3}|E'|$ would contradict the choice of $m$. Since $E'_1$ and $E'_2$ are disjoint, this leaves at most $|E'|-|E'_1| \leq \frac{2}{3}|E'|$ edges for $E'_2$.
\end{proof}

\Cref{hlink} already plays an important role in the state of the art of \ghw computation. The implication of a so-called \emph{balanced separator} of size $k$ has been a key ingredient for various practical implementations for computing \ghw and related parameters~\cite{DBLP:conf/pods/FischlGP18,DBLP:journals/constraints/GottlobOP22,DBLP:conf/pods/GottlobLOP22}.
Our application of this idea is somewhat different from this prior work. There, balanced separators are used to reduce the search space of separators that need to be checked, and to split up the problem into small subproblems. Our applications of \Cref{balancedsep} is different: in \Cref{alg:compress} for \textsc{Compress} we use it to find a way to separate the interface to the parent node in the decomposition. Notably, this search requires the split into only a constant number of sets (rather than the possibly linear number of connected components) which is part of why we require our variation of the previous hyperlinkedness result. 

\nop{
For Theorem \ref{balancedsep}, consider $W \subseteq V(H)$
and let $V_1, \dots, V_q$ be the connected components 
of $H \setminus W$. We observe that for each $e \in E(H)$,
either $e \subseteq W$ or there is exactly one $i \in [q]$
such that $e \subseteq V_i$. 
With this in mind, $E'$ is naturally weakly partitioned 
into $E_0,E_1, \dots, E_q$ where $E_0$ are those $e \in E'$
that are subsets of $W$ and $E_1, \dots, E_q$ are those that are subsets 
of $V_1, \dots, V_q$ respectively. 
We demonstrate that $ghw(H) \leq k$ implies existence of $W$ 
such that $|E_0| \leq k$ and for each $i \in [q]$, $|E_i| <2/3|E'|$.
We then demonstrate that $E_1, \dots, E_q$ can be grouped into
two subsets of size less than $2/3|E'|$ each. 

To identify a set $W$, we consider a \textsc{td} $(T,{\bf B})$ 
of $H$ and prove that there is $t \in V(T)$ so that ${\bf B}(t)$
can serve as $t$. Similarly to many statements identifying
a 'balancing'  vertex of a tree, we fix an arbitrary $t$ as the root.
If for each connected component $X$ of $T \setminus t$,
the number of $e \in E'$ that are subsets of ${\bf B}(X)$ is less
than $2/3|E'|$, we are done. Otherwise, there is precisely one
'heavy' component containing at least $2/3|E'|$ such edges. 
We update $t$ to be the child of $t$ rooting the corresponding 
subtree and check the same condition. We prove that descending
this way down from the root we encounter a vertex $t$ not 'inducing'
any heavy components and this will be exactly the vertex we need. 
As a remark we notice that due to the $\rho$-stability of $E'$ and
$\rho({\bf B}(t)) \leq k$, at most $k$ elements of $E'$ are subsets of 
${\bf B}(t)$. 
}
\subsection{Large Subedge Hypergrids (\Cref{nogrid})}
Recall from \Cref{def:shyg} that an \emph{$(a,b)$ subedge hypergrid} (or $(a,b)$-shyg) consists of pairwise incompatible subedges $U_1, \dots, U_a, S_1,\dots, S_b$ such that:
 $U_1,\dots,U_a$ are pairwise disjoint, and for each $j\in[b]$, $S_j$ touches $U_1,\dots,U_a$.
  Our proof of Theorem \ref{nogrid} first relates $(a,b)$-shygs to more restricted structures that we call \emph{strong} $(a,b)$-shygs.

\begin{definition} \label{strongshyg}
An $(a,b)$-shyg $U_1, \dots, U_a,S_1, \dots, S_b$ 
is \emph{strong}
if $S_j \cap S_{j'} \cap  \bigcup_{i \in [a]} U_i=\emptyset$
for each $j \neq j' \in [b]$. 
\end{definition}

\begin{restatable}{theorem}{RESTATEstrongnogrid}
    \label{strongnogrid}
Let $H$ be a $(2,d)$ hypergraph having a strong $(3k+1,(3k+1)d+1)$-shyg.
Then $ghw(H)>k$. 
\end{restatable}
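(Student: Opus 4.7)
My plan is to prove the contrapositive in a form suitable for invoking \Cref{hlink}: I will exhibit a set of edges $E' \subseteq E(H)$ that is $(k+1)$-hyperlinked, so that $\mathit{hlink}(H) \geq k+1$ and therefore $\ghw(H) \geq k+1 > k$. The natural candidate is $E' = \{e_1^*, \ldots, e_a^*\}$ where $a = 3k+1$ and $e_i^*$ is an edge of $H$ containing the subedge $U_i$. A preliminary observation makes this choice canonical: by the strong condition in \Cref{strongshyg}, the $b := (3k+1)d+1$ vertices $s_{\ell,i} \in S_\ell \cap U_i$ (one per $\ell \in [b]$) all lie in $U_i$ and are pairwise distinct, so $|U_i| \geq b > kd$. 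Combined with the $(2,d)$ property this forces $e_i^*$ to be the \emph{unique} edge of $H$ containing $U_i$, since any two such edges would intersect in a set of size at most $d < |U_i|$.

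The core step will be to show that for every $S \subseteq E(H)$ with $|S| \leq k$, the induced hypergraph $H \setminus \bigcup S$ admits an $E'$-big component. Write $X := \bigcup S$ and let $A := \{i : e_i^* \in S\}$. Noting that the $e_i^*$ are pairwise distinct (two coinciding $e_i^*$'s would contain $U_i \cup U_j$, contradicting incompatibility), we have $|A| \leq |S| \leq k$. I plan to establish the following key lemma: for any $i, j \in [a] \setminus A$, the set $X$ is \emph{not} a $(U_i, U_j)$-separator in $H$. Granting the lemma, the surviving pieces $U_i \setminus X$ for $i \notin A$ --- all non-empty because $|U_i \cap X| \leq kd < |U_i|$ --- all lie in a single connected component $C$ of $H \setminus X$, and then $|\mathit{ext}(C, E')| \geq a - |A| \geq 2k+1 > a/2$, which is exactly the condition for being $E'$-big.

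The hard part is the key lemma, and this is where the full strength of the strong shyg gets used. Suppose toward contradiction that $X$ separates $U_i$ from $U_j$. For each $\ell \in [b]$ fix any edge $f_\ell \supseteq S_\ell$: since $s_{\ell,i}$ and $s_{\ell,j}$ both lie in $f_\ell$, they would remain connected via the hyperedge $f_\ell \setminus X$ of $H \setminus X$ unless at least one of them is in $X$. Thus $X$ contains at least one vertex from each of the $b$ pairs $\{s_{\ell,i}, s_{\ell,j}\}$, and by the strong condition these vertices are pairwise distinct inside $U_i \cup U_j$, giving $|X \cap (U_i \cup U_j)| \geq b = (3k+1)d + 1$. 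On the other hand, neither $e_i^*$ nor $e_j^*$ lies in $S$, so by the $(2,d)$ property each edge of $S$ meets $U_i \cup U_j \subseteq e_i^* \cup e_j^*$ in at most $2d$ vertices, yielding $|X \cap (U_i \cup U_j)| \leq 2kd < (3k+1)d + 1$, a contradiction.

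The main obstacle is the careful calibration of the shyg dimensions against this bookkeeping: the choice $a = 3k+1$ is dictated by needing a strict majority ($a - k > a/2$) to survive removal of the at most $k$ indices in $A$, while $b = (3k+1)d+1$ is precisely what is needed to exceed the $2kd$ bound imposed by the $(2,d)$ property. The strong condition is indispensable for securing the ``$\geq b$'' lower bound on $|X \cap (U_i \cup U_j)|$; without it, distinct $S_\ell$ could share vertices inside $\bigcup_i U_i$ and the counting collapses immediately.
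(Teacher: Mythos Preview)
Your argument is correct. The one step you leave implicit, and which deserves a sentence, is the passage from ``pairwise not separated'' to ``all $U_i \setminus X$ lie in a single component'': this needs that each $U_i \setminus X$ is itself connected in $H \setminus X$, which holds because $U_i \setminus X \subseteq e_i^* \setminus X$ and the latter is (contained in) a single hyperedge of $H[V(H)\setminus X]$. With that remark the chain of implications closes.

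Comparison with the paper's proof: the paper does not argue hyperlinkedness directly but instead goes through \Cref{balancedsep} (which is itself derived from \Cref{hlink}). It assumes $\ghw(H)\le k$, takes the resulting balanced weak partition $E'_0,E'_1,E'_2$ of $E'=\{e(U_1),\dots,e(U_{3k+1})\}$ with separator $W=\bigcup_{i}e^*_i$ of at most $k$ edges, shows via a counting $|E'_1|+|E'_0|+|X|\le 3k$ argument that some $e(U_{i_1})\in E'_1$ and $e(U_{i_2})\in E'_2$ survive outside the cover edges, and then invokes the same two-subedge separation lemma (\Cref{twostrong}) to force $r\ge b/(2d)>k$, a contradiction. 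Your approach bypasses the balanced-partition packaging and shows $(k+1)$-hyperlinkedness of $E'$ outright; the underlying separation/counting lemma is identical, but your route is a little more direct and makes transparent why the threshold $a=3k+1$ arises (it is exactly what forces $a-k>a/2$), whereas in the paper's version that inequality is hidden inside the $|E'_1|,|E'_2|\le \tfrac{2}{3}|E'|$ bookkeeping of \Cref{balancedsep}.
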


To prove Theorem \ref{nogrid}, we first prove Theorem \ref{strongnogrid}
and then demonstrate that non-emptiness of ${\bf S}_{k,d}$ implies
existence of a strong $(3k+1,(3k+1)d+1)$-shyg. We continue with an overview of our proof of \Cref{strongnogrid}.

An important observation for subedges in $(2,d)$-hypergraphs is that if a subedge $U$ is large enough, and in particular if $|U|>d$, then this will uniquely determine the edge $e$ such that $U \subseteq e$. In this section we will refer to this uniquely determined $e$ as $e(U)$. Using this we state the following auxiliary lemma that gives us a lower bound for separating two subedges that are part of a shyg.

\begin{restatable}{lemma}{RESTATEtwostrong}
    \label{twostrong}
Let $U_1,U_2,S_1, \dots, S_b$ be a strong $(2,b)$-shyg of a $(2,d)$-hypergraph 
$H$. Let $\{e_1, \dots, e_q\} \subseteq E(H)$ be such that
$\{e(U_1),e(U_2)\} \cap \{e_1, \dots, e_q\}=\emptyset$ 
and $W=\bigcup_{i \in [q]} e_i$ is a $U_1,U_2$-separator.
Then $q \geq b/2d$. 
\end{restatable}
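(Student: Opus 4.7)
The plan is a double-counting argument. For each $j \in [b]$, let $T_j := S_j \cap (U_1 \cup U_2)$. By the \emph{strong} property of the shyg, the sets $T_1,\dots,T_b$ are pairwise disjoint. I will first show that $W \cap T_j \neq \emptyset$ for every $j$, and then bound the number of $j$'s hit by any single cover-edge $e_i$ using the $(2,d)$ property.

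For the first step, since $S_j$ is a subedge, fix $f_j \in E(H)$ with $S_j \subseteq f_j$. Because $S_j$ touches both $U_1$ and $U_2$, pick any $u_j \in S_j \cap U_1$ and $v_j \in S_j \cap U_2$; note $\{u_j,v_j\} \subseteq f_j$. If neither $u_j$ nor $v_j$ lay in $W$, then in $H[V(H) \setminus W]$ they would sit together in the nonempty edge $f_j \cap (V(H) \setminus W)$, contradicting that $W$ is a $(U_1,U_2)$-separator. Hence $W \cap \{u_j,v_j\} \neq \emptyset$, so $W \cap T_j \neq \emptyset$.

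For the second step, fix $i \in [q]$ and set $n_i := |\{j \in [b] : e_i \cap T_j \neq \emptyset\}|$. Because $e_i \notin \{e(U_1), e(U_2)\}$, the $(2,d)$ condition yields $|e_i \cap e(U_1)| \leq d$ and $|e_i \cap e(U_2)| \leq d$. Since $U_1 \subseteq e(U_1)$ and $U_2 \subseteq e(U_2)$, this gives $|e_i \cap (U_1 \cup U_2)| \leq 2d$. The pairwise disjointness of the $T_j$'s then forces distinct $j$'s contributing to $n_i$ to correspond to distinct vertices of $e_i \cap (U_1 \cup U_2)$, so $n_i \leq 2d$.

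Combining the two steps, every $j \in [b]$ is witnessed by at least one $e_i$ (applying Step~1 to $W = \bigcup_i e_i$), so $b \leq \sum_{i=1}^q n_i \leq 2dq$, which rearranges to $q \geq b/(2d)$. I do not expect a real obstacle: the argument is short and the only delicate point is identifying the auxiliary set $T_j$ so that strongness yields disjointness across $j$ while the $(2,d)$-property caps how much of $U_1 \cup U_2$ a single non-$e(U_i)$ edge can occupy.
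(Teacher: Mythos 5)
Your proof is correct and takes essentially the same approach as the paper's (Lemma~\ref{shyg.tech.1} in the appendix): the paper defines $I_i$ to be exactly your $\{j : e_i \cap T_j \neq \emptyset\}$, bounds $|I_i|\le 2d$ by the same combination of strongness (disjointness of the $T_j$) and the $(2,d)$-property, and then argues by contradiction rather than summing the $n_i$ directly; the two are just contrapositive phrasings of the same double-counting argument.
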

\begin{proof}[Proof Sketch]
    To separate $U_1$ and $U_2$, 
$W$ needs to contain either $S_i \cap U_1$
or $S_i \cap U_2$ for each $i \in [b]$.
For each $j \in [q]$, $e_j \cap U_1$ can intersect at most $d$
of $S_i$ and the same for $e_j \cap U_2$. 
Consequently, the union of $e_1, \dots, e_q$
can intersect at most $2qd$ sets of 
the $S_i \cap U_1$ or of $S_i \cap U_2$ and this
clearly does not exceed the number of the sets 
contained in the union. 
\end{proof}

Back to the proof of \Cref{strongnogrid},
we observe that $e(U_1), \dots, e(U_{3k+1})$
is $\rho$-stable. This is because each $e(U_i)$,
to 'incorporate' all $S_j$ must be of size at least $(3k+1)d+1$
and, in a $(2,d)$-hypergraph, this is too large to be covered
by $(3k+1)$ other hyperedges. We will then use Theorem \ref{balancedsep}
to prove the desired lower bound on $ghw(H)$, by showing that there is the set $\{e(U_1), \dots, e(U_{3k+1}\}$ cannot be separated in the way specified by \Cref{balancedsep}.
Towards a contradiction, we assume existence of a weak
partition $E'_0,E'_1,E'_2$ of $\{e(U_1), \dots, e(U_{3k+1}\}$, $|E'_i| \leq 2k$ for each $i \in \{1,2\}$
and $W \subseteq V(H)$ such that $\rho(W) \leq k$,
$\bigcup E'_0 \subseteq W$ and $W$ is a $\bigcup E'_1, \bigcup E'_2$-
separator. W.l.o.g. we assume existence of $\{e_1, \dots, e_r\} \subseteq E(H)$, $r \leq k$
such that $W=\bigcup_{i \in [q]} e_i$. 
Let $E^*_0$ be the set of all elements of $e(U_1), \dots, e(U_{3k+1})$ that are subsets
of $W$. We note that $E'_0 \subseteq E^*_0$ and, since no $e(U_i)$ can be covered by $k$ other hyperedges, $E^*_0 \subseteq \{e_1, \dots, e_r\}$. 
We also note that both $E'_1 \setminus E^*_0$ and $E'_2 \setminus E^*_0$
are nonempty. Indeed, if say $E'_2 \setminus E^*_0=\emptyset$ then
$3k+1=|\{e(U_1), \dots, e(U_{3k+1})\}|=|E'_1 \cup E^*_0| \leq 2k+k$, or in other words, in such a situation $E'_0,E'_1,E'_2$ could not form a weak partition of $3k+1$ edges.
Let $e(U_{i_1})  \in E'_1 \setminus E^*_0$ and $e(U_{i_2})) \in E'_2 \setminus E^*_0$. 
Then $W$ is a $(U_{i_1},U_{i_2})$-separator. By Lemma \ref{twostrong},
$r \geq (3k+1)d/2d>k$, and we arrive at a contradiction. This completes the sketch of the proof for \Cref{strongnogrid}.

As a next step, we show that a large enough shyg will imply the existence of a strong $(3k+1,(3k+1)d+1)$-shyg. 
We will show this inductively via a graded version of strong shygs that we will call \emph{$c$-strong shygs}. The only difference from
Definition \ref{strongshyg}
is that $|S_j \cap S_{j'} \cap  \bigcup_{i \in [a]} U_i| \leq c$
for each $j \neq j' \in [b]$. 
Thus a strong shyg is $0$-strong one and any ordinary 
shyg is $d$-strong by definition of a $(2,d)$-hypergraph.

Let us recursively define a function $g(c)=g_{k,d}(c)$
as follows. Let $g(0)=(3k+1)d+1$. 
For $c>0$, assuming that $g(c-1)$ has been defined,
we let $g(c)=g(0)^2 (g(c-1)-2))+1$. 
Further on, we let $\xi(k,d)=g_{k,d}(d)$
and let ${\bf S}_{k,d}(H)$ to be the set of all
$(3k+d+1,\xi(k,d))$-shygs of $H$. 
The second part of the proof of Theorem \ref{nogrid}
is the following statement.  
 
\begin{restatable}{theorem}{RESTATEcontstrong}
    \label{contstrong}
Let $c \geq 0$ and 
let $H$ be a $(2,d)$-hypergraph that has a $c$-strong $(3k+1+c,g(c))$-shyg.
Then $H$ has a strong $(3k+1,(3k+1)d+1)$-shyg. 
\end{restatable}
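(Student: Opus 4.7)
The plan is to prove \Cref{contstrong} by induction on $c$. The base case $c = 0$ is immediate: a $0$-strong $(3k+1, g(0))$-shyg is, by \Cref{strongshyg} and the choice $g(0) = (3k+1)d+1$, precisely a strong $(3k+1, (3k+1)d+1)$-shyg. For the inductive step, given a $c$-strong $(3k+1+c, g(c))$-shyg in $H$ with $c \geq 1$, the plan is to extract from it a $(c-1)$-strong $(3k+c, g(c-1))$-shyg, after which the induction hypothesis immediately yields the desired strong $(3k+1, (3k+1)d+1)$-shyg.

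To carry out the reduction I will drop one set $U_{i^*}$ from the $U$-side (so $3k+1+c \to 3k+c$) and pass to a subfamily of $g(c-1)$ of the $S_j$'s, chosen so that every pair $(S_j, S_{j'})$ in the subfamily with $|S_j \cap S_{j'} \cap \bigcup_i U_i| = c$ intersects $U_{i^*}$ in at least one element; removing $U_{i^*}$ then drops any such critical intersection to size at most $c-1$. The shape of the recurrence $g(c) = g(0)^2(g(c-1) - 2) + 1$ suggests a two-level pigeonhole strategy. The $(2,d)$-property of $H$, applied to the pairwise incompatible subedges $S_j$ and $U_i$, forces $|S_j \cap U_i| \leq d$, so $|S_j \cap \bigcup_i U_i|$ is bounded by a polynomial in $k$ and $d$. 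I can therefore pick a ``pivot'' $S_{j_1}$ and bucket the remaining $S_j$'s by a chosen witness element $w(j) \in S_j \cap S_{j_1} \cap \bigcup_i U_i$ (or a null symbol if this intersection is empty); by pigeonhole, either one non-null bucket is large --- in which case I take $U_{i^*}$ to contain that shared witness $w$ and take the subfamily to be the bucket together with $S_{j_1}$ --- or the null bucket is large, in which case I recursively pick a second pivot inside the null bucket and repeat the same argument. The $g(0)^2$ factor in the recurrence accommodates exactly these two rounds of pigeonhole, and the $-2$ absorbs the two pivots that re-attach to the surviving bucket.

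I expect the main obstacle to be arguing that localisation extends from pivot-to-set interactions to interactions among non-pivot pairs in the chosen bucket. The key structural observation that I plan to lean on is that whenever two sets $S_j, S_{j'}$ lie in the same non-null bucket with common witness $w$, we have $w \in S_j \cap S_{j'} \cap U_{i^*}$ by construction (since $w \in S_j$ and $w \in S_{j'}$ by the definition of being the bucket's witness), so their critical intersection automatically drops below $c$ once $U_{i^*}$ is removed --- this handles \emph{all} pairs (pivot-to-set and set-to-set) as soon as a shared witness exists. The remaining concern is the residual ``null/null'' case left after two rounds of pivoting, where many $S_j$'s are disjoint in $\bigcup_i U_i$ from both chosen pivots. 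I plan to dispose of it by exploiting the touching property of the shyg: since each $S_j$ must meet every $U_i$, having many sets simultaneously disjoint from both pivots within every $U_i$ imposes rigid structural constraints on how the $S_j \cap U_i$'s can coexist inside the $d$-bounded intersections, which should either rule such a configuration out outright or supply a still smaller $c$-strong sub-shyg to which the hypothesis applies. Once this case analysis is settled the induction closes.
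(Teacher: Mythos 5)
Your base case matches the paper's, and your general plan --- drop one $U_{i^*}$ and pass to a carefully chosen subfamily of the $S_j$'s so that every surviving pair has its ``critical intersection'' drop to at most $c-1$ --- is a valid reformulation of what needs to happen in the inductive step. The observation that two sets sharing a common witness $w \in U_{i^*}$ automatically have their intersection reduced is correct. However, your proof does not close, and the gap is a real one, not a detail.

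The trouble is the residual ``null/null'' case, which you acknowledge but do not resolve. After two rounds of pivoting you may be left with a large collection of $S_j$'s that all have empty critical intersection with both pivots $S_{j_1}, S_{j_2}$, yet whose \emph{pairwise} critical intersections are completely uncontrolled. Your two fallbacks fail: ``rule such a configuration out outright'' cannot work (such configurations genuinely occur), and ``supply a still smaller $c$-strong sub-shyg'' is circular, because the null/null bucket together with the $U_i$'s simply \emph{is} such a sub-shyg, and recursing on it never terminates. The $g(0)^2$ factor in the recurrence $g(c)=g(0)^2(g(c-1)-2)+1$ is not accounted for by ``two rounds of pigeonhole'' as you claim; it comes from a max-degree times edge-size bound in a greedy counting argument, which your plan does not contain.

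What your argument is missing is the second half of the paper's dichotomy. The paper first asks whether some \emph{vertex} $u \in \bigcup_i U_i$ lies in at least $g(c-1)$ of the $S_j$'s. If so, drop the $U_i$ containing $u$ and keep exactly those $S_j$'s: since every surviving pair now has $u$ removed from its critical intersection, the result is a $(c-1)$-strong $(3k+c, g(c-1))$-shyg, and the induction applies. Note this is strictly stronger than your pivoting, because it searches over \emph{all} vertices of $\bigcup U_i$ rather than only those inside a chosen $S_{j_1}$; a high-incidence vertex not contained in $S_{j_1}$ or $S_{j_2}$ would escape your two pivots entirely. The second case, when \emph{every} vertex is incident to fewer than $g(c-1)$ of the $S_j$'s, is exactly the regime in which your null/null bucket stays large, and here the paper does not reduce to $(c-1)$-strength at all: it runs a greedy independent-set argument. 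Each $S_j$ meets $\bigcup_{i\in[3k+1]} U_i$ in at most $(3k+1)d$ vertices (by the $(2,d)$-property and pairwise incompatibility), and each such vertex conflicts with at most $g(c-1)-2$ other indices by the low-degree assumption, so a greedy selection inside $[g(c)]$ produces $g(0)$ pairwise-disjoint (in $\bigcup U_i$) sets $S_j$, which together with $U_1,\dots,U_{3k+1}$ is directly a strong $(3k+1,(3k+1)d+1)$-shyg. Without this second branch --- which jumps straight to $c=0$ rather than stepping down to $c-1$ --- there is no way to dispose of your residual case, and the induction cannot be completed.
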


Theorem \ref{nogrid} is immediate from the combination 
of Theorem \ref{contstrong} with $c=d$ and Theorem \ref{strongnogrid}.
So, let us discuss the proof of Theorem \ref{contstrong}. 
\begin{proof}[Proof Sketch]
The proof is by induction on $c$.
The case $c=0$ is immediate as the considered
shyg is exactly the desired strong shyg. 
For $c>0$, we demonstrate we can 'extract' 
from the considered shyg either a
$c-1$-strong $(3k+c,g(c-1))$ shyg (implying 
the theorem by the induction assumption)
a strong $(3k+1,g(0))$-shyg exactly as required by
the theorem. 

So, let $U_1, \dots, U_{3k+c+1},S_1, \dots, S_{g(c)}$
be the considered $c$-strong shyg. 
Assume first that there is $u \in \bigcup_{i \in {3k+c+1}} U_i$
that touches $g(c-1)$ sets $S_j$. 
We assume w.l.o.g that $u \in U_{3k+c+1}$ and 
that the sets $S_j$ touching $u$ are precisely
$S_1, \dots, S_{g(c)-1}$. Since one intersection point 
between these sets is spent on $U_{3k+c+1}$
for any $j \neq j' \in [g(c)-1]$, 
$|S_j \cap S_{j'} \cap \bigcup_{i \in [3k+c]} U_i| \leq c-1$.
In other words, $U_1, \dots, U_{3k+c},S_1, \dots,S_{g(c)-1}$
is a $(c-1)$-strong shyg implying the theorem by the 
induction assumption. 

It remains to assume that each $u \in \bigcup_{i \in 3k+c+1} U_i$
touches at most $g(c-1)-1$ sets $S_j$. 
We are going to identify $I \subseteq [g(c)]$ of size 
$g(0)$  so that for each $j \neq j' \in I$, 
$S_i \cap S_j \cap \bigcup_{i \in [3k+1]} U_i=\emptyset$. 
This means that $U_1, \dots, U_{3k+1}$ along with $S_j$
for each $j \in I$ will form a stron $(3k+1,g(0))$-shyg as 
required by the theorem. 

We use the same elementary argument as if we wanted
to show that a graph with many vertices and a small max-degree
contains a large independent set. The initial set $CI$ of candidate
indices is $[g(c)]$ and initially $I=\emptyset$. We choose $j \in CI$ into $I$ and remove from 
$CI$ the $j$ and all the $j'$ such that $S_j \cap S_{j'} \cap \bigcup_{i \in [3k+1]} U_i \neq \emptyset$
By definition of $g(c)$, it is enough to show that, apart from $j$ itself,
we remove at most $(g(0)-1)(g(c-1)-2)$ other elements. 
Indeed, the size of $S^*_j=S_j \cap \bigcup_{i \in [3k+1]} U_i$ 
is at most $(3k+1)d=g(0)-1$ and each point of $S^*_i$, apart
from $S_j$ touches at most $g(c-1)-2$ elements simply by assumption. 
\end{proof}

\subsection{Constructing Gap Cover Approximators (\Cref{setapprox}) }

Our overall plan for the proof of \Cref{setapprox} is to show that we can produce the elements that make up the desired gap cover approximator, as a combination of four parts, each of which we can bound appropriately. The resulting algorithm is then primarily a matter of enumerating all combinations of elements from these parts. In the following we discuss the construction of these parts and why this  yields an \fpt algorithm.

The first part is what we will refer to as the set $\bep(U)$ of $p$-big edges w.r.t. $U$, which are those edges $e \in E(H[U])$ for which $|e| > pd$.
The intuition for the importance of big edges is simple, in a $(2,d)$-hypergraph, they are necessary to obtain low weight covers. In particular, any edge cover of $U$ with weight at most $p$ must contain all edges of $\bep(U)$: if $|e| > pd$ then the vertices in $e$ cannot be covered by less than $p+1$ other edges, since any other $e' \neq e$ will only intersect $e$ in at most $d$ vertices (see~\Cref{sec:covers} in the for in-depth discussion of $\bep(U)$ sets).
For the rest of this section we will simply say that edges are big to mean $p$-big. Similarly, we will refer to all edges that are not $p$-big as small.

The more challenging part is to determine the structure of those vertices that are not covered by the big hyperedges. To this end we first define the boundary $\bep^*$ of the big edges:
\[\bep^*(U) := \{e \setminus \bigcup (\bep(U) \setminus \{e\}) \mid e \in \bep(U) \}.\]
That is $\bep^*(U)$ contains those subedges of big edges that are unique to a single big edge.

With respect to this set we will be particularly interested in those members that together cover the intersection of a small edge with the vertices in the boundary.
We formalise this via the \emph{spanning set} $sp(e)$ for $e \in E(H[U]) \setminus \bep(U)$, which is the set $E' \subseteq \bep^*(U)$ such that $e \cap \bigcup E' = e \cap \bigcup \bep^*(U)$. We will refer to the cardinality of $sp(e)$ as the \emph{span} of $e$.
The final part we need is those vertices $U_p^0$ that are not part of any subedge in the boundary $\bep^*$, formally
$U_p^0 = U \setminus  \bigcup \bep^*(U)$.
In addition to the three parts described above, may need to add subsets of $U^0_p$ to construct the elements of the gap cover approximator. In particular, to add those vertices that are not part of any big edge. The key observation here is that, under the assumption that $\rho(U) \leq p$, there cannot be too many such vertices and specifically $|U^0_p| = O(p^2d)$.

Ultimately, what we prove is that for any $U' \subseteq U$ with $\rho(U') \leq k_0$, there is a set $X$ such that $X\supseteq U'$ and $\rho(X)\leq (3k+1+d)k_0$, where $X$ is the union of big edges, short edges  and $Y \subseteq U^0_p$.
The short edges are actually split in two cases, depending on their span. The construction of $X$ may require some number of short edges with span at most $3k+d$, as well some short edges with span greater than $3k+d$. The last set is the most challenging in terms of achieving an \fpt algorithm. All other sets can be bounded in terms of $p$ and $d$ (a small edge with a small span is covered by the union of its span leading to the stated approximation factor). Such a bound seems to not be achievable for the set of small edges with large span. To get around this issue, we show that if there are many small edges with large span, then this implies the existence of a large subedge hypergrid. In more concrete terms, if there is a set $E' \subseteq \bep^*(U)$ with $|E'| > 3k+d$ and $\mathit{sp}^{-1}(E') > \xi(k,d)$ (with $\xi$ as in the definition of $\mathbf{S}_{k,d}(H)$), then $\mathbf{S}_{k,d}(H) \neq \emptyset$. In consequence, we can detect the case for which we could not achieve an \fpt bound, and we know that in this case we can safely reject as $\ghw(H)$ is guaranteed to be greater than $k$.

\nop{

Let $|BE|$ be the set of all big hyperedges. 
Since a big hyperedge cannot be covered by
$p$ or less other hyperedges, we conclude that
$|BE| \leq p$ (recall that $\rho(U) \leq p$ by assumption in \Cref{setapprox}). For each $e \in B$, let 
$e^*=e \setminus \bigcup B \setminus \{e\}$. 
Note that $|e^*| \geq pd+1-(p-1)d \geq d+1$, 
hence $e^*$ is a subedge with the unique witness specifically $e$.
Let $B^*$ be the set of all $e^*$.
The first line of the considered algorithm $\gcaalg$
is computing $I:E(H^*) \setminus B \rightarrow 2^{B^*}$
where $I(e)$ is the set of all elements of $B^*$ that 
intersect $e$. The algorithm then checks for
$B' \subseteq B^*$, $|B'| \geq 3k+d+1$, whether 
$I^{-1}(B') \geq \xi(k,d)$ and , if such a subset is found
\textbf{Reject} is returned. We observe that the correctness
of \textbf{Reject} follows from Theorem \ref{nogrid}
as $B'$ along with $I^{-1}(B')$ form a 
$(3k+d+1, \xi(k,d))$-shyg. We proceed under assumption 
that the algorithm does not reject at this point.  

Denote $\bigcup B^*$ by $U_1$ and let $U_0=U \setminus U_1$. 
We observe that $|U_0|$ has an \fpt upper bound parameterised
by $p$ and $d$. Indeed, $U_0$ is the union of $U \setminus \bigcup B$
and $\bigcup B \setminus \bigcup B^*$. 
The former is the set of all vertices not covered by big hyperedges.
This means that $U \setminus \bigcup B$ needs to be covered
by at most $p$ hyperedges of size at most $pd$ ensuring the upper bound 
of at most $p^2d$ on the size. The latter set is the union of all intersections
of big hyperedges. The size of each intersection is at most $d$
and there are at most $p(p-1)/2$ pairs thus ensuring the same $p^2d$ upper
bound.

The algorithm proceeds by generating the set ${\bf A}_0$ 
of all subsets of $U_0$ having edge cover number at most $k_0$.
Because of the upper bound on $|U_0|$, the set ${\bf A}_0$ can be
generated by brute force. 
In the remaining part $\gcaalg$ generates an \fpt-sized family
${\bf A}_1$ of subsets of $U_1$ each of edge cover number at most $(3k+d)k_0$
and such that every $W \subseteq U_1$ with $\rho(W) \leq k_0$ 
is a subset of a element of ${\bf A}_1$.
The returned set ${\bf A}$ is the 'join' of ${\bf A}_0$ and ${\bf A}_1$ in the following
sense: ${\bf A}=\{W_0 \cup W_1|W_0 \in {\bf A}_0,W_1 \in {\bf A}_1\}$. 
It is not hard to observe that ${\bf A}$ satisfies the definition of a 
$(3k+1+d,\gamma(p,d))$-separator of $U$ for some function $\gamma$. 
We thus proceed to discuss the generation of ${\bf A}_1$. 

We weakly partition the edges 
of $\bigcup_{B' \subseteq B^*, B' \neq \emptyset} I^{-1}(B')$
into sets $E_{\leq 3k+d}= \bigcup_{B' \subseteq B^*, 1 \leq |B'| \leq 3k+d} I^{-1}(B')$
and $E_{>3k+d}=\bigcup_{B' \subseteq B^*, |B'|>3k+d} I^{-1}(B')$. 
We observe that, since the algorithm does not reject, 
$|E_{>3k+d}|$ has an \fpt upper bound parameterised by $p$ and $d$. 
However, $|E_{\leq 3k+d}|$ is, in general, unbounded and this is exactly
the reason why we need the $(3k+d)$ approximation factor!

A natural candidate for ${\bf A}_1$ would be the set of 
all $\bigcup E_1$ such that $E_1$ is a subset of 
$B \cup E_{\leq 3k+d} \cup E_{>3k+d}$ of size at most $k_0$. 
Indeed, each $W \subseteq U_1$ with $\rho(W) \leq k_0$
is indeed a subset of an element of such a family. However,
the size of the family is not necessarily \fpt sized 
due to the unboundedness of $E_{\leq 3k+d}$. 

We circumvent the above problem by choosing 
$I(e)$ instead of $e \in E_{\leq 3k+d}$.
The choice is valid in the sense that $e \cap U_1 \subseteq \bigcup I(e)$.
The price of the choice is that we choose $3k+d$ hyperedges instead of one triggering
the aforementioned approximation factor. Now, however, we can
ignore $E_{\leq 3k+d}$ completely and simply choose at most $(3k+d)k_0$
edges out of $B \cup E_{>3k+d}$. Formally, we set
${\bf A}_1=\{\bigcup E_1|E_1 \subseteq B \cup E_{<3k+d}, |E_1| \leq (3k+d)k_0\}$. 

Pseudocode for the actual algorithm and full details are presented in \Cref{sec:covers}.

We conclude the discussion with the pseudocode for $\gcaalg$.

\begin{algorithm}[t]
    \DontPrintSemicolon
    \SetKwInOut{Input}{Input}\SetKwInOut{Output}{Output}
    \Input{$(2,d)$-hypergraph $H$, $U \subseteq V(H)$ $\rho(U) \leq p$, positive integers $p,k,k_0$}
    $H^* \gets H[U]$\;
    $B \gets \{e| e \in E(H),|e|>pd\}$\;
    $B^* \gets \{e \setminus \bigcup (B \setminus \{e\})| e \in B\}$\;
    \For{each $e \in E(H^*) \setminus B$}{
      $I(e) \gets \{e^*| e^* \in B^*, e \cap e^* \neq \emptyset\}$\;
    }
   \If{there is $B' \subseteq B^*$, $|B'|>3k+d$, $|I^{-1}(B') \geq \xi(k,d)$}{
          \KwRet{\textbf{Reject}}\;
    }
   $U_0 \gets U \setminus \bigcup B^*$\;
   ${\bf A}_0 \gets \{S|S \subseteq U_0, \rho_{H^*}(S) \leq k_0\}$\;
   $E_{>3k+d} \gets \bigcup_{B' \subseteq B^*, |B'|>3k+d} I^{-1}(B')$ \;
   ${\bf A}_1 \gets \{\bigcup E_1|E_1 \subseteq B \cup E_{>3k+d},|E_1| \leq (3k+d)k_0\}$ \;
    
   \KwRet{$\{W_0 \cup W_1| W_0 \in {\bf A}_0,W_1 \in {\bf A}_1\}$}\;
   \caption{An algorithm $SepApp(H,U,p,k,k_0)$}
    \label{alg:sepapp}
  \end{algorithm}
}

\section{Discussion \& Future Work}
\label{sec:conclusion}

\textbf{Improvements to the approximation factor.}
Our main result represents a first step into the area of \fpt algorithms for \ghw. Our focus has been on developing the overarching framework, and we expect that with further refinement, better approximation factors are achievable and present a natural avenue for further research. 
The most immediate question is whether subcubic approximation can be achieved. Recall that the $(3k+d+1)(2k-1)k_0$ factor for \textsc{ApproxSep} comes from two sources, $(3k+d+1)k_0$ is a result of using $(3k+d+1)k_0,k_0$-gap cover approximators to find partial covers. The factor $(2k-1)$ is a result of combining the partial separators. It is unclear whether either of these factors can be avoided. 

\smallskip
\noindent
\textbf{Approximation of fractional hypertree width.}
A \emph{fractional edge cover} for hypergraph $H$ is a function $f$ that maps $E(H)$ to $[0,1]$ such that for every $v \in V(H)$ we have $\sum_{e\in I(v)} f(e) \geq 1$. The weight of a fractional edge cover is $\sum_{e\in E(H)} f(e)$ and $\rho^*(H)$ is the least weight of a fractional edge cover for $H$. Fractional hypertree width is defined just as \ghw, with the only difference that the width of a tree decomposition $(T,\B)$ is $\max_{u \in V(T)} \rho^*(\B(u))$. Clearly $\mathit{fhw}(H) \leq \ghw(H)$.
The notion is of similar importance as \ghw and typically when a problem is tractable under bounded \ghw, it is also tractable for bounded $\mathit{fhw}$. 

For $(c,d)$-hypergraphs it is known that there is a function $g$ such that $\rho^*(H) \leq k$ implies $|\rho(H)| \leq g(k,c,d)$~\cite{DBLP:conf/mfcs/GottlobLPR20,DBLP:journals/jacm/GottlobLPR21}. Through this bound, our main result directly implies the existence of an approximation algorithm for $\mathit{fhw}$. However, the currently known bounds are very high (in fact, it is unclear if the current bound is elementary). It is therefore a natural question for future work if our techniques can be adapted to an algorithm specifically designed to check $\mathit{fhw}$. Interestingly, it is easier to compute $\rho^*$ than $\rho$, as the latter is equivalent to \textsc{Set-Cover}, while the former is its natural linear programming relaxation. Nonetheless, enumeration of small fractional covers is challenging, and it is unclear whether an analogue of \Cref{setapprox} exists in the fractional setting.

\smallskip
\noindent
\textbf{Generalisation beyond $(2,d)$-hypergraphs.}
A final natural question for future work is the extension beyond $(2,d)$-hypergraphs. The natural next step here is an \fpt approximation algorithm for $(c,d)$-hypergraphs. While the step is natural on a conceptual level, on a combinatorial level the situation typically become much more complex for $c > 2$. The key problem being that nothing can be assumed about intersections of few edges. As such, techniques for $(2,d)$-hypergraphs do not easily generalise to $(c,d)$-hypergraphs and we expect this generalisation to be challenging. %

\newpage
\bibliography{refs}

\appendix

\section{On Subedge Hypergrids (\Cref{nogrid})}
\label{sec:subedgegrid}
\label{sec:shyg.ghw}

The goal of this section is to prove our main result on $(3k+d+1,\xi(k,d))$-shygs of $H$, namely~\Cref{nogrid}. Recall that 
we write $\mathbf{S}_{k,d}$ for the set of all such shygs of $H$.
Recall also that a set $U \subseteq V(H)$ is a \emph{subedge} (of $H$) if there is $e \in E(H)$ such that
$U \subseteq e$ and that two subedges $U_1,U_2$ are \emph{incompatible} if their union $U_1 \cup U_2$ is not a subedge. 

A key observation towards these results is that for large enough subedges, the witnessing edge is uniquely determined.
\begin{proposition}
  \label{subedge.uniq}
  If $U$ be a subedge of a $(2,d)$-hypergraph $H$ with $|U|>d$. Then there is a \emph{unique} $e \in E(H)$ such that $U \subseteq e$.
\end{proposition}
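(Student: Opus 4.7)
The plan is to prove uniqueness by a direct contradiction argument exploiting the defining property of a $(2,d)$-hypergraph. First I would note that by the definition of ``subedge'', existence of some $e \in E(H)$ with $U \subseteq e$ is immediate; so the content of the statement is uniqueness.

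For uniqueness, I would suppose for contradiction that there are two \emph{distinct} edges $e_1, e_2 \in E(H)$ with $U \subseteq e_1$ and $U \subseteq e_2$. Then $U \subseteq e_1 \cap e_2$, and so $|e_1 \cap e_2| \geq |U| > d$. On the other hand, by the definition of a $(2,d)$-hypergraph applied to the pair $\{e_1, e_2\}$, the intersection satisfies $|e_1 \cap e_2| \leq d$. These two bounds are incompatible, giving the desired contradiction and establishing uniqueness.

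There is no real obstacle here: the proof is essentially a one-line unpacking of the two definitions involved (subedge and $(2,d)$-hypergraph). The only thing to be slightly careful about is to invoke the $(2,d)$ condition correctly for the pair of \emph{distinct} edges $e_1, e_2$, which is exactly the $c=2$ case of the general $(c,d)$-hypergraph definition given in~\Cref{sec:prelim}.
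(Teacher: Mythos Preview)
Your proof is correct and matches the paper's own argument essentially verbatim: existence from the definition of subedge, and uniqueness by the observation that two distinct containing edges would intersect in more than $d$ vertices, contradicting the $(2,d)$-property.
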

\begin{proof}
 By assumption that $U$ is a subedge, there is at least one $e \in E(H)$ with $U \subseteq e$. If there were another such $e' \in E(H)$, then $|e \cap e'| \geq  |U| > d$, a contradiction to $H$ being a $(2,d)$-hypergraph.
\end{proof}
For subedge $U$ where~\Cref{subedge.uniq} applies, we will write $e(U)$ for the unique \emph{witnessing edge} that contains $U$.

We will first focus on proving~\Cref{nogrid}, that any $(2,d)$-hypergraph $H$ for which $\mathbf{S}_{k,d}(H)\neq \emptyset$ has $\ghw$ greater than $k$. The technical challenge here lies with the possibility that there can be significant additional structure to the hypergraph on top of the shyg, e.g., large edges that connect many vertices in the shyg. It should be noted that that this is in strong contrast to the treewidth setting: for treewidth it is sufficient to identify a subgraph (or minor) with high treewidth to conclude that the graph itself has high treewidth. However, generalised hypertree width (and related hypergraph widths) is not hereditary (in terms of subhypegraphs), and little is known of how minors in hypergraphs relate to width (see~\cite{DBLP:conf/pods/Lanzinger22} for early results in this direction).
Our goal here will be to show that, in $(2,d)$-hypergraphs, large enough shygs, contradict the existence of the separators from~\Cref{balancedsep}, no matter what other edges are present in the hypergraph.

\begin{proposition}
  \label{propmutincomp}
  Let $U_1,\dots,U_q$ be subedges of cardinality greater than $d$ such that $e(U_1),\dots,e(U_q)$ are all distinct. Then $U_1,\dots,U_q$ are pairwise incompatible.
\end{proposition}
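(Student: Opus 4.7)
The plan is to prove the statement by contradiction, leveraging \Cref{subedge.uniq} to derive the key contradiction. The argument is essentially a direct application of the uniqueness of the witnessing edge for large subedges.

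First, I would fix arbitrary distinct indices $i, j \in [q]$ and assume, toward a contradiction, that $U_i$ and $U_j$ are \emph{compatible}, meaning $U_i \cup U_j$ is a subedge of $H$. By the definition of a subedge, this yields an edge $e \in E(H)$ with $U_i \cup U_j \subseteq e$, and in particular both $U_i \subseteq e$ and $U_j \subseteq e$.

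Next, I would invoke \Cref{subedge.uniq} twice. Since $|U_i| > d$ and $U_i \subseteq e$, the proposition guarantees that $e$ is the unique edge containing $U_i$, hence $e = e(U_i)$. By the same argument applied to $U_j$ (which also has cardinality greater than $d$), we obtain $e = e(U_j)$. Combining these two equalities gives $e(U_i) = e(U_j)$, which contradicts the hypothesis that $e(U_1), \dots, e(U_q)$ are pairwise distinct. Since $i$ and $j$ were arbitrary, we conclude that the $U_i$ are pairwise incompatible.

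There is no real obstacle here: the proof is a direct consequence of \Cref{subedge.uniq}, and the entire argument fits in a few lines. The only mild care needed is to observe that the cardinality hypothesis $|U_i|>d$ is exactly what activates the uniqueness proposition for each individual $U_i$, so the notation $e(U_i)$ is well-defined in the first place.
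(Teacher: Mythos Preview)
Your proof is correct and matches the paper's approach essentially line for line: both assume some $e$ contains $U_i \cup U_j$ and use \Cref{subedge.uniq} to force $e(U_i)=e=e(U_j)$, contradicting distinctness. The only cosmetic difference is that the paper also names the unique witness $e(U_i\cup U_j)$ before equating it with $e(U_i)$ and $e(U_j)$, which is immaterial.
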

\begin{proof}
  Suppose there are $1\leq i \neq j\leq q$ and $e\in E(H)$ s.t. $U_i \cup U_j \subseteq e$. By~\Cref{subedge.uniq}, there is a unique $e(U_i \cup U_j)$ that contains $U_i \cup U_j$. By uniqueness of the witnessing edges of large subedges, $e(U_i \cup U_j) = e(U_i) = e(U_j)$, contradicting the assumption of distinct witnessing edges.
\end{proof}

\begin{lemma}
  \label{shyg.tech.1}
  Let $U_1,U_2, S_1,\dots,S_b$ be pairwise incompatible subedges of $(2,d)$-hypergraph $H$ such that:
  \begin{enumerate}
  \item $U_1 \cap U_2 = \emptyset$,
  \item for each $j\in[b]$, $S_j$ touches $U_1,U_2$, and
  \item for all distinct $i,j \in [b]$, $U_1 \cap S_i \cap S_j = \emptyset$ and $U_2 \cap S_i \cap S_j = \emptyset$. \label{shyg.tech.1.3}
  \end{enumerate}
  Let $e_1,\dots,e_r \in E(H)$, that are not witnessing edges for either $U_1$ or $U_2$, and such that $\bigcup_{i=1}^r e_i$ is a $(U_1,U_2)$-separator. Then $r \geq b/(2d)$.
\end{lemma}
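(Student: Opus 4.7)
The plan is to exploit the connectivity of each subedge $S_j$ together with the disjointness condition (iii) to lower-bound the number of edges that $W := \bigcup_{i=1}^r e_i$ must contain.

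The first step will be to show that for each $j \in [b]$, at least one of $S_j \cap U_1 \subseteq W$ or $S_j \cap U_2 \subseteq W$ must hold. Indeed, if both failed, we could pick $u_\ell \in (S_j \cap U_\ell) \setminus W$ for $\ell \in \{1,2\}$; since $S_j$ is a subedge, $S_j \subseteq e$ for some $e \in E(H)$, so $u_1, u_2 \in e \setminus W$ are adjacent in $H[V(H)\setminus W]$, contradicting that $W$ is a $(U_1,U_2)$-separator. A pigeonhole argument then lets us assume without loss of generality that $J := \{j \in [b] : S_j \cap U_1 \subseteq W\}$ has size at least $b/2$.

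The second step is the combinatorial bound $|e_i \cap U_1| \leq d$ for every $i \in [r]$. This is immediate if $|U_1| \leq d$. Otherwise, $U_1$ has a unique witnessing edge $e(U_1)$ by \Cref{subedge.uniq}; if we had $|e_i \cap U_1| > d$, then $e_i \cap U_1$ would itself be a subedge of size greater than $d$, and applying \Cref{subedge.uniq} to it, its unique witness would have to coincide with both $e_i$ and $e(U_1)$, forcing $e_i = e(U_1)$ and contradicting the hypothesis that $e_i$ is not a witnessing edge of $U_1$.

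To conclude, for each $j \in J$ pick a vertex $v_j \in S_j \cap U_1$, which is nonempty because $S_j$ touches $U_1$. Condition (iii) makes the sets $S_j \cap U_1$ pairwise disjoint, so the $v_j$ are distinct. Each $v_j$ lies in some $e_{i(j)}$ since $S_j \cap U_1 \subseteq W$, and by the bound from the second step each $e_i$ can serve as $e_{i(j)}$ for at most $|e_i \cap U_1| \leq d$ distinct indices. Hence $|J| \leq rd$, yielding $r \geq |J|/d \geq b/(2d)$. I expect the main obstacle to be the bound $|e_i \cap U_1| \leq d$: it is the only place where the $(2,d)$ hypothesis and the assumption on witnessing edges meet, and it depends essentially on the uniqueness provided by \Cref{subedge.uniq}. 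The remaining ingredients are combinatorial pigeonhole, closely mirroring the sketch already given for \Cref{twostrong}.
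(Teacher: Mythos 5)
Your proof is correct and follows essentially the same double-counting argument as the paper: the core step in both is the bound $|e_i \cap U_1| \leq d$ (from the $(2,d)$ property together with the non-witnessing hypothesis), combined with the pairwise disjointness of the sets $S_j \cap U_1$ from condition (iii). The paper packages this as a per-edge bound $|I_i| \leq 2d$ covering both sides at once and argues by contradiction, while you first reduce by pigeonhole to one side and conclude directly; the underlying incidence count is the same, and your slightly more elaborate derivation of $|e_i\cap U_1|\leq d$ (applying \Cref{subedge.uniq} twice rather than bounding $|e_i\cap e(U_1)|$ directly) reaches the same place.
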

\begin{proof}
  For $i \in [r]$, define $I_i$ as the set of all indexes $j$ such that $e_i$ has non-empty intersection with either $U_1 \cap S_j$ or $U_2 \cap S_j$. We first argue that $|I_i| \leq 2d$ for all $i \in [r]$. By~\Cref{shyg.tech.1.3}, any vertex $v \in e_i \cap U_1$, can be in at most one $S_j$. Since $e_i$ is not a witness of $U_1$ and $H$ has intersection width $d$, we have $|e_i \cap U_1| \leq d$. Thus, $e_i$ has non-empty intersection with $U_1 \cap S_j$, for at most $d$ subedges $S_j$. The same argument applies to $U_2$, demonstrating $|I_i|\leq 2d$.

  Now, if $r < q/(2d)$, then also $\left| \bigcup_{i=1}^r I_i \right| < q$. That means that there is some $j\in[q]$ that is not in any of the sets $I_i$. That is, there is a $S_j$ such that $(U_1 \cup U_2) \cap S_j$ does not touch any $e_i$. Therefore, there is a path from $U_1$ to $U_2$ (via $S_j)$ that does not touch $W$, contradicting the assumption that $W$ is a $(U_1,U_2)$-separator.
\end{proof}

We will first show the lower bound on $\ghw$ for a stronger version of shygs. Namely, we say that a $(a,b)$-shyg $U_1,\dots,U_a,S_1,\dots,S_b$ is \emph{strong}, if $S_j \cap S_{j'} \cap \bigcup_{i=1}^a U_i = \emptyset$ for all $j\neq j' \in[b]$. We will then show that existence of a $\akd$-shyg always implies the existence of a strong shyg of smaller dimension.

\begin{lemma}
  \label{shyg.cover.trick}
  Let $k\geq 1$, let $U$ be a subedge of $(2,d)$-hypergraph $H$ with $|U|> (3k+1)d$. Let $e_1,\dots,e_{3k} \in E(H)$ (not necessarily distinct) such that their union is a superset of $e(U)$. Then $e(U) = e_i$ for some $i \in [3k+1]$.
\end{lemma}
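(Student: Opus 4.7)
The plan is to prove this by contradiction, exploiting the $(2,d)$-intersection property together with the size hypothesis on $U$. First I would note that since $|U| > (3k+1)d > d$, \Cref{subedge.uniq} gives a well-defined witnessing edge $e(U)$ with $|e(U)| \geq |U| > (3k+1)d$.

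Suppose, aiming for a contradiction, that $e(U) \neq e_i$ for every $i$. Because $H$ is a $(2,d)$-hypergraph, any hyperedge distinct from $e(U)$ intersects $e(U)$ in at most $d$ vertices, so $|e_i \cap e(U)| \leq d$ for each $i \in [3k]$. Combining this with the hypothesis $e(U) \subseteq \bigcup_{i=1}^{3k} e_i$ via a union bound gives
\[
  |e(U)| \;=\; \Big| e(U) \cap \bigcup_{i=1}^{3k} e_i \Big| \;\leq\; \sum_{i=1}^{3k} |e_i \cap e(U)| \;\leq\; 3kd,
\]
which contradicts $|e(U)| > (3k+1)d$. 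Hence at least one of $e_1,\dots,e_{3k}$ must equal $e(U)$.

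There is essentially no combinatorial obstacle here: the role of the threshold $(3k+1)d$ is precisely to leave a gap of at least $d$ above the naive bound $3kd$ obtained from covering $e(U)$ by $3k$ non-witness edges. The strength of the lemma lies in forcing the witnessing edge to literally appear in any short cover of $e(U)$. I expect this to be used downstream to establish a form of $\rho$-stability for the witnesses of the $U_i$'s in a shyg (each $e(U_i)$ must itself belong to any cover of size $\leq 3k$ of its contents), thereby feeding into the balanced-separator contradiction underpinning \Cref{strongnogrid} and ultimately \Cref{nogrid}.
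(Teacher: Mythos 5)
Your proof is correct and follows essentially the same counting argument as the paper: bound each $|e_i \cap e(U)|$ by $d$ via the $(2,d)$ property, and contradict $|e(U)| \geq |U| > (3k+1)d$. (The lemma's statement has a minor index mismatch between "$e_1,\dots,e_{3k}$" and "$i \in [3k+1]$", which you implicitly and sensibly resolve by reading the conclusion as $i \in [3k]$; the paper's own proof has the same looseness, summing over $3k+1$ indices but claiming a bound of $3kd$, and the gap in the hypothesis absorbs the slack either way.)
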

\begin{proof}
  Clearly $e(U)$ is unique since $|U|>d$. Suppose the statement is false, then $|e_i \cap e(U)| \leq d$ for all $i\in [3k+1]$. Hence, the union $\bigcup_{i=1}^{3k+1} e_i$ covers at most $3kd$ vertices of $e(U)$, which is not possible since $|e(U)|\geq |U|>3kd$.
\end{proof}

We are now ready to formally prove \Cref{strongnogrid} that was introduced in the main body. We recall the statement for convenience.
\RESTATEstrongnogrid*
\begin{proof}
  Let $a=3k+1, b=(3k+1)d+1$, and let $U_1,\dots,U_a,S_1,\dots,S_b$ be the assumed strong $(a,b)$-shyg.
  Clearly, for each $i\in[a]$,  $U_i$ contains at least $b$ vertices, and thus has a unique witnessing edge $e(U_i)$.

  Our plan is to show that no separator as in~\Cref{balancedsep} exists for the set $E'=\{e(U_1),\dots,e(U_a)\}$. To that end, suppose $\ghw(H)\leq k$, then~\Cref{balancedsep} applies and there is a weak partition $E'_0,E'_1,E'_2$ of $E'$ such that $E'_1,E'_2$ contain at most $2k$ edges, and there is a $(\bigcup E'_1, \bigcup E'_2)$-separator $W$ with $\bigcup E'_0 \subseteq W$ and $\rho(S) \leq k$.

  Assume, w.l.o.g., that $W$ is the union of at most $k$ hyperedges $E^* = \{e^*_1,\dots,e^*_\ell\}$.
  Let $X = E^* \cap E'$ and observe that $E'_0 \subseteq  X$: we have $\bigcup E'_0  \subseteq W = \bigcup_{i=1}^\ell e^*_i$ and since $\ell \leq k$, \Cref{shyg.cover.trick} applies and
  we have that every $e(U_i)\in E'_0$ will equal some $e^*_j$.

  We claim that $E'_1\setminus X$ and $E'_2\setminus X$ are both non-empty. If $E'_1 \subseteq X$, then $X \cup E'_2 = E'$ as we have already established that also $E'_0 \subseteq X$. Yet $|X| + |E'_2| \leq k + 2k < 3k+1$ and we see that this is not possible. The symmetric argument applies to the case where $E'_2 \subseteq X$.

  Let $U_i, U_j$ be subedges such that $e(U_i) \in E'_1 \setminus X$ and $e(U_j) \in E'_2 \setminus X$.
  Recall that the witnesses for $U_i,U_j$ are unique, and therefore the subedges are not witnessed by any of the $e^*_1,\dots,e^*_\ell$ (as they are all either in $X$ or not in $E'$). Together with the properties of a strong shyg, we see that~\Cref{shyg.tech.1} applies to $U_1,U_2,S_1,\dots,S_b$ and $e^*_1,\dots,e^*_\ell$. But by the lemma, $\ell \geq ((3k+1)d+1)/2d > k$ and we have a contradiction.
\end{proof}

\begin{lemma}
  \label{shyg.tech.2}
  Let $U_1,\dots,U_a,S_1,\dots,S_b$ be a $(a,b)$-shyg of $(2,d)$-hypergraph $H$.
  For $v \in \bigcup_{i=1}^a U_i$, let $I(v)$ be the set of all $j$ such that $v \in S_j$.
  Assume that there is an $r > 0$ such that  $(a\cdot d)^2 r < b$ and for each $v \in \bigcup_{i=1}^a U_i$ it holds that $|I(v)| \leq r$. Then $H$ has a strong $(a,\,ad+1)$-shyg.
\end{lemma}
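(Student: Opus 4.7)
The plan is to extract the desired strong $(a, ad+1)$-shyg by selecting a subfamily $\{S_j : j \in I\}$ with $|I|=ad+1$ such that for any two distinct $j,j' \in I$ the intersection $S_j \cap S_{j'} \cap \bigcup_{i=1}^a U_i$ is empty. Once such an $I$ is identified, the tuple $U_1,\dots,U_a,\, (S_j)_{j\in I}$ immediately forms a strong $(a,ad+1)$-shyg: pairwise incompatibility, pairwise disjointness of the $U_i$'s, and the touching condition on each selected $S_j$ are all inherited from the original shyg, and the strongness condition is enforced by the choice of $I$.

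To build $I$ I would run a simple greedy procedure. Initialise $CI := [b]$ and $I := \emptyset$. At each step, pick some $j \in CI$, add $j$ to $I$, and delete from $CI$ the index $j$ itself together with every $j' \in CI$ satisfying $S_j \cap S_{j'} \cap \bigcup_{i=1}^a U_i \neq \emptyset$. The heart of the argument is a uniform upper bound on how many indices are removed per step.

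The removal bound is obtained in two pieces. First, I claim $|S_j \cap \bigcup_{i=1}^a U_i| \leq a d$: since $S_j$ and $U_i$ are pairwise incompatible subedges, their union is not a subedge, so any edge $e_j$ witnessing $S_j$ and any edge $f_i$ witnessing $U_i$ must differ, and the $(2,d)$-property yields $|S_j \cap U_i| \leq |e_j \cap f_i| \leq d$; summing over $i \in [a]$ gives the bound. Second, every vertex $v$ in this intersection lies in at most $|I(v)| \leq r$ of the $S_{j'}$'s, contributing at most $r-1$ indices distinct from $j$. Consequently, each greedy step removes at most $ad(r-1)+1 \leq adr$ indices from $CI$ (using $ad \geq 1$).

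Hence after $ad$ successful picks, the total number of indices removed from $CI$ is at most $ad \cdot adr = (ad)^2 r$, and the hypothesis $(ad)^2 r < b$ guarantees that $CI$ is still nonempty, so an $(ad+1)$-th pick is possible. The resulting set $I$ has size $ad+1$ and, by construction, witnesses the strongness property, completing the proof. I do not anticipate a serious obstacle here: the only point requiring care is the first bound $|S_j \cap U_i| \leq d$, which relies crucially on the incompatibility of the subedges together with the $(2,d)$-property; the rest is a standard greedy degree-counting argument.
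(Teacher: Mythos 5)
Your proof is correct and follows essentially the same greedy "independent-set in a bounded-degree graph" argument as the paper's: the paper defines the closed neighbourhood $N(j) = \bigcup_{v \in S_j \cap U} I(v)$, bounds $|N(j)| \leq rad$, and runs the identical greedy removal loop. The only minor addition on your side is spelling out explicitly why $|S_j \cap U_i| \leq d$ (incompatibility forces distinct witnessing edges, then the $(2,d)$-property applies), a step the paper leaves implicit.
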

\begin{proof}
  Let $U = \bigcup_{i=1}^a U_i$.
  For index $j \in [b]$, let $N(j) = \bigcup_{v\in S_j \cap U} I(v)$. Intuitively, this is the (closed) neighbourhood of index $j$, in the sense that it contains all indexes $j'$ of sets $S_{j'}$ that share a vertex with $S_j$.
  
  Now let $Q$ be the output of the following simple algorithm:
  \begin{enumerate}
  \item Set $Q = \emptyset, B = [b]$.
  \item Repeat until $B \neq \emptyset$: pick an $i \in B$, and update $Q = Q \cup\{i\}$ and $B = B\setminus N(i)$.
  \item Return $Q$.
  \end{enumerate}
  We will show that the restriction to $U$ of the sets selected in $Q$, will form the desired strong $(a,\,ad+1)$-shyg. Let $\mathbf{S}' = \{ S_j \mid j \in Q\}$. First, observe that $|\mathbf{S}'| \geq ad+1$. Since $|S_j \cap U_i| \leq d$ for all $i\in[a], j\in[b]$, and each $I(v)$ contains at most $r$ vertices by assumption, we have $|N(j)|\leq r\cdot a \cdot d$. Since $b > (a\cdot d)^2 r$, the algorithm will loop at least $ad+1$ times, adding a new element to $Q$ each time.

  By the algorithm, we have that for $j\neq j'\in Q$ and for all $i \in[a]$ that $U_i \cap S_j$ and $U_i \cap S_{j'}$ are disjoint: if some $v \in S_j \cap S_{j'} \cap U_i$, then $j',j \in I(v)$ and therefore also $j' \in N(j)$ and $j \in N(j)$. Clearly, this is impossible by construction. Thus, $U_1,\dots,U_a,\mathbf{S}'$ forms a strong $(a,\,ad+1)$-shyg.
\end{proof}

The preceding lemma now motivates the definition of function $\xi$, that we used to define the set $\mathbf{S}_{k,d}(H)$ consisting of all $(3k+d+1,\xi(k,d))$-shygs. For $n\geq 0$, we define the function recursively as  $\xi'(0,k,d)= (3k+1)d+1$ and $\xi'(n,k,d) = ((3k+1)d)^2 \xi'(n-1,k,d)+1$. The function $\xi$ used in the definition of $\mathbf{S}_{k,d}$ is then defined as $\xi(k,d) = \xi'(d,k,d)$. The next lemma will further clarify this choice.

\begin{theorem}
  \label{shyg.tech.thm}
  Let $n \geq 0$ and let $U_1,\dots,U_{3k+1+n}, S_1,\dots,S_b$ be a $(a,b)$-shyg of $(2,d)$-hypergraph $H$.
  Suppose that $b > \xi'(n, k, d)$ and for all $j\neq j' \in [q]$ we have $|S_j \cap S_{j'} \cap (U_1 \cup \cdots U_{3k+1+n})| \leq n$. Then $H$ has a strong $(3k+1, (3k+1)d+1)$-shyg.
\end{theorem}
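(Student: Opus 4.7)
The plan is to induct on $n$, following the structure of the recursive definition of $\xi'$.

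For the base case $n=0$, the hypothesis $|S_j \cap S_{j'} \cap \bigcup_i U_i| \leq 0$ forces these triple intersections to be empty, so the given $(3k+1,b)$-shyg is already strong. Since $b > \xi'(0,k,d) = (3k+1)d+1$, I would simply retain all of $U_1,\dots,U_{3k+1}$ together with any $(3k+1)d+1$ of the $S_j$.

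For the inductive step ($n \geq 1$), I would case on whether some vertex $u \in \bigcup_{i=1}^{3k+1+n} U_i$ belongs to strictly more than $\xi'(n-1,k,d)$ of the sets $S_1,\dots,S_b$.

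\emph{Case A (a high-degree vertex exists):} say $u \in U_{i^*}$ and let $J$ be the set of indices $j$ with $u \in S_j$, so $|J| > \xi'(n-1,k,d)$. I would pass to the sub-system consisting of the $3k+1+(n-1)$ pairwise disjoint subedges $\{U_i\}_{i \neq i^*}$ together with $\{S_j\}_{j \in J}$. Each $S_j$ with $j \in J$ still touches every remaining $U_i$, so the shyg property is preserved, and for distinct $j,j' \in J$ the common vertex $u$ lies in $U_{i^*}$ but not in $\bigcup_{i \neq i^*} U_i$, hence
\[
\bigl|S_j \cap S_{j'} \cap \textstyle\bigcup_{i \neq i^*} U_i \bigr| \leq n-1.
\]
The inductive hypothesis with parameter $n-1$ then yields the desired strong $(3k+1,(3k+1)d+1)$-shyg.

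\emph{Case B (no high-degree vertex):} every $v \in \bigcup_{i=1}^{3k+1+n} U_i$ has $|I(v)| \leq r := \xi'(n-1,k,d)$. I would restrict to the sub-system $U_1,\dots,U_{3k+1}$ together with all $S_j$, still a $(3k+1,b)$-shyg in which the pointwise bound $|I(v)| \leq r$ still holds. The recurrence for $\xi'$ gives $((3k+1)d)^2 r < \xi'(n,k,d) < b$, which is precisely the hypothesis $(ad)^2 r < b$ of \Cref{shyg.tech.2} with $a = 3k+1$. Applying that lemma directly produces a strong $(3k+1,(3k+1)d+1)$-shyg.

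The main obstacle is calibrating the branching threshold so that both cases close together. It must be low enough that Case B's uniform degree bound fits the coefficient $((3k+1)d)^2$ appearing in the $\xi'$ recurrence, yet high enough that in Case A the surviving family $\{S_j\}_{j \in J}$ still strictly exceeds $\xi'(n-1,k,d)$ to feed the induction. Choosing the threshold to be exactly $\xi'(n-1,k,d)$ (with strict inequality) matches both sides of the recurrence simultaneously, and makes the step from $n$ to $n-1$ lose precisely one unit of allowable triple intersection, as needed.
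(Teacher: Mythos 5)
Your proof is correct and follows essentially the same route as the paper: induction on $n$, with the same high-degree/low-degree case split, reducing Case A to the induction hypothesis at $n-1$ by discarding the subedge containing the high-degree vertex, and closing Case B by feeding the uniform degree bound $r = \xi'(n-1,k,d)$ into Lemma~\ref{shyg.tech.2} via the recurrence $\xi'(n,k,d) = ((3k+1)d)^2\,\xi'(n-1,k,d)+1$. The threshold calibration you discuss at the end is exactly the design choice the paper makes implicitly through the definition of $\xi'$.
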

\begin{proof}
  Proof is by induction on $n$. For $n=0$, we have immediately by assumption that $U_1,\dots,U_a, S_1,\dots,S_b$ is a strong $(3k+1,\xi'(0,k,d))$-shyg. Since $\xi'(0, k, d)=(3k+1)d+1$ we can apply~\Cref{strongnogrid} to see $\ghw(H) > k$.

  For $n > 0$, assume the statement holds for $n' = n-1$.
  Let $U = \bigcup_{i=1}^{3k+1+n} U_i$ and $I(v)$ be the set of $j$ such that $v \in S_j$ for all $v \in U$. We distinguish two cases. First, suppose there is a $v \in U$ with $|I(v)| > \xi'(n-1,k ,d)$. Assume, w.l.o.g., that $v \in U_{3k+1+n}$. Let $j\neq j' \in I(v)$ and observe that
  \[
    |S_{j} \cap S_{j'} \cap \bigcup_{i=1}^{3k+n} U_i| = |S_{j} \cap S_{j'} \cap \bigcup_{i=1}^{3k+n+1} U_i| - |S_{j} \cap S_{j'} \cap U_{3k+1+n}| \leq n - 1.
  \]
  The equality follows from the disjointness of the subedges $U_1,\dots,U_{3k+n+1}$. The inequality from $j,j' \in I(v)$, $v \in U_{3k+1+n}$ and the assumption that $|S_{j} \cap S_{j'} \cap \bigcup_{i=1}^{3k+n+1} U_i| \leq n$. We can then apply the induction assumption to $U_1,\dots,U_{3k+n}$, $\{S_j \mid j \in I(v) \}$.

  For the second case, for all $v\in U$ we have $|I(v)| \leq \xi'(n-1,k,d)$. In this case~\Cref{shyg.tech.2} applies (with $r = \xi'(n-1,k,d)$ and using only $U_1\dots,U_{3k+1}$) and $H$ has a strong $(3k+1, (3k+1)d+1)$-shyg.
\end{proof}

We are finally ready to prove~\Cref{nogrid}. We recall the statement for convenience.
\RESTATEnogrid*
\begin{proof}
  Recall, that $\mathbf{S}_{k,d}(H)$ is shorthand for the set of $(3k+1+d, \xi(k,d))$-shygs of $H$. By applying~\Cref{shyg.tech.thm} with $n=d$ we see that $H$ has a strong $(3k+1, (3k+1)d+1)$-shyg. Then by~\Cref{strongshyg}, $\ghw(H)>k$.
\end{proof}

\section{Efficiently Enumerating  Approximate Covers (\Cref{setapprox})}
\label{sec:covers}

In this section we expand on the details of computing gap cover approximators. We first recall the central definition of this section.
\RESTATEgapcovdef*

\medskip
In terms of our overall algorithm for approximating \ghw the following specific version of the problem of computing gap cover approximators will be relevant.
\begin{problem}{$f$-GCA}
  Input & Hypergraph $H$, $U\subseteq V(H)$, integers $k_0 \leq k \leq  \leq p$  and $\rho(U) \leq p$\\
  Parameter & $p + d$ \\
  Output & A $(f(k)k_0, k_0)$-gap cover approximator,
  \\ & or \textbf{Reject}, in which case it is guaranteed that $\ghw(H) > k$.
\end{problem}

In the rest of the section, we are going to prove 
that for $(2,d)$-hypergraphs, the problem \textsc{$(3k+d+1)$-GCA} is fixed-parameter tractable. As an important consequence this will also demonstrate that the size of the returned gap cover approximator has the same upper bound.

\begin{definition} \label{defbigedge}
  Let $H$ be a $(2,d)$-hypergraph and let $p$ an integer. The set $\bep(H)$ of \emph{$p$-big hyperedges} of $H$ is the set of all hyperedges of size greater than $p\cdot d$.
  We extend the definition to $U \subseteq V(H)$ by letting $\bep(U) = \bep(H[U])$.
\end{definition}

\begin{lemma} \label{bigbasic1}
  Let $H$ be a $(2,d)$-hypergraph, let $U \subseteq V(H)$, and let $p$ be an integer with $p \geq \rho(U)$. The following statements hold.
\begin{enumerate}
\item Let $\mu$ be an edge cover of $H[U]$ with weight at most $p$. Then for each $\bep(U) \subseteq \mu$.%
\item $|\bep(U)| \leq p$. 
\item For each $e \in \bep(U)$, let $e'= e \setminus \bigcup \left(\bep(U) \setminus \{e\} \right)$. 
Then $|e'| \geq d+1$. In particular, $e$ is the only edge of $H[U]$ that is a superset of $e'$. 
\end{enumerate}
\end{lemma}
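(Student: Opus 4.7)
The plan is to use two basic observations throughout: first, that $H[U]$ inherits the $(2,d)$ intersection property from $H$, since if $e_1 = e'_1 \cap U$ and $e_2 = e'_2 \cap U$ are edges of $H[U]$, then $|e_1 \cap e_2| \leq |e'_1 \cap e'_2| \leq d$; and second, that by definition every edge in $\bep(U)$ has strictly more than $pd$ vertices.

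For item~1, I would argue by contradiction. Fix $e \in \bep(U)$ and suppose $e \notin \mu$ for some edge cover $\mu$ of $H[U]$ with $|\mu| \leq p$. Since $\mu$ covers $U \supseteq e$, the vertices of $e$ must be covered by the edges of $\mu$, each of which is distinct from $e$. By the $(2,d)$-property each $f \in \mu$ contributes at most $|f \cap e| \leq d$ vertices of $e$, so at most $p \cdot d$ vertices of $e$ are covered in total. This contradicts $|e| > pd$, hence $e \in \mu$. Item~2 is then immediate: since $\rho(U) \leq p$, some edge cover $\mu$ of $H[U]$ of weight at most $p$ exists, and by item~1 we have $\bep(U) \subseteq \mu$, so $|\bep(U)| \leq |\mu| \leq p$.

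For item~3, I would fix $e \in \bep(U)$ and estimate the portion of $e$ absorbed by the remaining big edges. By item~2, $|\bep(U) \setminus \{e\}| \leq p - 1$, and for each $f$ in this set $|f \cap e| \leq d$. Therefore
\[
|e'| \;=\; |e| \;-\; \Bigl|\,e \cap \bigcup (\bep(U) \setminus \{e\})\,\Bigr| \;\geq\; |e| - (p-1)d \;>\; pd - (p-1)d \;=\; d,
\]
so $|e'| \geq d+1$. For the uniqueness part, suppose $f \in E(H[U])$ satisfies $e' \subseteq f$. Then $|f \cap e| \geq |e'| > d$, which by the $(2,d)$-property of $H[U]$ forces $f = e$.

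The proof is essentially routine counting; the only mild subtlety, which I would flag briefly at the start, is to note that the intersection bound transfers to $H[U]$ so that the $(2,d)$-property can be invoked on edges of the induced subhypergraph rather than of $H$ itself.
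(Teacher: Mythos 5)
Your proof is correct and takes essentially the same approach as the paper: the same counting argument for item~1 (any edge of $H[U]$ other than $e$ intersects $e$ in at most $d$ vertices, so at most $pd < |e|$ vertices of $e$ can be covered without $e$ itself), item~2 as an immediate corollary, and the identical bound $|e'| \geq |e| - (p-1)d > d$ for item~3. You are slightly more explicit than the paper in noting that the $(2,d)$ intersection bound transfers to $H[U]$ and in spelling out the uniqueness consequence, but the substance is the same.
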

\begin{proof}
  Let $e_1, \dots, e_p \in E(H[U])$,
  not necessarily all distinct, that cover $U$.
  Suppose there is $e \in \bep(U)$ that is not one
  of $e_1,\dots, e_p$. As for each $i \in [p]$,
  $|e_i \cap e| \leq d$, $|\bigcup_{i=1}^p e_i \cap e| \leq pd <|e|$,
  a contradiction proving the first statement. 

  Since each big edge must be among any weight $p$ edge cover,
  the number of big hyperedges cannot be greater than $p$.
  This proves the second statement. 

  For the last statement, let $e \in \bep(U)$. 
  Let $\bep(U) \setminus \{e\}=\{e'_1, \dots, e'_q\}$.
  Then, by the second statement, $q \leq p-1$. 
  observe that $e' = e \setminus \bigcup_{i=1}^q (e'_i \cap e)$ and since $e$ is $p$-big also $|e| > pd$, we have $|e'| > pd-(p-1)d=d$ as required. 
\end{proof}

\begin{definition} \label{defsubedges}
  Let $H$ be a $(2,d)$-hypergraph, let $U \subseteq V(H)$, and let $p$ an integer. Define
\begin{itemize}
\item $\bep^*(U)=\{e \setminus \bigcup (\bep(U) \setminus \{e\}) \mid e \in \bep(U)\}$,
\item $U^1_p=\bigcup \bep^*(U)$, and
\item $U^0_p=U \setminus U^1_p$.
\end{itemize}
\end{definition}

\begin{lemma} \label{bigbasic2}
    Let $H$ be a $(2,d)$-hypergraph, let $U \subseteq V(H)$, and let $p$ an integer with $p \geq \rho(U)$. Then
$|\bep^*(U)|=|\bep(U)|$ and, moreover, the
elements of $\bep^*(U)$ are disjoint mutually
incompatible subedges.
\end{lemma}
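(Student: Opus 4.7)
The plan is to obtain all three assertions (equal cardinality, disjointness, mutual incompatibility) as direct consequences of part~3 of \Cref{bigbasic1} together with \Cref{subedge.uniq}. The common thread is that each $e^* := e \setminus \bigcup(\bep(U) \setminus \{e\})$ has size at least $d+1$, so $e^*$ is a subedge large enough to have a unique witnessing edge, and that witness is $e$ itself.

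For the cardinality claim I will argue that the map $\bep(U) \to \bep^*(U)$ sending $e \mapsto e^*$ is injective, which will be immediate from part~3 of \Cref{bigbasic1}: since $e$ is the \emph{only} edge of $H[U]$ containing $e^*$, two distinct big edges cannot yield the same starred set. For disjointness, I will suppose for contradiction that some $v \in e_1^* \cap e_2^*$ for distinct $e_1,e_2 \in \bep(U)$; then $v \in e_2^*$ forces $v \notin \bigcup(\bep(U) \setminus \{e_2\})$, so in particular $v \notin e_1$, contradicting $v \in e_1^* \subseteq e_1$.

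For mutual incompatibility, I will argue that for distinct $e_1, e_2 \in \bep(U)$ the union $e_1^* \cup e_2^*$ cannot be a subedge. Since $|e_1^*| \geq d+1$, the union also has more than $d$ elements, so if it were a subedge \Cref{subedge.uniq} would supply a unique witnessing edge $e$ containing both $e_1^*$ and $e_2^*$; but by part~3 of \Cref{bigbasic1}, the only edge containing $e_i^*$ is $e_i$, forcing $e = e_1 = e_2$, a contradiction. I do not anticipate any real obstacle in this proof; the only care required is to keep the roles of $\bep(U)$ (big edges themselves) and $\bep^*(U)$ (their private parts) clearly separated, and to make sure that the size bound $|e^*| \geq d+1$ from \Cref{bigbasic1} is invoked whenever one wants to use uniqueness of the witnessing edge from \Cref{subedge.uniq}.
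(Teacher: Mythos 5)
Your proof is correct and matches the paper's approach: the paper simply notes that the lemma is immediate from part~3 of \Cref{bigbasic1}, and you have spelled out exactly the three consequences (injectivity of $e \mapsto e^*$ via uniqueness of the witness, disjointness directly from the definition of $e^*$, and incompatibility from $|e^*| \geq d+1$ plus uniqueness of witnessing edges). The only minor note is that for incompatibility you do not actually need the uniqueness supplied by \Cref{subedge.uniq}: the mere existence of an edge $f$ of $H[U]$ containing $e_1^* \cup e_2^*$ already forces $f = e_1$ and $f = e_2$ by part~3 of \Cref{bigbasic1}, giving the contradiction.
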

\begin{proof}
  Immediate from the last statement of \Cref{bigbasic1}
\end{proof}

\begin{definition}[Spanning set $sp_{U,p}$]
    Let $H$ be a $(2,d)$-hypergraph, let $U \subseteq V(H)$, and let $p$ an integer.
Let $e \in E(H[U]) \setminus \bep(U)$. 
The \emph{spanning set} $sp_{U,p}(e)$
 is the set $E' \subseteq \bep^*(U)$
such that $e \cap U^1_p=e \cap \bigcup E'$.
We may omit $p$ and $U$ in the subscript of $sp$ if they are clear from the context.
\end{definition}

\begin{lemma} \label{bigbasic3}
There is a $2^p \ \mathit{poly}(H)$ time algorithm
that computes $sp^{-1}_{p,U}$.
\end{lemma}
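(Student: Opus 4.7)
The plan is a direct construction of $sp^{-1}_{p,U}$ by enumeration. First I would compute $\bep(U)$ in polynomial time with a single pass through $E(H[U])$, collecting all edges of size greater than $pd$; by \Cref{bigbasic1} we have $|\bep(U)| \leq p$. From this I would derive $\bep^*(U) = \{e \setminus \bigcup(\bep(U)\setminus\{e\}) \mid e \in \bep(U)\}$ in polynomial time, and by \Cref{bigbasic2} this yields at most $p$ pairwise disjoint subedges, each of which can be stored explicitly.

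Next, for each non-big edge $e \in E(H[U]) \setminus \bep(U)$ I would compute $sp_{U,p}(e)$ by testing, for every $e^* \in \bep^*(U)$, whether $e \cap e^* \neq \emptyset$; this requires at most $p$ intersection tests per edge and thus takes polynomial time overall. Having these spanning sets in hand, the inversion step is to initialise an (initially empty) container for each of the at most $2^p$ subsets $E' \subseteq \bep^*(U)$, and then append each non-big $e$ to the container indexed by $sp_{U,p}(e)$.

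The overall running time is $2^p \cdot \mathit{poly}(H)$: the computation of $\bep(U)$, $\bep^*(U)$, and the spanning sets is polynomial, while the exponential factor arises purely from materialising the output indexed over all subsets of $\bep^*(U)$. There is no substantive obstacle here; the argument is essentially algorithmic bookkeeping that leverages the structural bound $|\bep^*(U)| \leq p$ supplied by the preceding lemmas. The only point worth verifying is that each non-big edge $e$ is placed in exactly one container, which follows because $sp_{U,p}(e)$ is uniquely defined by its definition via $e \cap U^1_p = e \cap \bigcup E'$.
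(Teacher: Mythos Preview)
Your proposal is correct and follows essentially the same approach as the paper: build a dictionary indexed by the at most $2^p$ subsets of $\bep^*(U)$, then for each non-big edge $e$ compute the set of $e^* \in \bep^*(U)$ it intersects and insert $e$ into the corresponding bucket. Your write-up is somewhat more explicit about the preliminary computation of $\bep(U)$ and $\bep^*(U)$, but the algorithmic core is identical.
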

\begin{proof}
  Create a dictionary indexed by subsets of $\bep^*(U)$.
  The size of the dictionary is at most $2^p$ due to \Cref{bigbasic1}
  and \Cref{bigbasic2}. Initially each element of 
  the dictionary maps to the empty set. 
  Then loop over all the elements $e$ of $E(H[U]) \setminus \bep(U)$. 
  For each such $e$ identify the subset $E'$ of all elements of 
  $\bep^*(U)$ such that $e$ has a non-empty intersection with each
  $e' \in E'$. Add $e'$ to the set with index $E'$.
\end{proof}

\begin{lemma} \label{bigspan}  %
 Let $H$ be a $(2,d)$-hypergraph, let $U \subseteq V(H)$, let $p$ an integer with $p \geq \rho(U)$, and let $k \leq p$.
Suppose that there is 
a set $E' \subseteq \bep^*(U)$ 
such that $|E'| \geq 3k+1+d$
and
$sp^{-1}(E')>\xi(k,d)$. 
Then $ghw(H)>k$. 
\end{lemma}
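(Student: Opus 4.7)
The plan is to observe that $E'$ together with $sp^{-1}(E')$ already carries the combinatorial shape of a $(3k+1+d,\xi(k,d))$-shyg of $H$, so that by \Cref{nogrid} we immediately obtain $\ghw(H)>k$. Concretely, I would enumerate the members of $E'$ as $U_1,\dots,U_{3k+1+d}$ and the members of $sp^{-1}(E')$ as $S_1,\dots,S_b$ with $b>\xi(k,d)$. Each $U_i$ is a subedge of $H$ by construction of $\bep^*(U)$, and each $S_j$, being an edge of $H[U]$, is of the form $e\cap U$ for some $e\in E(H)$ and hence a subedge of $H$ as well.

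Two of the three shyg axioms come almost for free. By \Cref{bigbasic2} the $U_i$ are pairwise disjoint, and the assumption $sp(S_j)=E'$ forces $S_j\cap U_i\neq\emptyset$ for every $i$ and $j$, so each $S_j$ touches every $U_i$.

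The main obstacle is verifying pairwise incompatibility of all $U_i, S_j$ in $H$. The key observation is that every chosen subedge has cardinality strictly greater than $d$: \Cref{bigbasic1} gives $|U_i|\geq d+1$, and since each $S_j$ meets all $3k+1+d$ pairwise disjoint sets $U_i$, also $|S_j|\geq 3k+1+d>d$. By \Cref{subedge.uniq}, each such subedge therefore has a unique witnessing edge in $E(H)$. I would then argue that these witnessing edges are all distinct: the witnesses of the $U_i$ are the distinct edges of $H$ that restrict to the distinct big edges of $H[U]$; the $S_j$ are distinct edges of $H[U]$ and distinct edges of $H[U]$ lift to distinct edges of $H$; finally, no $U_i$ can share its witness with any $S_j$ because the witness of $U_i$ intersects $U$ in more than $pd$ vertices, whereas the witness of $S_j$ intersects $U$ in exactly $S_j$, whose size is at most $pd$. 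A direct application of \Cref{propmutincomp} then yields pairwise incompatibility, so $(U_1,\dots,U_{3k+1+d},S_1,\dots,S_b)$ is indeed a $(3k+1+d,b)$-shyg of $H$ with $b>\xi(k,d)$, placing it in $\mathbf{S}_{k,d}(H)$ and finishing the proof via \Cref{nogrid}.
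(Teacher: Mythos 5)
Your proof is correct and follows essentially the same route as the paper: identify the elements of $E'$ and $sp^{-1}(E')$ as the $U_i$ and $S_j$ of a shyg, use the size bounds from \Cref{bigbasic1} and the disjointness of the $U_i$ to deduce that every set involved has more than $d$ vertices and hence a unique witnessing edge, check that all these witnesses are distinct (using that big edges differ from small ones and from each other), and invoke \Cref{propmutincomp} followed by \Cref{nogrid}. The only cosmetic difference is that the paper carries out the argument entirely inside the induced subhypergraph $H[U]$ and then appeals to monotonicity of $\ghw$ under taking induced subhypergraphs, whereas you lift the subedges to $H$ directly; both work because $H[U]$ is also a $(2,d)$-hypergraph and compatibility of subedges contained in $U$ is the same in $H$ and $H[U]$.
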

\begin{proof}
  We may assume w.l.o.g. that 
  $|E'|=3k+d+1$ because otherwise we 
  can consider an arbitrary subset $E''$ of this
  size due to $sp^{-1}(E') \subseteq sp^{-1}(E'')$. 

  Let $U_1, \dots, U_{3k+d+1}$ be the elements of $E'$. 
  Let $S_1, \dots, S_q$ be the elements of $sp^{-1}(E')$. 

  First, we note $|U_i| \geq d+1$ for each $1 \leq i \leq 3k+d+1$
  by the last statement of Lemma \ref{bigbasic1}. 
  Next, we observe that $U_1, \dots, U_{3k+d+1}$ are
  mutually disjoint simply by definition of $\bep^*(U)$ and
  hence $|S_j| \geq d+1$ for each $1 \leq j \leq q$ as by definition,
  $S_j$ crosses more than $d+1$ mutually disjoint sets. 
  Thus $e_1, \dots, e_{3k+d+1}$ and $e(S_1), \dots, e(S_q)$ are all
  defined. 

  Due to the natural bijection $e \mapsto e \setminus \bigcup (\bep(U) \setminus \{e\})$ 
  between $\bep(U)$ and $\bep^*(U)$ according to Corollary \ref{bigbasic2},
  we conclude that $e(U_1), \dots, e(U_{3k+d+1})$ are all distinct. 
  Next $S_1, \dots, S_q$ are all distinct simply because they are distinct hyperedges.
  Finally each $e(U_i)$ is not the same as each $e(S_j)$ simply because the former is
  a big hyperedge while the latter is not.
  We thus conclude that $e(U_1), \dots, e(U_{3k+d+1}),e_1, \dots, e_q$ are all
  distinct. By \Cref{propmutincomp},
  $U_1, \dots, U_{3k+d+1},S_1, \dots, S_q$ are mutually incompatible. 

  Combined with the disjointness of $U_1, \dots, U_{3k+d+1}$ 
  and the assumption that each $S_j$ touches $U_1,\dots,U_{3k+d+1}$, we see that $H[U]$ has an $(3k+d+1, q)$-shyg with $q \geq \xi(k,d)$. By \Cref{nogrid}, then $ghw(H[U])>k$ and hence also $ghw(H)>k$. 
\end{proof}

\begin{lemma}  \label{smalluzero}
  Let $H$ be a $(2,d)$-hypergraph, let $U \subseteq V(H)$, and let $p$ an integer with $p \geq \rho(U)$. Then
$U^0_p$ is of size $O(p^2 \cdot d)$. 
\end{lemma}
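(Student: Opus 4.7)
The plan is to split $U^0_p$ into the vertices that lie in no big edge and those that lie in at least two big edges, and bound each part separately using properties already established.

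Recall that $U^1_p = \bigcup \bep^*(U) = \bigcup_{e \in \bep(U)}\bigl(e \setminus \bigcup (\bep(U)\setminus\{e\})\bigr)$, which is precisely the set of vertices of $U$ that lie in \emph{exactly one} big edge of $\bep(U)$. Consequently $U^0_p = U \setminus U^1_p$ is the disjoint union of
\[
  A \;=\; U \setminus \bigcup \bep(U) \qquad\text{and}\qquad B \;=\; \bigcup_{e \neq e' \in \bep(U)} (e \cap e').
\]
So it suffices to show $|A|, |B| = O(p^2 d)$.

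For $B$, I would invoke the second part of \Cref{bigbasic1} to get $|\bep(U)| \leq p$, together with the $(2,d)$-hypergraph assumption, which says that any two distinct edges intersect in at most $d$ vertices. Hence $|B| \leq \binom{p}{2} d = O(p^2 d)$.

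For $A$, the idea is that every vertex in $A$ must be covered by a \emph{small} (non-big) edge in any minimum-weight cover. Concretely, fix an edge cover $\mu \subseteq E(H[U])$ of weight at most $p$ (which exists since $\rho(U) \leq p$). By the first part of \Cref{bigbasic1}, $\bep(U) \subseteq \mu$, so $\mu \setminus \bep(U)$ has at most $p$ edges, each of size at most $pd$ by definition of $p$-bigness. Since every vertex in $A$ lies in $\bigcup \mu$ but not in $\bigcup \bep(U)$, it must lie in $\bigcup(\mu \setminus \bep(U))$, giving $|A| \leq p \cdot pd = p^2 d$.

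Combining the two bounds yields $|U^0_p| \leq \binom{p}{2} d + p^2 d = O(p^2 d)$, as claimed. There is no real obstacle here: the partition into $A$ and $B$ is forced by the definition of $\bep^*$, and both bounds follow immediately from \Cref{bigbasic1} plus the intersection bound defining a $(2,d)$-hypergraph.
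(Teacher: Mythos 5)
Your proof is correct and follows essentially the same decomposition as the paper: split $U^0_p$ into the vertices lying in no big edge ($A = U \setminus \bigcup\bep(U)$) and the vertices lying in at least two big edges ($B$), then bound $|B|$ by $\binom{p}{2}d$ using \Cref{bigbasic1} and the intersection property, and bound $|A|$ by $p^2 d$ because it must be covered by at most $p$ non-big edges of size at most $pd$. The only cosmetic difference is in the bound for $A$: you fix one weight-$p$ cover $\mu$ of all of $U$ and observe $\bep(U) \subseteq \mu$, so $\mu \setminus \bep(U)$ covers $A$ with small edges, whereas the paper takes a fresh minimal cover of $A$ itself and notes that no big edge can appear in it (since big edges are disjoint from $A$); both are one-line arguments giving the same bound.
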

\begin{proof}
  $U^0_p$ is the union of two sets: $\left( \bigcup \bep(U) \right) \setminus \left( \bigcup \bep^*(U) \right)$
  and $U \setminus \bigcup \bep(U)$ so it is enough to prove the upper bound 
  for each of them. 
  The former set is the union of intersections of all pairs of elements
  of $\bep(U)$: any vertex not in $\bigcup \bep^*(U)$ must be in at least two edges of $\bep(U)$. By Lemma \ref{bigbasic1}, $|\bep(U)| \leq p$ and the intersection
  size is at most $d$.
  
  For the second set, let $\{e_1, \dots, e_q\}$ be the smallest set of hyperedges covering
  $U \setminus \bigcup \bep(U)$. By assumption $q \leq p$. Since none of the edges in the cover are be $p$-big, they cover at most $q\cdot p \cdot d$ vertices in total. Hence, their union is of size at most $p^2d$. 
\end{proof}

\begin{algorithm}
    \DontPrintSemicolon
    \SetKwInOut{Input}{Input}\SetKwInOut{Output}{Output}
    \Input{$(2,d)$-hypergraph $H$, $U \subseteq V(H)$, $k_0 \leq k \leq  p$ and $\rho(U) \leq p$}
    \Output{A $((3k+d+1)k_0, k_0)$-gap cover approximator or \textbf{Reject}}
    \If{$\exists\, E' \subseteq \bep^*(U)$ such that $|E'|>3k+d$ and
      $sp^{-1}(E')>\xi(k,d)$}{ \label{effapp.if}
      \KwRet{\textbf{Reject}}\;
    }
    $X \gets \{E' \subseteq \bep^*(U) \mid 3k+d < |E'| \}$\;
    $LO \gets \bigcup_{E' \in X} sp^{-1}(E')$\;
    $RSet \gets \emptyset$ \;
    \For{$U' \subseteq U_p^0$ \text{ such that } $\rho(U') \leq k_0$}{ \label{Uloop}
      \For{$\mathit{big},\mathit{short},\mathit{long} \in \mathbb{N}$ such that $\mathit{big}+\mathit{short}+\mathit{long} \leq k_0$}{ \label{k0loop}
        \For{$B \subseteq \bep(U)$, $S \subseteq \bep^*(U)$ with $|B|=\mathit{big}$ and $|S| \leq (3k+d)\cdot \mathit{short}$}{ \label{bsloop}
          \For{$L \subseteq LO$ with $|L|=\mathit{long}$}{ \label{loloop}
            $RSet \gets RSet \cup \{U' \cup \bigcup (B \cup S \cup L)\}$\;
          }
        }
      }
    }
    \KwRet{$RSet$}\;
    \caption{The algorithm $\mathit{GapCoverApprox}(H,U,p,k,k_0)$}
    \label{effappbi}
\end{algorithm}

\begin{lemma} \label{effapprtheor}
  \Cref{effappbi} is a $f(p+d)\,\mathit{poly}(H)$ time algorithm for \textsc{ $(3k+d+1)$-GCA}.
\end{lemma}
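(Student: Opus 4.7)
The plan is to verify three things for Algorithm~\ref{effappbi}: that rejection (when it occurs) is justified, that the returned set satisfies the two defining conditions of a $(3k+d+1)k_0, k_0$-gap cover approximator, and that the running time is $\fpt$ in $p+d$. Justifying rejection on Line~\ref{effapp.if} is immediate from \Cref{bigspan}: the existence of $E' \subseteq \bep^*(U)$ with $|E'| > 3k+d$ and $|sp^{-1}(E')| > \xi(k,d)$ forces $\ghw(H) > k$, so a top-level reject is sound.

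For the first approximator condition, I would check directly that each returned set $Z = U' \cup \bigcup(B \cup S \cup L)$ has $\rho(Z) \leq k_0 + \mathit{big} + (3k+d)\,\mathit{short} + \mathit{long} \leq k_0 + (3k+d)k_0 = (3k+d+1)k_0$, using $\rho(U') \leq k_0$ from the outer loop and the cardinality bounds imposed on $B, S, L$ in the inner loops.

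The substantial part is the second condition: every $W \subseteq U$ with $\rho(W) \leq k_0$ must be contained in some returned set. Given an edge cover $\mu$ of $W$ in $H[U]$ with $|\mu| \leq k_0$, I would partition $\mu$ into the big edges $\mu_B = \mu \cap \bep(U)$, the small short-span edges $\mu_S$ (those with $|sp_{U,p}(e)| \leq 3k+d$), and the small long-span edges $\mu_L$ (which by construction of $LO$ satisfy $\mu_L \subseteq LO$). Setting $U' = W \cap U^0_p$ (which has $\rho(U') \leq k_0$ since $\mu$ covers it), $B = \mu_B$, $S = \bigcup_{e \in \mu_S} sp_{U,p}(e)$, and $L = \mu_L$, the quadruple $(U',B,S,L)$ matches exactly one iteration of the loops on Lines~\ref{Uloop}--\ref{loloop}: in particular $|S| \leq (3k+d)\,|\mu_S|$. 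Containment $W \subseteq Z$ then follows by decomposing $W$ along $U^0_p$ and $U^1_p$: the $U^0_p$-part is $U'$; on the $U^1_p$-side, each $e \in \mu_S$ satisfies $e \cap U^1_p = e \cap \bigcup sp_{U,p}(e) \subseteq \bigcup S$ by definition of the spanning set, while $\mu_B$ and $\mu_L$ contribute directly via $B$ and $L$.

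For the runtime, \Cref{bigbasic1} gives $|\bep(U)|, |\bep^*(U)| \leq p$, so the rejection test (using $sp^{-1}_{U,p}$ precomputed by \Cref{bigbasic3} in $2^p\,\mathit{poly}(H)$ time) sweeps at most $2^p$ subsets; \Cref{smalluzero} bounds $|U^0_p| = O(p^2 d)$, so the outer loop has $2^{O(p^2 d)}$ iterations; the triple $(\mathit{big}, \mathit{short}, \mathit{long})$ ranges over $O(p^3)$ values; and because rejection did not fire, every large-size subset of $\bep^*(U)$ has at most $\xi(k,d)$ preimages under $sp^{-1}_{U,p}$, yielding $|LO| \leq 2^p \xi(k,d)$ and making the $L$-enumeration $\fpt$-bounded. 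The hard part will be the second approximator condition: the delicate point is realising that the only edges of $\mu$ the algorithm cannot enumerate individually are those of large span, which is precisely the case ruled out by the rejection test; this is what licenses the $(3k+d)$-blowup from replacing a short-span edge by its spanning set, and ultimately fixes the approximation factor in the statement.
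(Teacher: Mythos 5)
Your proposal is correct and follows essentially the same route as the paper's proof: reject soundness via \Cref{bigspan}, the first approximator condition by counting $|B|+|S|+|L|$ against $(3k+d)k_0$, the second condition by partitioning a small cover of $W$ into big / short-span / long-span edges and matching it to one iteration of the loops (with spanning sets standing in for short-span edges), and the runtime bounds from \Cref{bigbasic1}, \Cref{bigbasic3}, \Cref{smalluzero}, and the post-rejection bound $|LO|\leq 2^p\xi(k,d)$. The only cosmetic difference is that the paper takes a $\rho$-stable minimum cover of $W^1=W\cap U^1_p$ and handles $W^0$ separately, whereas you partition a cover of all of $W$; the two are interchangeable here.
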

\begin{proof}
Let us first verify the runtime of the algorithm.
The check at Line~\ref{effapp.if} can be done in $2^p \mathit{poly}(H)$ time
by \Cref{bigbasic3}. The set $X$ is straightforard to compute in time $O(2^p \cdot \mathit{poly}(H))$ since $|\bep^*(U)| \leq p$ by \Cref{bigbasic1} and \Cref{bigbasic2}.

If the algorithm does not reject after the first step,
then for each $E' \subseteq \bep^*(E)$
with $|E'|>3k+d$, it is guaranteed that $|sp^{-1}(E')| \leq \xi(k,d)$. 
It follows that 
\begin{equation} \label{eqapp1}
|LO| \leq 2^p \cdot \xi(k,d).
\end{equation}

The loop on Line~\ref{Uloop} performs at most $O(2^{p^2 d})$ iterations by \Cref{smalluzero}. Regarding the check $\rho(U')\leq k_0$, note that $H[U']$ has at most  $2^{|U'|}$ edges and we can check $\rho(U')$ in time $O(2^{{(p^2d)}^{k_0}}) = O(2^{p^3d})$ by naively checking all possible covers with $k_0$ edges.
We can overapproximate the number of iterations of the loop on Line~\ref{k0loop} generously by $O(p^3)$.
The loop on Line~\ref{bsloop} enumerates subsets of $\bep(U)$
and $\bep^*(U)$ so its numbers of iterations is $O(p^2)$ by \Cref{bigbasic1} and \Cref{bigbasic2}. 
Finally, the innermost loop on Line~\ref{loloop} is bounded by \eqref{eqapp1}. 

Let us now verify correctness of the output.
The correctness of the \textbf{Reject} case follows immediately from \Cref{bigspan}.
To verify correctness of $RSet$, i.e., that it indeed is a $((3k+d+1)k_0,k_0)$-gap cover approximator,
let us first show that for each $X \in RSet$, $\rho(X) \leq (3k+d+1)k_0$. 
By construction $X=U' \cup \bigcup (B \cup S \cup L)$. 
We have $\rho(U') \leq k_0$ and by the conditions on the respective loops, the sets $B$, $S$, and $L$ have cardinalities such that 
$|B \cup S \cup L| \leq big+short(3k+d)+long 
 \leq (3k+d)k_0$. 
Hence $\rho(X) \leq (3k+d)k_0+k_0=(3k+d+1)k_0$,
by adding the at most $k_0$ additional vertices from $U'$.

For the second property of gap cover approximators, let $W \subseteq U$ with $\rho(W) \leq k_0$.
We are going to demonstrate that $W$ is a subset of an 
element of $RSet$ and this will complete the theorem. 

Let $W^1=W \cap U^1_p$ and $W^0=W \cap U^0_p$. 
Let $\{e_1, \dots e_r\}$ be a $\rho$-stable set of hyperedges of $H[U]$ covering $W^1$ and note that 
$r \leq k_0$.
In the following we refer to hyperedges that are not $p$-big as \emph{$p$-small} (always w.r.t. $H[U]$).
The set $\{e_1, \dots, e_r\}$ can be weakly partitioned into sets
$Big$, $Short$, and $Long$ as follows: $Big$ is the set of the $p$-big hyperedges (w.r.t. $H[U]$), $Short$ is the set of all $p$-small hyperedges $e$ such that
$|sp(e)| \leq 3k+d$, and, hence $Long$ is the remaining set of $p$-small hyperedges
$e$ such that $|sp(e)| > 3k+d$. Note that because this is a (weak) partition, $|Big|+|Short|+|Long| \leq r \leq k_0$.
Let $SH^*=\bigcup_{e \in Short} sp(e)$.
It is not hard to see that 
$U^1_p \cap \bigcup Short \subseteq \bigcup SH^*$  
and hence 
$W \subseteq (W \setminus W^1) \cup \bigcup (Big \cup SH^* \cup Long)$

We are going to demonstrate that the set
$(W \setminus W^1) \cup \bigcup (Big \cup SH^* \cup Long)$
an element of $RSet$. 
Clearly $\rho(W \setminus W^1) \leq k_0$ and
hence $W \setminus W^1$ is selected in one of the iterations of the first loop. 
Let us fix this iteration and move to the second loop. 
It is also not hard to see that the loop on Line~\ref{k0loop} has an iteration
with $\mathit{big}=|Big|$, $\mathit{short}=|Short|$ and $\mathit{long}=|Long|$.

Recall that $Big$ consits only of $p$-big hyperedges (w.r.t. $H[U]$), i.e., $Big \subseteq \bep(U)$.
Furthermore, every term $sp(e)$ in the definition of $SH^*$ is a subset of $\bep^*(U)$ and hence $SH^* \subseteq \bep^*(U)$.
Also, $|SH^*| \leq |Short| \cdot (3k+d)$ and hence $Big, SH^*$ are chosen as values for $B$ and $S$ in some iteration of the loop on Line~\ref{bsloop}.
Finally, $LO$ in the algorithm consists precisely of the edges for which $|sp(e)| > 3k+d$ and clearly the set $Long$ is picked for $L$ on some iteration of the innermost loop. 
Having made the choices, the algorithm adds 
$(W \setminus W^1) \cup \bigcup (Big \cup SH^* \cup Long)$ to $RSet$. 
\end{proof}

\begin{corollary}
  Let $H$ be a $(2,d)$-hypergraph, let $U \subseteq V(H)$, and let $k_0\leq k \leq \rho(U)$.
  There exists a $((3k+d+1)k_0, k_0)$-gap cover approximator of size $f(k+d)\,\mathit{poly}(H)$.
\end{corollary}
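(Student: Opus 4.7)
The corollary is essentially an immediate observation: any set constructed by an FPT algorithm has FPT cardinality. My plan is therefore to invoke the algorithm $\gcaalg$ from \Cref{effappbi} with appropriately chosen parameters; the bound on the output size then falls out of the runtime analysis already carried out in \Cref{effapprtheor}.

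In concrete terms, I would apply \Cref{effapprtheor} with $p = \rho(U)$, which satisfies the precondition $k \leq p$ via the corollary's hypothesis $k \leq \rho(U)$, and also $\rho(U) \leq p$ trivially. The algorithm then either returns a $(3k+d+1)k_0,\, k_0$-gap cover approximator or rejects. In the non-reject case, the output is a set whose cardinality is bounded by the algorithm's runtime, namely $f(p+d)\,\mathit{poly}(H)$; noting that $\rho(U) \leq |E(H)|$, the dependence of $p$ beyond $k$ can be absorbed into the polynomial factor, giving the claimed $f(k+d)\,\mathit{poly}(H)$ bound. No new combinatorial idea beyond counting the iterations of the nested loops in \Cref{effappbi} is needed here, and that counting is exactly what was done in the proof of \Cref{effapprtheor} (using \Cref{bigbasic1}, \Cref{bigbasic2}, \Cref{smalluzero}, and the bound $|LO| \leq 2^p \cdot \xi(k,d)$ implied by the Reject check).

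The main subtlety, and the only real obstacle, is the Reject case: if $\gcaalg$ rejects then the algorithm does not directly exhibit a gap cover approximator. Since the corollary is a pure existence statement, it suffices to argue combinatorially that a small gap cover approximator still exists when the algorithm would reject. The natural route is to observe that the structural condition triggering rejection -- existence of a large subedge hypergrid witnessed by a subset $E' \subseteq \bep^*(U)$ with $|E'| > 3k+d$ and $|sp^{-1}(E')| > \xi(k,d)$ -- is purely about the geometry of $\bep^*(U)$ and does not prevent the construction of $RSet$ in Algorithm \ref{effappbi} from being a valid gap cover approximator; one can simply drop the Reject test on Line~\ref{effapp.if} and bound the size of $RSet$ via a looser function of $k$ and $d$ that still depends only on $k+d$, since every piece of $RSet$ -- $U_p^0$ bounded by \Cref{smalluzero}, $B \subseteq \bep(U)$ bounded by \Cref{bigbasic1}, $S \subseteq \bep^*(U)$ likewise bounded, and $L \subseteq LO$ bounded by the possibly weaker universal bound on $LO$ -- is controlled in terms of $k,d$ alone, and the correctness argument from \Cref{effapprtheor} establishing that $W \subseteq U$ with $\rho(W) \leq k_0$ is contained in some element of $RSet$ does not rely on the Reject check.
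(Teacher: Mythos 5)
Your overall plan — invoke $\gcaalg$ from \Cref{effappbi} with $p=\rho(U)$ and read off the size bound from the runtime analysis in \Cref{effapprtheor} — is a reasonable starting point, but both of the key steps you use to convert the resulting $f(p+d)\,\mathit{poly}(H)$ bound into the claimed $f(k+d)\,\mathit{poly}(H)$ bound fail, and neither gap is cosmetic.

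First, the claim that ``the dependence of $p$ beyond $k$ can be absorbed into the polynomial factor'' because $\rho(U)\leq |E(H)|$ is incorrect. The function $f$ in $f(p+d)\,\mathit{poly}(H)$ is an arbitrary, and in fact clearly super-polynomial, function of the parameter: the proof of \Cref{effapprtheor} explicitly uses terms such as $2^{p^2 d}$ (from the loop over $U'\subseteq U_p^0$ via \Cref{smalluzero}) and the bound $|LO| \leq 2^p\cdot \xi(k,d)$, leading to $\binom{|LO|}{k_0}$ choices in the innermost loop. Setting $p=\rho(U)$ and arguing $\rho(U)\leq |E(H)|$ gives you a term like $2^{\rho(U)^2 d}\leq 2^{|E(H)|^2 d}$, which is not $\mathit{poly}(H)$ for any fixed $k,d$. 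This is exactly the distinction that makes FPT a nontrivial notion: $f(p)\,\mathit{poly}(n)$ does not collapse to $\mathit{poly}(n)$ just because $p\leq n$.

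Second, your proposed fix for the Reject case — drop the test on Line~\ref{effapp.if} and argue that every piece of $RSet$ is ``controlled in terms of $k,d$ alone'' — does not hold. \Cref{smalluzero} bounds $|U_p^0|$ by $O(p^2 d)$, not by a function of $k$ and $d$; \Cref{bigbasic1} bounds $|\bep(U)|$ by $p$, not by $k$; and, crucially, once you delete the Reject test there is \emph{no} a priori bound on $|LO|$ at all: the whole purpose of the test is that its failure is what yields $|LO|\leq 2^p\cdot\xi(k,d)$. Without it, $|LO|$ can be as large as $|E(H)|$, and the loop on Line~\ref{loloop} then has up to $\binom{|E(H)|}{\mathit{long}}$ iterations with $\mathit{long}$ as large as $k_0$, which is $|E(H)|^{\Theta(k_0)}$ — not of the form $f(k+d)\,\mathit{poly}(H)$. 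Moreover, the correctness argument in \Cref{effapprtheor} is not independent of the Reject test: the guarantee that it is a valid gap cover approximator is proved under the assumption that the algorithm did not reject, and in particular it is what lets one conclude $\ghw(H)>k$ when rejection occurs (\Cref{bigspan}). You would need a genuinely new argument that, when a large shyg is present, a small gap cover approximator \emph{still} exists — the paper makes no such claim, and it is far from obvious.

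In short, what you can correctly extract from \Cref{effapprtheor} is that a $(3k+d+1)k_0,\,k_0$-gap cover approximator of size $f(\rho(U)+d)\,\mathit{poly}(H)$ exists whenever $\gcaalg$ does not reject (equivalently, whenever $\ghw(H)\leq k$). Strengthening that to a bound parameterised by $k+d$ alone, and to an unconditional existence statement, requires something beyond what you have written; if the corollary is to be read literally, your proof does not establish it.
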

\section{Proofs of Theorems \ref{thm:approxghw} and \ref{thm:compress} }

\RESTATEmain*
\begin{proof}
In the proof we refer to Algorithm \ref{alg:main}
for the pseudocode of an algorithm solving the 
$4\alpha(k,d)$-\textsc{ApproxGHW} problem.

We observe first that the algorithm runs \fpt
time. Clearly, it is enough to verify the \fpt
time for a single iteration of the loop of Line 6. 
The condition at Line 9 can be checked in \fpt
time by Theorem \ref{fptsetcov} and Line 10 takes
\fpt time by Theorem \ref{thm:compress} (subject
to the correctness as proved in the next paragraph) . 

For the correctness proof, assume first that the algorithm
returns $(T_n,{\bf B}_n)$ at Line 14. 
We claim that for each $1<i \leq n$, $(T_i,{\bf B}_i)$ formed at Line 13
is a TD of $H_i$ of \fpt at most $4\alpha(k,d)$. 
For $i=2$, this is immediate by description. 
For $i>2$, we only consider the case where the condition of Line 9
is true for otherwise the statement is immediate. 
By adding one vertex to each bag, the \fpt of the resulting 
TD can grow by at most $1$.
Therefore, if the condition is true that the \fpt of $(T_i,{\bf B}_i)$ formed
at Line 7 and 8 is $4\alpha(k,d)+1$. 
This shows that the application of $\mathit{Compress}$ satisfies 
the premises on its input as specified in Algorithm \ref{alg:compress}. 
As Line 14 of Algorithm \ref{alg:main} is reached, the considered
run of $\mathit{Compress}$ does not reject and hence $(T_i,{\bf B}_i)$ formed
at Line 1 is of \fpt at most $4\alpha(k,d)$ by Theorem \ref{thm:compress}. 

It remains to assume that Algorithm \ref{alg:main} rejects at Line 12 of some iteration
$i$. The rejection can only be caused by the \textbf{Reject} returned by $\mathit{Compress}$
applied at the same iteration. With the same inductive reasoning as in the previous 
paragraph, we conclude that the input of $\mathit{Compress}$ is valid and hence
$ghw(H)>k$ follows from Theorem \ref{thm:compress}. 
\end{proof} 

We are now turning of the proof of Theorem \ref{thm:compress}.
\RESTATEfptcompress*
\begin{proof}
In the proof, we refer to Algorithm \ref{alg:compress} for the
pseudocode of the $\mathit{Compress}$ procedure. 
Whenever an input tuple of $\mathit{Compress}$ appears in a claim statement,
we assume that the tuple satisfies the premises of the input as specified
in Algorithm \ref{alg:compress}.

\begin{claim} \label{algo2clm1}
Assume that the algorithm does not reject at Line 4 and let $X$
be the output of $AppSep$ at Line 6. 
Then $H \setminus X$ has at least two connected components. 
\end{claim}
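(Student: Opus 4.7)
The plan is to prove the claim by contradiction. Suppose $H \setminus X$ has strictly fewer than two connected components.

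The degenerate subcase $V(H) \subseteq X$ is easy to dispose of: since $\bigcup E_W \subseteq V(H) \subseteq X$, monotonicity of $\rho$ gives $\rho(\bigcup E_W) \leq \rho(X) \leq \alpha(k,d)$, contradicting the $\rho$-stability of $E_W$, which forces $\rho(\bigcup E_W) = |E_W| = 3\alpha(k,d)+1 > \alpha(k,d)$.

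In the main subcase, $H \setminus X$ has exactly one connected component $C$. Because $\mathit{AppSep}$ returns a $(\bigcup E_1, \bigcup E_2)$-separator and $C$ is connected in $H \setminus X$, the component $C$ can meet at most one of $\bigcup E_1 \setminus X$, $\bigcup E_2 \setminus X$. Without loss of generality $\bigcup E_2 \subseteq X$. Using that $E_W$ is $\rho$-stable (so $\rho(\bigcup E') = |E'|$ for every $E' \subseteq E_W$), we deduce $|E_2| = \rho(\bigcup E_2) \leq \rho(X) \leq \alpha(k,d)$. The contradiction then closes by observing that the partition selected in line~2 must be of the form delivered by \Cref{balancedsep}, whose proof establishes $|E_1|, |E_2| \geq \tfrac{1}{3}|E_W| = \alpha(k,d) + \tfrac{1}{3} > \alpha(k,d)$. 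A symmetric argument excludes $\bigcup E_1 \subseteq X$, so both $\bigcup E_1 \setminus X$ and $\bigcup E_2 \setminus X$ are nonempty and, by the separator property guaranteed by $\mathit{AppSep}$, must lie in distinct connected components of $H \setminus X$.

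The main obstacle I foresee is rigorously establishing the lower bound $|E_1|, |E_2| > \alpha(k,d)$ for the partition the algorithm actually picks. Line~2 explicitly enforces only the matching upper bound $\rho(E_1), \rho(E_2) \leq 2\alpha(k,d)$; the complementary $\tfrac{1}{3}|E_W|$ lower bound appears only in the proof of \Cref{balancedsep}, not in its statement. To close this gap I would either argue that this lower bound is a consequence of the remaining partition constraints together with $\mathit{AppSep}$ not rejecting and the $\rho$-stability of $E_W$, or reinterpret line~2 as implicitly restricted to the balanced partitions produced by \Cref{balancedsep}. Once the bound is in place, the rest of the argument is an immediate consequence of $\rho$-stability and the separator property.
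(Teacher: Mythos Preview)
Your plan correctly isolates the crux: once both $\bigcup E_1 \setminus X$ and $\bigcup E_2 \setminus X$ are nonempty, the separator property of $X$ forces them into distinct components of $H\setminus X$, and you are done. The difficulty is exactly where you flagged it, and it is a genuine gap.

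The lower bound $|E_i|\geq \tfrac{1}{3}|E_W|$ is \emph{not} available here. Line~2 of \Cref{alg:compress} searches over \emph{all} weak partitions satisfying only $\rho(E_0)\leq k$, $\rho(E_1),\rho(E_2)\leq 2\alpha(k,d)$, and non-rejection of $\mathit{AppSep}$; it does not restrict to the particular balanced partition constructed inside the proof of \Cref{balancedsep}. Concretely, nothing excludes e.g.\ $|E_0|=k$, $|E_1|=2\alpha(k,d)$, $|E_2|=\alpha(k,d)+1-k$, which for $k\geq 2$ gives $|E_2|<\alpha(k,d)$. So neither of your proposed fixes works: the bound does not follow from the remaining constraints, and ``reinterpreting line~2'' would change the algorithm being analysed.

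The paper closes the argument along a different line that avoids any lower bound on $|E_i|$. It first checks that $E_1,E_2$ are nonempty \emph{as sets of edges}: from $|E_W|=3\alpha(k,d)+1$, $|E_0|\leq k$ and $|E_i|\leq 2\alpha(k,d)$ (these cardinality bounds follow from the $\rho$-bounds in line~2 together with $\rho$-stability of $E_W$), one cannot have $E_i=\emptyset$. Then, assuming $\bigcup E_2\setminus X=\emptyset$, it uses $\bigcup E_0\subseteq X$ to obtain $\bigcup E_W\subseteq \bigcup E_1\cup X$ and compares edge-cover numbers: $\rho(\bigcup E_W)=|E_W|=3\alpha(k,d)+1$ by $\rho$-stability, whereas $\rho(\bigcup E_1\cup X)\leq |E_1|+\rho(X)\leq 2\alpha(k,d)+\alpha(k,d)=3\alpha(k,d)$, a contradiction. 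Replace your attempted inequality $|E_2|>\alpha(k,d)$ by this cover-number comparison on $\bigcup E_W$ versus $\bigcup E_1\cup X$.
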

\begin{claimproof}
Let $E_1,E_2,E_0$ be weak partition of $E_W$ corresponding
to the considered application of $AppSep$. We observe that both $E_1$ and $E_2$ are nonempty. 
Indeed, suppose that $E_2=\emptyset$. Then $E_W=E_1 \cup E_0$.
On the one hand, $|E_W|=3\alpha(k,d)+1$, on the other hand, $|E_1| \leq 2\alpha(k,d)$, $|E_0| \leq k$, 
hence $|E_W| \leq 3\alpha(k,d)$, a contradiction.

Theorem \ref{fptsep} guarantees that $\rho(X) \leq \alpha(k,d)$
and that $\bigcup E_1 \setminus X$ and $\bigcup E_2 \setminus X$
are in different connected components. It remains to verify that
both these sets are non-empty. We apply argumentation similar to the
previous paragraph. Suppose that $\bigcup E_2 \setminus X=\emptyset$.
As $\bigcup E_0 \subseteq X$, it follows that $\bigcup E_W \subseteq \bigcup E_1 \cup X$.
However, the edge cover number of the left-hand set is $3\alpha(k,d)+1$, while the edge 
cover number of the right-hand set is at most $3\alpha(k,d)$, a contradiction.
\end{claimproof}

\begin{claim} \label{algo2clm2}
Suppose that $\mathit{Compress}(H,k,(T,{\bf B}),W)$ applies
itself recursively to a tuple $(H',k,(T,{\bf B}),W')$.
Then $|V(H')|<|V(H)|$, $\rho(W') \leq 3\alpha(k,d)$, 
and $\rho(H') \geq 4\alpha(k,d)+1$. 
\end{claim}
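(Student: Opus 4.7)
My plan is to verify each of the three conclusions separately, relying on the construction of $H_i = H[U_i \cup X]$ and $W_i = (\bigcup E_W \cap U_i) \cup X$ in lines~10--11 of Algorithm~\ref{alg:compress}, and invoking Claim~\ref{algo2clm1} together with Theorem~\ref{fptsep}. Two of the inequalities are almost immediate. The bound $\rho(H') \geq 4\alpha(k,d)+1$ follows because a recursive call is only invoked inside the loop of lines~9--22, whose index set consists by line~7 of precisely those components $U_i$ of $H \setminus X$ satisfying $\rho(U_i \cup X) > 4\alpha(k,d)$; since $V(H') = U_i \cup X$ and $\rho$ is integer-valued, this gives the desired bound. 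The strict inequality $|V(H')| < |V(H)|$ follows from Claim~\ref{algo2clm1}, which guarantees at least two distinct connected components of $H \setminus X$, so any single $U_i$ is a proper subset of $V(H) \setminus X$, and hence $V(H') = U_i \cup X \subsetneq V(H)$.

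The hard part will be to establish $\rho(W') \leq 3\alpha(k,d)$, where I would bound the cover number of each part of $W' = (\bigcup E_W \cap U_i) \cup X$ separately. For $X$, Theorem~\ref{fptsep} applied with $k_0 = k$ gives $\rho(X) \leq (3k+d+1)(2k-1)k = \alpha(k,d)$. For the intersection part, I would exploit the structure of the weak partition $E_0, E_1, E_2$ of $E_W$ found on line~2: by design, $\bigcup E_0 \subseteq X$ and $\rho(\bigcup E_j) \leq 2\alpha(k,d)$ for $j \in \{1,2\}$. Since $U_i$ is a connected component of $H \setminus X$, it is disjoint from $X$ and in particular from $\bigcup E_0$, giving $\bigcup E_W \cap U_i = \bigcup(E_1 \cup E_2) \cap U_i$.

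The crucial observation, which I would isolate explicitly, is that because $X$ is a $(\bigcup E_1, \bigcup E_2)$-separator, the residual sets $\bigcup E_1 \setminus X$ and $\bigcup E_2 \setminus X$ lie in different connected components of $H \setminus X$. Consequently, for each single component $U_i$, at most one of $\bigcup E_1 \cap U_i$ and $\bigcup E_2 \cap U_i$ can be non-empty, so $\bigcup E_W \cap U_i \subseteq \bigcup E_j$ for some $j \in \{1,2\}$. This yields $\rho(\bigcup E_W \cap U_i) \leq \rho(\bigcup E_j) \leq 2\alpha(k,d)$, and combining with the bound on $\rho(X)$ via subadditivity of $\rho$ under union gives $\rho(W') \leq 2\alpha(k,d) + \alpha(k,d) = 3\alpha(k,d)$, as required.
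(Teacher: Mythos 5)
Your proof is correct and follows the same route as the paper's: the vertex-count bound via Claim~\ref{algo2clm1}, the $\rho(H')$ bound directly from the filter on line~7, and the $\rho(W')$ bound by splitting $W' = (\bigcup E_W \cap U_i) \cup X$ and using $\rho(X)\leq\alpha(k,d)$ from Theorem~\ref{fptsep} together with the observation that the separator property forces $\bigcup E_W \cap U_i$ to be covered by $E_1$ or $E_2$ alone. You spell out the separator argument in more detail than the paper (which only remarks parenthetically that the set ``can be covered by either $E_1$ or $E_2$''), but the substance is identical.
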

\begin{claimproof}
If $\mathit{Compress}(H,k,(T,{\bf B}),W)$ applies
itself recursively, it does not reject at Line 4.
Let $X$ be the set computed at Line 6.
Then $H'=H[V' \cup X]$ where $V'$ is a connected
component of $H \setminus X$. 
By Theorem \ref{fptsep},
$\rho(\bigcup E_W \cap V') \leq 2\alpha(k,d)$
(as the set can be covered by either $E_1$ or $E_2$)
and $\rho(X) \leq \alpha(k,d)$. 
Hence,  according to Line 11, $\rho(W') \leq 3\alpha(k,d)$. 
Finally,  $\rho(H')$ being at least $4\alpha(k,d)+1$ is immediate from Line 7. 
\end{claimproof}

Claim \ref{algo2clm2} allows us to use induction
assumption on $|V(H)|$ for recursive calls made
by compress: it shows that the number of vertices
decreases and that the input of the recursive 
application satisfies the input constraints of Algorithm \ref{alg:compress}.
In the rest of the proof we use the induction assumption
without explicit reference to  Claim \ref{algo2clm2}. 

\begin{claim} \label{algo2clm3}
Let $k,a_1, \dots, a_q$ be integers such that $k \geq 1$, $q \geq 2$, and
$a_i \geq 3k$ for each $1 \leq i \leq q$.  
Then $\sum_{i=1}^q (a_i+k)^2<(\sum_{i=1}^q a_i)^2$. 
\end{claim}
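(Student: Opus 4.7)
The plan is to expand both sides algebraically, reducing the inequality to a lower bound on the cross terms of $\left(\sum_i a_i\right)^2$, and then use the hypothesis $a_i \geq 3k$ in a symmetric way to obtain that bound.

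Concretely, let $S = \sum_{i=1}^q a_i$. Expanding gives
\[
  \sum_{i=1}^q (a_i + k)^2 \;=\; \sum_{i=1}^q a_i^2 \;+\; 2kS \;+\; qk^2,
  \qquad
  S^2 \;=\; \sum_{i=1}^q a_i^2 \;+\; 2\!\sum_{i<j} a_i a_j,
\]
so the claim is equivalent to $2\sum_{i<j} a_i a_j > 2kS + qk^2$. The key step is the pointwise bound
\[
  2 a_i a_j \;-\; 3k(a_i + a_j) \;=\; (a_i - 3k)\,a_j \;+\; a_i\,(a_j - 3k) \;\geq\; 0,
\]
valid for all $i \neq j$ by the assumption $a_i, a_j \geq 3k$. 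Summing over the $\binom{q}{2}$ unordered pairs (and using that each $a_i$ appears in exactly $q-1$ such pairs, so $\sum_{i<j}(a_i + a_j) = (q-1)S$) yields $2\sum_{i<j} a_i a_j \geq 3k(q-1)S$.

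It then suffices to show $3k(q-1)S > 2kS + qk^2$; after dividing by $k \geq 1$ this becomes $(3q-5)S > qk$. For $q = 2$ this reduces to $S > 2k$, which holds since $S \geq 2 \cdot 3k = 6k$. For $q \geq 3$ we have $3q - 5 \geq 4$ and $S \geq 3kq$, so $(3q-5)S \geq 12kq > qk$. In either case the strict inequality follows, completing the argument.

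I do not expect any real obstacle here: the only mild care is getting strictness on the nose in the boundary case $q=2$, $a_1 = a_2 = 3k$, where the cross-term bound $2a_i a_j \geq 3k(a_i+a_j)$ is tight; strictness is then supplied by the slack $S \geq 6k > 2k$ coming from $k \geq 1$.
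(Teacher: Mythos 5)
Your proof is correct, and it takes a genuinely different route from the paper's. The paper argues by induction on $q$: the base case $q=2$ is handled by substituting $r_i = a_i/k$ and reducing to $r_1 + r_2 + 1 < r_1 r_2$, and for $q>2$ the last two numbers $a_{q-1}, a_q$ are merged into $a_{q-1}+a_q$ (which still satisfies the hypothesis, being $\geq 3k$), $(a_{q-1}+k)^2 + (a_q+k)^2$ is bounded by $(a_{q-1}+a_q+k)^2$ using the $q=2$ case, and the induction hypothesis finishes the job. You instead expand both sides once and for all, reduce to a lower bound on the cross terms $2\sum_{i<j}a_ia_j$, establish the symmetric pointwise bound $2a_ia_j \geq 3k(a_i+a_j)$, and sum over the $\binom{q}{2}$ pairs using $\sum_{i<j}(a_i+a_j)=(q-1)S$. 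The final arithmetic $(3q-5)S > qk$ is handled cleanly for $q=2$ and $q\geq 3$ separately. Both arguments are sound; yours avoids induction entirely and makes visible precisely where $a_i \geq 3k$ enters (and how much slack is left), while the paper's inductive version has the slight advantage of only ever needing the two-variable inequality $r_1 r_2 > r_1 + r_2 + 1$.
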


\begin{claimproof}
Let $r_i=a_i/k$ for $1 \leq i \leq q$. 
Assume first that $q=2$. 
Then we need to show that $(r_1k+k)^2+(r_2k+k)^2 < (r_1k+r_2k)^2$. 
On the left-hand side, we obtain
$r_1^2k^2+r_2^2k^2+2r_1k^2+2r_2k^2+2k^2$.
On the right-hand side we obtain 
$r_1^2k^2+r_2^2k^2+2r_1r_2k^2$.
After removal of identical items and dividing both parts
by $2k^2$, it turns out that we need to show that
$r_1+r_2+1<r_1r_2$. But it is not hard to see that this is indeed so
if both $r_1$ and $r_2$ are at least $3$.

Assume now that $q>2$. 
Then by the previous paragraph,
$\sum_{i=1}^{q-2} (a_i+k)^2+(a_{q-1}+k)^2+(a_q+k)^2<\sum_{i=1}^{q-2} (a_i+k)^2+(a_{q-1}+a_q)^2<
\sum_{i=1}^{q-2} (a_i+k)^2+(a_{q-1}+a_q+k)^2<(\sum_{i=1}^q a_i)^2$, the last inequality follows
from the induction assumption.
\end{claimproof}

\begin{claim} \label{algo2clm4}
$\mathit{Compress}(H,k,(T,{\bf B}),W)$ makes at most $\rho(H)^2 \cdot |V(H)|$ recursive
applications. 
\end{claim}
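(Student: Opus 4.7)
The plan is to establish, by induction on $|V(H)|$, that the number of nodes $N(H)$ in the recursion tree rooted at the initial call of $\mathit{Compress}(H,k,(T,\mathbf{B}),W)$ satisfies $N(H) \leq \rho(H)^2 \cdot |V(H)|$. The base case will handle all invocations that do not recurse further (those that reject on line~4 or for which the set $\{U_1,\dots,U_q\}$ assembled on line~7 is empty): they contribute $N(H) = 1$, which is bounded as desired whenever $\rho(H) \geq 1$ (the trivial case $\rho(H)=0$ means $H$ has no edges and can be dispatched separately).

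For the inductive step, let $H_1,\dots,H_q$ denote the instances on which $\mathit{Compress}$ recurses, with $H_i = H[U_i \cup X]$ and $X$ the separator returned on line~6. By Claim~\ref{algo2clm2}, $|V(H_i)| < |V(H)|$, so the induction hypothesis gives $N(H_i) \leq (|V(H)|-1) \rho(H_i)^2$, and hence
\[
N(H) \leq 1 + (|V(H)|-1) \sum_{i=1}^q \rho(H_i)^2.
\]
It therefore suffices to prove $\sum_{i=1}^q \rho(H_i)^2 \leq \rho(H)^2$. The case $q=1$ is immediate from monotonicity of $\rho$ under induced subhypergraphs: $V(H_1) \subseteq V(H)$ gives $\rho(H_1) \leq \rho(H)$.

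The main case is $q \geq 2$, where I will apply Claim~\ref{algo2clm3} with $a_i := \rho(U_i)$ and $k := \rho(X)$. Its hypothesis $a_i \geq 3k$ follows from Theorem~\ref{fptsep} (which guarantees $\rho(X) \leq \alpha(k,d)$) combined with the line-7 filter $\rho(U_i \cup X) > 4\alpha(k,d)$: together these force $\rho(U_i) > 3\alpha(k,d) \geq 3\rho(X)$. Using $\rho(H_i) \leq \rho(U_i) + \rho(X)$, obtained by concatenating a cover of $U_i$ with a cover of $X$, Claim~\ref{algo2clm3} then yields
\[
\sum_{i=1}^q \rho(H_i)^2 \;\leq\; \sum_{i=1}^q (\rho(U_i)+\rho(X))^2 \;<\; \Bigl(\sum_{i=1}^q \rho(U_i)\Bigr)^2.
\]

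The main obstacle is the identity $\sum_{i=1}^q \rho(U_i) = \rho(\bigcup_i U_i)$, which upgrades the strict super-additive bound above into the desired $\leq \rho(H)^2$ via $\bigcup_i U_i \subseteq V(H)$ and monotonicity. This identity rests on the observation that no edge of $H$ can meet two distinct $U_j$'s: if $e \in E(H)$ intersected both $U_j$ and $U_{j'}$ for $j \neq j'$, then $e \setminus X$ would connect two supposedly distinct connected components of $H \setminus X$, a contradiction. Consequently any cover of $\bigcup_i U_i$ partitions into disjoint covers of the individual $U_i$, establishing the equality. Putting everything together gives $N(H) \leq 1 + (|V(H)|-1)\rho(H)^2 \leq \rho(H)^2 \cdot |V(H)|$, closing the induction.
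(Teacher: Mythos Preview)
Your proof is correct and follows essentially the same approach as the paper: both arguments hinge on the inequality $\sum_{i=1}^q \rho(H_i)^2 \leq \rho(H)^2$, established via Claim~\ref{algo2clm3} together with the additivity $\sum_i \rho(U_i) = \rho(\bigcup_i U_i)$ (no edge meets two components of $H\setminus X$). The only structural difference is organisational: the paper bounds the recursion tree as (height)$\times$(leaves), proving the leaf count is at most $\rho(H)^2$, whereas you fold both factors into a single induction on $|V(H)|$; one very minor point is that your invocation of Claim~\ref{algo2clm3} with $k=\rho(X)$ tacitly assumes $\rho(X)\geq 1$, but the case $X=\emptyset$ is trivial anyway since then $\sum_i \rho(U_i)^2 \leq (\sum_i \rho(U_i))^2$ holds directly.
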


\begin{claimproof}
The recursive calls of $\mathit{Compress}$ can be naturally organized into a rooted \emph{recursion tree}.
The initial application corresponds to the root of the tree. The children of the given 
application correspond to the recursive calls made directly within this application.
The applications that do not invoke any new recursive applications correspond to the 
leaves of the tree. It is immediate from Claim \ref{algo2clm2} that the height of the tree is
at most $|V(H)|$. Thus, it remains to verify that the number of leaves is at most $\rho(H)^2$. 

We apply induction on $|V(H)|$. If $|V(H)| \leq 4\alpha(k,d)+1$ then the function does not apply itself
recursively. Indeed, by Claim \ref{algo2clm2}, the number of vertices in any purported recursive
application would be at most $4\alpha(k,d)$ and hence the same applies on the edge cover number of
the graph. But this contradicts the last statement of Claim \ref{algo2clm2}. 
Hence, in this case there is only one leaf. 
If $\mathit{Compress}(H,k,(T,{\bf B}),W)$ makes only one recursive application then the
subtree corresponding to this application has the same leaves as the tree
corresponding to $\mathit{Compress}(H,k,(T,{\bf B}),W)$. Hence, the statement holds
by the induction assumption. 

Assume now that $\mathit{Compress}(H,k,(T,{\bf B}),W)$ makes $q \geq 2$ recursive
applications. This means that \textbf{Reject} is not returned at Line 4.
Let $X$ be the set computed at Line 6 and let $U_1, \dots U_q$ be the sets
computed at Line 7. By construction and the induction assumption,
the root of the recursion tree has $q$ children and the respective numbers
of leaves in the subtrees rooted by these children are at most
$\rho(V_i \cup X)^2$ for $1 \leq x \leq q$. We need to demonstrate that 
$\sum_{i=1}^q  \rho(U_i \cup X)^2 \leq \rho(H)^2$. 
By Line 7, for each $1 \leq i \leq q$, $\rho(U_i \cup X)>4\alpha(k,d)$
and, by Theorem \ref{fptsep}, $\rho(X) \leq \alpha(k,d)$. 
It follows that $\rho(U_i)>3\alpha(k,d)$. 
Hence, 
$\sum_{i=1}^q \rho(U_i \cup X)^2  \leq 
\sum_{i=1}^q (\rho(U_i)+\alpha(k))^2 \leq
(\sum_{i=1}^q \rho(U_i) )^2=\rho(\bigcup_{i=1}^q U_i))^2 \leq \rho(H)^2$,
the second inequality follows from Claim \ref{algo2clm3},
for the first equality observe that distinct $U_i,U_j$
do not intersect with the same hyperedge by the properties
of tree decompositions.
\end{claimproof}

\begin{claim} \label{algo2clm5}
$\mathit{Compress}(H,k,(T,{\bf B}),W)$  takes \fpt time
parameterized by $k$ and $d$. 
\end{claim}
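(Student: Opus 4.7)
\begin{claimproof}
The plan is to combine the polynomial bound on the number of recursive applications provided by Claim \ref{algo2clm4} with an \fpt bound on the non-recursive work performed in a single invocation of $\mathit{Compress}$. Formally, let $T_{\mathrm{single}}(H,k,d)$ denote the time used by one call of $\mathit{Compress}(H,k,(T,{\bf B}),W)$ excluding the time spent inside its recursive sub-calls. Since the recursion tree has at most $\rho(H)^2 \cdot |V(H)| \leq \mathit{poly}(|H|)$ nodes by Claim \ref{algo2clm4}, and each node contributes $T_{\mathrm{single}}(H',k,d)$ work on some induced subhypergraph $H'$, the total runtime is bounded by $\mathit{poly}(|H|) \cdot \max_{H'} T_{\mathrm{single}}(H',k,d)$. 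It therefore suffices to show $T_{\mathrm{single}}$ is \fpt in $k$ and $d$.

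First I would bound the cost of the two main non-trivial steps of a single invocation. Line 1 constructs a $\rho$-stable cover $E_W$ of cardinality $3\alpha(k,d)+1$: by \Cref{fptsetcov} we can in \fpt time decide and produce any cover of $W$ of size at most $3\alpha(k,d)+1$, and by incrementally adding vertices from outside the current cover to $W$ we may inflate a minimum cover up to the required cardinality (using $\rho$-stability at each step). For Line 2, the set $E_W$ has \fpt-bounded cardinality $3\alpha(k,d)+1$, so the number of weak partitions of $E_W$ into three parts is at most $3^{3\alpha(k,d)+1}$, which is a function of $k$ and $d$ only. For each such partition we check the size conditions $\rho(E_1),\rho(E_2) \leq 2\alpha(k,d)$ and $\rho(E_0) \leq k$ in \fpt time via \Cref{fptsetcov}, and we run $\mathit{AppSep}$, which is \fpt by \Cref{fptsep} (equivalently Theorem \ref{genfptsep}). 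Line 7 requires enumerating connected components of $H\setminus X$ (polynomial) and discarding those with $\rho(C \cup X) \leq 4\alpha(k,d)$, again via \Cref{fptsetcov} applied at most $|V(H)|$ times, each call \fpt. All remaining operations (construction of $T^*$, $W_i$, ${\bf B}_i$, gluing at Lines 18--22) are straightforward polynomial-time set manipulations.

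Combining these observations, $T_{\mathrm{single}}(H,k,d) \leq f(k,d) \cdot \mathit{poly}(|H|)$ for some computable function $f$. Multiplying by the $\mathit{poly}(|H|)$ recursion-tree size from Claim \ref{algo2clm4} yields a total runtime of $f(k,d) \cdot \mathit{poly}(|H|)$, which is exactly the \fpt bound we need. The main conceptual obstacle was already resolved by Claim \ref{algo2clm4}, which ensures that the recursion does not blow up exponentially; with that in hand, the present claim reduces to the routine composition of the \fpt subroutines invoked in a single call.
\end{claimproof}
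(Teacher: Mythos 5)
Your proof is correct and follows essentially the same approach as the paper: you invoke Claim~\ref{algo2clm4} to bound the number of recursive applications polynomially, then argue that a single invocation runs in \fpt time using \Cref{fptsetcov} for line~1 (and the component filtering) and \Cref{fptsep} for line~2. The paper's own proof is terser but makes the same decomposition and appeals to the same subroutines.
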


\begin{claimproof}
In light of Claim \ref{algo2clm4}, we only need to prove
this claim for a single recursive application of $\mathit{Compress}$. 
To compute Line 1, gradually add vertices to $W$ 
until it becomes of edge cover number $3\alpha(k,d)+1$.
The fixed-parameter tractabilty of such a procedure follows
from Proposition \ref{fptsetcov}.
The fixed-parameter tractability of the remaining part of the
procedure follows from Proposition \ref{fptsep} and 
a straightforward analysis of pseudocode.
\end{claimproof}

\begin{claim} \label{algo2clm6}
If $\mathit{Compress}(H,k,(T,{\bf B}),W)$ returns \textbf{Reject}
then $ghw(H)>k$. 
\end{claim}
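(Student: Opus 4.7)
The plan is to prove this by strong induction on $|V(H)|$, using Claim~\ref{algo2clm2} which guarantees that every recursive call of $\mathit{Compress}$ is made on an induced subhypergraph with strictly fewer vertices. The algorithm can return \Reject in only two places: on line~4 (when no suitable weak partition of $E_W$ exists), or on line~15 (propagated from a rejecting recursive call).

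For rejection propagated from line~15, I would argue directly. The recursive call is made on $H_i = H[U_i \cup X]$, an induced subhypergraph of $H$ with $|V(H_i)| < |V(H)|$, so by the induction hypothesis $\ghw(H_i) > k$. Since $\ghw$ is monotone under induced subhypergraphs, $\ghw(H) \geq \ghw(H_i) > k$, as required.

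For rejection at line~4, I would argue contrapositively: assume $\ghw(H) \leq k$ and exhibit a weak partition of $E_W$ that satisfies every condition of line~2, contradicting the failure of the search. To that end, apply \Cref{balancedsep} to $E' = E_W$ (well-defined and $\rho$-stable by line~1). This yields a weak partition $E^*_0, E^*_1, E^*_2$ of $E_W$ together with an $(\bigcup E^*_1, \bigcup E^*_2)$-separator $S$ satisfying $\bigcup E^*_0 \subseteq S$, $\rho(S) \leq k$, and $|E^*_i| \leq \frac{2}{3}|E_W|$ for $i \in \{1,2\}$. Since $|E_W| = 3\alpha(k,d)+1$ and sizes are integers, $|E^*_i| \leq 2\alpha(k,d)$, hence $\rho(\bigcup E^*_i) \leq |E^*_i| \leq 2\alpha(k,d)$; and $\rho(\bigcup E^*_0) \leq \rho(S) \leq k$. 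Finally, since $S$ has edge cover number at most $k$ and we have assumed $\ghw(H) \leq k$, \Cref{fptsep} implies that the call to $\mathit{AppSep}$ on $(\bigcup E^*_1, \bigcup E^*_2)$ cannot return \Reject (its reject condition requires either no such separator to exist or $\ghw(H) > k$, and neither holds). Hence $(E^*_0, E^*_1, E^*_2)$ is a valid partition that the search on line~2 would find, a contradiction.

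The only real obstacle is matching up the interfaces carefully: the output of \Cref{balancedsep} must align with the preconditions of line~2 (the mild integer rounding from $\frac{2}{3}(3\alpha(k,d)+1)$ to $2\alpha(k,d)$ being the one subtle step), and the \Reject semantics of \Cref{fptsep} must be invoked correctly under the contrapositive assumption $\ghw(H) \leq k$. Neither is technically deep, but both require care.
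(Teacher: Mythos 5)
Your proof is correct and takes essentially the same approach as the paper: induction on $|V(H)|$ with Claim~\ref{algo2clm2} justifying the inductive step, direct propagation via monotonicity of \ghw for the line~15 case, and \Cref{balancedsep} plus the reject semantics of \Cref{fptsep} for the line~4 case. The only minor point worth flagging is the reading of the condition $\rho(E_0) \leq k$ in line~2: the paper's own proof invokes $\rho$-stability of $E_W$ specifically to conclude $|E'_0| \leq k$, so that the restriction on $E_0$ built into the line-2 search does not discard the partition guaranteed by \Cref{balancedsep}; your bound $\rho(\bigcup E^*_0) \leq \rho(S) \leq k$ together with the $\rho$-stability of $E^*_0$ (which you note is inherited from $E_W$) yields the same conclusion, so this is a matter of emphasis rather than a gap.
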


\begin{claimproof}
We are going to verify the claim for the \textbf{Reject}
returned at Line 4. Once done, the validity of \textbf{Reject}
at Line 15 will be straightforward: by the induction assumption
the \fpt of an induced subgraph of $H$ is greater 
than $k$. Hence, this is true for $H$ itself. 

By Theorem \ref{fptsep}, the return of \textbf{Reject} at Line 4
means that either $AppSep$ itself detects that $ghw(H)>k$ 
or that a balanced separator, exhaustively searched in Line 2
does not exist. 
In this case, $ghw(H)>k$
follows from Theorem \ref{balancedsep}. For the latter case, 
note that, due to the $\rho$-stability of $E'$, $E'_0$ must be
of size at most $k$ so that $\bigcup E'_0$ can be a subset of 
a set with the edge cover number at most $k$. Therefore, this 
assumption about $E'_0$ does not miss 'good' partitions during 
the search in Line 2. 
\end{claimproof}

\begin{claim} \label{algo2clm7}
If $\mathit{Compress}(H,k,(T,{\bf B}),W)$ does not 
return \textbf{Reject} then it returns
$(T^*,{\bf B^*})$, a TD of $H$ of \fpt
at most $4\alpha(k,d)$ with $W$ being a subset of some
bag of $(T^*,{\bf B^*})$. 
\end{claim}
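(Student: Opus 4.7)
The plan is to prove the claim by induction on $|V(H)|$. When $\mathit{Compress}$ reaches line $23$ without rejecting, Claim \ref{algo2clm2} ensures that each recursive call $\mathit{Compress}(H_i,k,(T,{\bf B}_i),W_i)$ satisfies its input premises, so by the induction hypothesis each produces a TD $(T'_i,{\bf B'}_i)$ of $H_i = H[U_i \cup X]$ of \ghw at most $4\alpha(k,d)$ with $W_i \subseteq {\bf B'}_i(t_i)$. It then remains to verify, for $(T^*,{\bf B^*})$, the width bound, the containment property, and the connectedness condition, since the inclusion $W \subseteq {\bf B^*}(r)$ is immediate from line $8$.

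The width bound is direct. The root bag satisfies $\rho(W \cup X) \leq \rho(W) + \rho(X) \leq 3\alpha(k,d) + \alpha(k,d) = 4\alpha(k,d)$, using the input premise on $\rho(W)$ together with the guarantee $\rho(X) \leq (3k+d+1)(2k-1)k = \alpha(k,d)$ from $\mathit{AppSep}$ (invoked with $k_0 = k$, by Theorem \ref{fptsep}); light-component bags satisfy the bound by the defining condition on line $20$; heavy-component bags inherit it from the induction hypothesis.

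For the containment property, I would use the separator role of $X$: any edge $e \in E(H)$ either lies entirely in $X$, in which case $e \subseteq {\bf B^*}(r)$, or else $e \setminus X$ is non-empty and, being itself an edge of $H \setminus X$, keeps all its vertices inside a single component $C$ of $H \setminus X$. Then $e \subseteq C \cup X$, which is covered either by the light-component bag $C \cup X$ or, for heavy $C = U_i$, by a bag of $(T'_i,{\bf B'}_i)$, since $e$ is then an edge of $H_i$.

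The main obstacle is the connectedness condition, particularly for vertices of $X$: such a $v$ appears in the root bag, in every light-component bag, and in ${\bf B'}_i(t_i)$ for every heavy $U_i$ (as $X \subseteq W_i$). I would show that the set of nodes of $T^*$ containing $v$ forms a subtree by exploiting that $r$ is adjacent in $T^*$ to each $t_i$ and to each light-component node by construction, while within each $T'_i$ the nodes containing $v$ already form a subtree through $t_i$ by the induction hypothesis. For a vertex $v \in U_i$, non-membership in $V(H_j)$ for $j \neq i$ and in any other component confines the bags containing $v$ to $T'_i$; the subtle subcase $v \in W \cap U_i$ requires observing that $v \in W \subseteq \bigcup E_W$ forces $v \in (\bigcup E_W \cap U_i) \cup X = W_i$, so $v \in {\bf B'}_i(t_i)$, which keeps the pair $\{r, t_i\}$ consistent with the subtree structure inside $T'_i$ via the $r$--$t_i$ edge.
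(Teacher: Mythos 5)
Your proposal is correct and follows essentially the same route as the paper's proof: same induction on $|V(H)|$, same width bound, same containment argument via components of $H\setminus X$, and the same key observation for connectedness — that the requirement $W_i\subseteq \mathbf{B'}_i(t_i)$ (together with $X\subseteq W_i$ and $W\subseteq\bigcup E_W$) forces any vertex of $W\cup X$ shared across parts to also lie in the bag of the gluing node $t_i$, which is what keeps the glued decomposition connected. The paper organises the connectedness argument slightly differently (it first dispatches the base case where $T^*$ is a star and then treats the general case by partitioning $V(T^*)$ into a star and the subtrees $T'_i$), while you argue directly by classifying the shared vertex $v$ as lying in $X$, in a heavy $U_i$, or in a light component, but the two arguments are the same in substance.
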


\begin{claimproof}
First, we observe that $T^*$ is a tree by induction on $|V(H)|$.
If $\mathit{Compress}$ does not run recursively then $T^*$ is a star by construction. 
Otherwise, the node $r$ created at Line 8 is connected to exactly one node
of each tree (by the induction assumption) created by the recursive calls.
Clearly, the resulting graph is a tree. 

By Theorem \ref{fptsep}, $\rho(X) \leq \alpha(k,d)$ and
$\rho(W) \leq 3\alpha(k,d)$ by assumptions about the input.
Hence, considering Lines 8 and 22 and applying induction,
we observe that ${\bf B^*}: V(T^*) \rightarrow 2^H$ 
with the edge cover number of each bag at most $4\alpha(k,d)$ 
and $W \subseteq {\bf B^*}(r)$. Moreover, the induction application
of the last statement implies existence of the nodes $t_i$ as specified
in Line 18. 

It remains to verify that $(T^*,{\bf B^*})$ satisfies the containment
and connectedness properties. 
For the containment, we note that each $e \in E(H)$ is a hyperedge
of some $H[V' \cup X]$ where $V'$ is a connected component of $H \setminus X$. 
If $V'$ is a connected component considered in the loop of Line 20 then
$e$ is a subset of the corresponding bag by construction. 
Otherwise, $\mathit{Compress}$ applies itself recursively to a tuple where $H[V' \cup X]$
is the first parameter and hence the containment for $e$ follows from the induction
assumption. 

For the connectedness, assume first that
$\mathit{Compress}(H,k,(T,{\bf B}),W)$ does not apply itself recursively.
Then $T^*$ is a star and the only vertices that may appear in more than 
one bag are those of $W \cup X$, belonging to the bag of the centre of the star. 

Otherwise, $V(T)$ can be partitioned into the star as in the previous paragraph
and the trees $T'_i$ created in Line 17. As we have already observed the connectednes
of ${\bf B^*}$ restricted to the star is preserved and the connectedness of the 
restriction to each $T'_i$ holds by the induction assumption. 
It only remains to verify total connectedness for a vertex $u$ appearing
in more than one part of $T^*$ as specified. 
By construction, this vertex $u$ must be a subset of $W \cup X$. 
Then $u \in {\bf B^*}(r)$ and also must be an element of the bag of each $t_i$
whenever $u \in U_i \cup X$ \footnote{This is why the requirement of
$W$ being a subset of some bag is needed.}. It follows that the paths of bags containing $u$
in the individual $T'_i$ are all connected to $r$ (and if $u$ is in any of the bags
created during the loop of Line 20, the corresponding nodes are also connected
to $r$). Thus, the total connectivity is established. 
\end{claimproof}

The theorem follows from Claims \ref{algo2clm5}, \ref{algo2clm6}, and 
\ref{algo2clm7}.
\end{proof}
\section{Proof of Theorem \ref{fptsep}}
We restate the theorem here for the sake of convenience.
\RESTATEfptsep*

This section consists of two subsections.
In the first one we prove auxiliary claims.
The actual proof of Theorem \ref{fptsep}
is located in the second subsection.

We refer to Algorithm \ref{alg:sepgen}
for the pseudocode of $\mathit{AppSep}$ and 
to Algorithm \ref{alg:smallsep} for the pseudocode 
of $\mathit{SmallSep}$. 

\subsection{Auxiliary statements}
\begin{lemma} \label{numrecur}
For integers $n \geq 1, k \geq 1$ and some growing 
function $g$ with $g(1) \geq 1$, let $T(n,k)$ be defined 
as follows 
\begin{enumerate}
\item For $n \leq 2$, $T(n,k)=1$. 
\item Otherwise, for $k=1$, $T(n,k) \leq 2T(3n/4,k)$
(note that when we use $3n/4$ we actually mean $\lfloor 3n/4 \rfloor$).
\item Otherwise, $T(n,k) \leq 2T(3n/4,k)+g(k) \cdot T(3n/4,k-1)$.
\end{enumerate}

Then $T(n,k) \leq f(k) \cdot n^3$ where 
$f(1)=1$ and for $k>1$, $f(k)=2.7^{k-1} \cdot \prod_{i=2}^k g(i)$. 
\end{lemma}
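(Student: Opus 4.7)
The plan is a straightforward double induction on $k$ (outer) and $n$ (inner), driven by the key numerical fact that $2\cdot (3/4)^3 = 27/32 < 1$. This slack in the pure divide-and-conquer part of the recurrence can be used to absorb the extra term $g(k)\,T(3n/4,k-1)$ whenever $k\geq 2$, provided $f(k)$ grows by exactly the right factor over $f(k-1)$. The closed form $f(k) = 2.7^{k-1}\prod_{i=2}^{k} g(i)$ is reverse-engineered from this requirement.

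For the base case $k=1$, I would induct on $n$. When $n\leq 2$ the bound $T(n,1)=1\leq n^3 = f(1)\,n^3$ is immediate. For $n\geq 3$, the recurrence and the inner induction give $T(n,1)\leq 2\,T(\lfloor 3n/4\rfloor,1) \leq 2\,(3n/4)^3 = (27/32)\,n^3 \leq n^3$.

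For the inductive step on $k\geq 2$, I assume the claim for $k-1$. For $n\leq 2$ the base case gives $T(n,k)=1\leq f(k)\,n^3$, using $f(k)\geq 1$; this last inequality holds because $f(1)=1$ and each factor $2.7\,g(i)\geq 1$ (since $g$ is growing with $g(1)\geq 1$). For $n\geq 3$, applying the recurrence, the inner induction on $n$ for $T(\cdot,k)$, and the outer induction on $k$ for $T(\cdot,k-1)$ gives
$T(n,k) \leq 2\,f(k)\,(3n/4)^3 + g(k)\,f(k-1)\,(3n/4)^3 = \frac{27}{64}\,n^3\,\bigl(2\,f(k) + g(k)\,f(k-1)\bigr).$
The desired bound $T(n,k)\leq f(k)\,n^3$ then reduces to $54\,f(k) + 27\,g(k)\,f(k-1) \leq 64\,f(k)$, i.e.\ $f(k) \geq 2.7\,g(k)\,f(k-1)$. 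The chosen $f(k)$ satisfies this with equality by construction, closing the induction.

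The \emph{main obstacle} is essentially non-existent: this is a routine divide-and-conquer recurrence and the proof reduces to one algebraic check. The only minor technical care required is that the recurrence is really taken on $\lfloor 3n/4\rfloor$ rather than $3n/4$; since $\lfloor 3n/4\rfloor \leq 3n/4$ every upper-bound estimate is unaffected, and since $\lfloor 3n/4\rfloor < n$ for $n\geq 3$ the inner induction always reaches the base case $n\leq 2$.
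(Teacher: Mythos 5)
Your proof is correct and follows essentially the same route as the paper's: a (double) induction that reduces to the algebraic identity $54\,f(k) + 27\,g(k)\,f(k-1) = 64\,f(k)$, equivalently $f(k) = 2.7\,g(k)\,f(k-1)$, which the closed form for $f$ satisfies by construction. The only cosmetic difference is that you frame this as an inequality check while the paper displays it as an equality, and you make the $\lfloor 3n/4\rfloor$ and $f(k)\geq 1$ points a bit more explicit.
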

\begin{proof}
Assume first that $k=1$. 
Then the statement holds for $n \leq 2$.
Otherwise, by the induction assumption, 
$T(n,1) \leq 2 \cdot (3n/4)^3=54/64n^3<n^3$ as required. 

Assume now that $k>1$.
If $n \leq 2$ then the statement is immediate by the properties 
of the function $g$.
Otherwise, assume by induction that the statement
holds for all $(n_0,k_0)$ with $n_0<n$, $k_0 \leq k$. 
Observe that for $k>1$,

\begin{equation} \label{runeq}
f(k)=54/64 f(k)+27/64 \cdot g(k) \cdot f(k-1)
\end{equation}

Indeed, this is the same as to say 
that
$10/64f(k)=27/64 \cdot g(k) \cdot f(k-1)$ or that
$f(k)=2.7 \cdot g(k) \cdot f(k-1)$.
It is easy to observe, by induction on $k$
that the last formulation for $f(k)$ is equivalent to
the one provided in the statement of the lemma. 

Now, by the induction assumption, 
$T(n,k) \leq 2f(k) (3n/4)^3+g(k) \cdot f(k-1) \cdot (3n/4)^3=
  54/64 f(k) n^3+27/64 \cdot g(k) \cdot f(k-1) \cdot n^3=
  (54/64\cdot f(k)+27/64 \cdot g(k) \cdot f(k-1)) n^3=f(k) n^3$,
the last equality follows from \eqref{runeq}.
\end{proof}

\RESTATEcomposedsep*
\begin{proof}
For the sake of brevity, let us denote $W_1 \cup W_2 \cup W$ by 
$W^*$. 
\begin{claim} \label{clm222}
$W^*$ is a $(C_1,C_2)$-separator of $H$. 
\end{claim}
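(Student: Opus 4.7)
The plan is to argue by contradiction: assume there is a path $P : c_1 = v_0, v_1, \dots, v_\ell = c_2$ in $H[V(H) \setminus W^*]$ with $c_1 \in C_1 \setminus W^*$ and $c_2 \in C_2 \setminus W^*$, and take one of shortest length. The first step is to observe that every internal vertex $v_j$ (with $1 \le j \le \ell-1$) must lie outside $Y$: such a vertex avoids $W^*$, hence avoids $W \subseteq W^*$, so if it were in $Y$ it would belong to $C_1 \cup C_2$ (the weak partition of $Y \setminus W$), which would contradict the minimality of $P$.

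Next, I would establish the key structural fact that each hyperedge of $H$ is contained entirely in $V_1$ or entirely in $V_2$. Indeed, if some $e \in E(H)$ had vertices $u \in V_1 \setminus V_2$ and $v \in V_2 \setminus V_1$, then $\{u,v\} \subseteq e \setminus Y$ would form a path in $H[V(H) \setminus Y]$ from $V_1 \setminus Y$ to $V_2 \setminus Y$, contradicting the assumption that $Y$ is a $(V_1,V_2)$-separator.

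With this in hand, I would show by induction along $P$ that the entire path lies on one side, say in $V_1$. Since the internal vertices avoid $Y = V_1 \cap V_2$, each such vertex lies in exactly one of $V_1 \setminus V_2$ or $V_2 \setminus V_1$; since every edge of $H$ is confined to $V_1$ or $V_2$, once the path commits to one side at $v_1$, every subsequent edge must remain on that same side, otherwise two consecutive internal vertices would force an edge to meet both $V_1 \setminus V_2$ and $V_2 \setminus V_1$. Without loss of generality, all vertices of $P$ lie in $V_1$, and each consecutive pair is covered by an edge contained in $V_1$.

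Finally, I would lift $P$ to $H_1 \setminus W_1$ to derive the desired contradiction. All vertices of $P$ lie in $V_1 \setminus W^* \subseteq V_1 \setminus W = V(H_1)$, and also avoid $W_1 \subseteq W^*$, so they belong to $V(H_1) \setminus W_1$. Each consecutive pair on $P$ is contained in some $e \in E(H)$ with $e \subseteq V_1$; the induced edge $e \cap (V(H_1) \setminus W_1)$ still contains that pair, so $P$ is a valid path in $H_1 \setminus W_1$. Since $c_1 \in C_1 \subseteq A_1$ and $c_2 \in C_2 \subseteq B_1$, this contradicts the assumption that $W_1$ is an $(A_1,B_1)$-separator of $H_1$. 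The main subtlety is the inductive argument that forces $P$ onto a single side; once that is in place, the rest is essentially bookkeeping about which set-differences preserve the relevant memberships.
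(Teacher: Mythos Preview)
Your proof is correct and follows essentially the same approach as the paper's: take a shortest $C_1$--$C_2$ path in $H \setminus W^*$, observe that its internal vertices avoid $Y$ by minimality, use the $(V_1,V_2)$-separator property of $Y$ to conclude the path lies entirely on one side, and then contradict $W_1$ (or $W_2$) being an $(A_1,B_1)$-separator of $H_1$. Your version is more explicit about why each hyperedge is confined to one $V_i$ and about the lifting to $H_1 \setminus W_1$, but the underlying argument is identical.
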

\begin{claimproof}
Assume the opposite and let $P$ be the shortest
path between $C_1$ and $C_2$ in $H \setminus W^*$. 
Then the first vertex of $P$ belongs to $C_1$,
the last to $C_2$ and no vertex in the middle belongs
to $Y$ (because each vertex of $Y \setminus W$ is either of $C_1$ 
or of $C_2$ and hence a shorter path can be obtained). 
By definition of $Y$ as an $(V_1,V_2)$-separator, this means
that the rest of the vertices are either all in $V_1$ or all 
in $V_2$. In the former case, $P$ is a $C_1,C_2$ path 
of $H_1 \setminus (W_1 \cup W_2)$, in the latter case,
$P$ is a $C_1,C_2$-path of $H_2 \setminus (W_1 \cup W_2)$. 
In both cases, we get a contradiction to the definition of
either $W_1$ or $W_2$. 
\end{claimproof}

Assume that the lemma is not correct and let
$P$ be an $A,B$-path in $H \setminus W^*$. 
If $P$ lies completely in $H_1$ then $P$ is an
$A_1,B_1$-path in $H_1 \setminus W_1$, a contradiction. 
If $P$ lies completely in $H_2$ then $P$ is an
$A_2,B_2$-path of $H_2 \setminus W_2$, again a contradiction.
It follows that $P$ must include vertices of $Y$.

Let $a_1$ and $a_2$ be the first and the last vertices of $Y$
in $P$ (that might be the same vertex). 
Consider $P_1$, the prefix of $P$ ending at $a_1$. 
As $P_1$ does not contain vertices of $Y$ but $a_1$,
$P_1$ either wholly lies in $H_1$ or wholly lies in $H_2$. 
In the former case, $P_1$ is an $A_1,Y$-path in $H_1 \setminus W_1$.
By definition of $W_1$, $a_1 \in C_1$. 
In the latter case, $P_1$ is an $A_2,Y$-path and again ends up in $C_1$
by definition of $W_2$. 
Consider $P_2$, the suffix of $P$ starting at $a_2$.
Arguing symmetrically and using $B$ instead of $A$, we observe
that $a_2 \in C_2$. Thus, $P$ contains a subpath which is a $C_1,C_2$-path
contradicting Claim \ref{clm222}. 
\end{proof}

In order to proceed we need the definition of $T_{t,Y}$
and the related notions provided in Section \ref{sec:pseudo}

\RESTATEvalidtw*
\begin{proof}
The theorem consists of four statements. 
For simplicity of referencing, we enumerate 
them in the list below. 
\begin{enumerate}
\item $(T_{t,Y},{\bf B}_{t,Y})$ is a \textsc{td} of $H_{t,Y}$.
\item $(T^+_{t,Y},{\bf B}^+_{t,Y})$ is a \textsc{td} of $H$. 
\item $(T,{\bf B}^{-W})$ is a \textsc{td} of $H \setminus W$.
\item All the vertices of $V(T_{t,Y}) \setminus X_{t,Y}$
are leaves of $T_{t,Y}$.
\end{enumerate}

For the first statement, 
note that, due to the connectedness
of $T_{t,Y}$, for any two vertices of $T_{t,Y}$, the path in $T$
between them lies in $T_{t,Y}$.
Hence, the connectedness of $(T_{t,Y},{\bf B}_{t,Y})$ is immediate

For the containment, assume that $e \in E(H_{t,Y})$ 
is not a subset of any ${\bf B}_{t,Y}(t')$ for $t' \in V(T_{t,Y})$. 
By the containment property of $(T,{\bf B})$, there is  
$t_0 \in V(T)$ such that $e \subseteq {\bf B}(t_0)$.

Let $u \in e \setminus {\bf B}_{t,Y}(t)$, 
Then there is, $t_1 \in V(T_{t,Y})$
such that $u \in {\bf B}_{t,Y}(t_1)={\bf B}(t_1)$.
The path in $T$ between $t_1$ and $t_0$ goes through
$t$ in contradiction to the connectedness of $(T,{\bf B})$.
This proves the first statement. 

For the second statement, we note that 
each bag of $(T,{\bf B})$ is a subset of a bag of
$(T^+_{t,Y},B^+_{t,Y})$, hence the containment 
property is immediate.

For the connectedness property, 
let $t_1, t_2 \in V(T_{t,Y}^+)$ and suppose that there is
$u \in V(H)$ such that $u \in {\bf B}^+_{t,Y}(t_1) \cap
{\bf B}^+_{t,Y}(t_2)$. 
We need to show that $u$ belongs to all of the bags 
of the path between $t_1$ and $t_2$. We assume that the path
contains intermediate vertces for otherwise the 
statement is immediate.  
If both $t_1$ and $t_2$ are the nodes of $T_{t,Y}$
then we apply the same argument as in the 
connectedness proof for the first statement.
Otherwise, say $t_2$ is the vertex added to $T_{t,Y}$ 
to obtain $T^+_{t,Y}$. 
This means that there is $t_3 \in V(T) \setminus V(T_{t,Y})$ 
such that $u \in {\bf B}(t_3)$. 
Let $P$ be the path between $t_1$ and $t_3$ in $T$.
Then the intermediate vertices $t_0$ of the path between $t_1$ and 
$t_2$ in $T^+_{u,Y}$ form a subpath of $P$.
As ${\bf B}^+_{t,Y}(t_0)={\bf B}(t_0)$, $u \in {\bf B}^+_{t,Y}(t_0)$
due to the connectedness of $(T,{\bf B})$. This 
completes the second statement. 

For the third statement, we simply note that
the \textsc{td} properties of $(T,{\bf B}^{-W})$ 
straightforwardly follow from the \textsc{td}
properties of $(T,{\bf B})$. 

For the last statement, we observe that
$V(T_{t,Y}) \setminus X_{t,Y} =
 V(T_{t,Y}) \setminus X \subseteq V(T) \setminus X$. 
Thus any $t_0 \in V(T_{t,Y}) \setminus X_{t,Y}$ and, clearly,
its degree does not increase in $T_{t,Y}$. 
\end{proof}

\begin{lemma} \label{treesplit1}
Let $T$ be a tree of at least three vertices. 
Then there is a non-leaf vertex $x \in V(T)$ 
and a partition $Y_1,Y_2$ of $N(x)$ 
so that for each $i \in \{1,2\}$, $|V(T_{x,Y_i})| \leq 3/4 \cdot |V(T)|$.
Moreover, $x,Y_1,Y_2$ can be efficiently computed. 
\end{lemma}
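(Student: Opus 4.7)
The plan is to locate a centroid-style vertex $x$ and then partition its incident subtrees via a prefix-sum argument. Let $n = |V(T)| \geq 3$.

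First, I locate a non-leaf $x \in V(T)$ such that every connected component of $T - x$ has at most $\lfloor n/2 \rfloor$ vertices (the standard tree centroid). This can be computed in polynomial time by iterative descent: starting from any initial vertex, while some component of $T - v$ at the current vertex $v$ has more than $n/2$ vertices, move $v$ to the unique neighbor lying in that component. Moving from $v$ to $u$ through an edge $vu$ leaves the $v$-side of $T - u$ with size strictly less than $n/2$, so no vertex is revisited and the procedure halts within $n$ steps. For $n \geq 3$, the resulting $x$ is not a leaf, since otherwise $T - x$ would consist of a single component of size $n - 1 > \lfloor n/2 \rfloor$, contradicting the stopping condition.

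Second, let $y_1, \ldots, y_d$ (with $d = \deg_T(x) \geq 2$) be the neighbors of $x$ and let $s_i$ denote the size of the component of $T - x$ containing $y_i$, so that $\sum_i s_i = n - 1$ and each $s_i \leq \lfloor n/2 \rfloor$. Since $|V(T_{x, Y})| = 1 + \sum_{y_i \in Y} s_i$, it suffices to partition $\{y_1, \ldots, y_d\}$ into $Y_1, Y_2$ with each corresponding sum of $s_i$'s at most $3n/4 - 1$. Assume the indexing is such that $s_1 \geq s_2 \geq \cdots \geq s_d$. If $s_1 \geq n/4$, then setting $Y_1 = \{y_1\}$ and $Y_2 = \{y_2, \ldots, y_d\}$ works: one checks $s_1 \leq \lfloor n/2 \rfloor \leq 3n/4 - 1$ (directly verifying the small cases $n \in \{3, 4\}$), and $n - 1 - s_1 \leq n - 1 - n/4 = 3n/4 - 1$. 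Otherwise every $s_i < n/4$, which forces $n \geq 4$; let $k^*$ be the smallest index with $\sum_{i=1}^{k^*} s_i \geq n/4$, which exists since the total $n - 1 \geq n/4$. By minimality of $k^*$ and the bound $s_{k^*} < n/4$, we obtain $\sum_{i=1}^{k^*} s_i < n/4 + s_{k^*} < n/2 \leq 3n/4 - 1$, while the complementary sum is at most $n - 1 - n/4 = 3n/4 - 1$.

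The main obstacle I anticipate is verifying the arithmetic cleanly across floor/ceiling boundaries for small $n$, but both parts of the construction are straightforward and run in polynomial time: the centroid is found with $O(n)$ subtree-size updates per descent step, and the partitioning reduces to one sort plus a linear scan.
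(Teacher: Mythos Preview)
Your proof is correct. Both your argument and the paper's follow the same two-stage strategy: first descend to a suitably balanced vertex, then greedily split its incident subtrees by a prefix-sum argument on component sizes. The difference is purely in the thresholds chosen for the descent. The paper descends while some neighbouring subtree has more than $\tfrac{2}{3}|V(T)|$ vertices and then handles a separate ``big neighbour'' case (subtree of size $>\tfrac{1}{2}|V(T)|$) before doing the prefix-sum split with threshold $\tfrac{1}{3}(|V(T)|-1)$; this leads to several sub-cases with bounds $\tfrac{2}{3}$, $\tfrac{3}{5}$, and $\tfrac{3}{4}$ that all sit below $\tfrac{3}{4}$. You instead go straight to the standard centroid (all components of size at most $\lfloor n/2\rfloor$), which makes the subsequent partition analysis shorter and avoids the extra case distinction. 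Your version is a cleaner packaging of the same idea and yields the same $\tfrac{3}{4}$ bound.
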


\begin{proof}
\Cref{alg:subtree} provided below returns $(x,Y)$ where
$x \in V(T)$ and $Y \subseteq N(x)$. We are going to show that
$Y,N(x) \setminus Y$ is a desired partition of $N(x)$ as 
specified in the lemma. 

Throughout this proof, we make the following
notational convention. When we use $N(.)$ (without
a subscript), we mean $N_T(.)$. 
Also, $T_{x,-y}$ means $T_{x,N(x) \setminus \{y\}}$.  

\begin{algorithm}
    \DontPrintSemicolon
    \SetKwInOut{Input}{Input}\SetKwInOut{Output}{Output}
    \Input{a tree $T$ with at least three vertices}
    $i \gets 0$ \;
    Let $x_0$ be an arbitrary element of $V(T)$ \;
    $T^0 \gets T$ \;

  \While{there is $x \in N_{T_i}(x_i)$ with $|V(T_{x,-x_i})|>2/3|V(T)|$}{
     $i \gets i+1$\;
     $x_i \gets x$ \;
     $T^i \gets T_{x_i,-x_{i-1}}$ \; 
    }
  \If{there is $x \in N_{T_i}(x_i)$ with $V(T_{x,-x_i})>1/2|V(T)|$}{
        \KwRet{$(x,N(x) \setminus \{x_i\}$}\;
      }
  Let $y_1, \dots, y_q$ be the elements of $N(x_i)$ 
  listed in the non-increasing order of $|V(T_{y_j,-x_i})|$ \;
  Let $r \geq 1$ be the smallest number such that 
  $\sigma_{j=1}^r |V(T_{y_j,-x_i})|>1/3(|V(T)|-1)$ \;  
  \KwRet{$(x_i,\{y_1, \dots, y_r\})$}\;
   \caption{Computation of $(t,Y)$ as specified in \Cref{treesplit1}.}
    \label{alg:subtree}
  \end{algorithm}

\begin{claim} \label{clmtree1}
Let $x \in V(T)$, $Y \subseteq N(x)$ and $y \in Y$. 
Then $T_{y,-x}$ is a proper subtree of $T_{x,Y}$
\end{claim}
\begin{claimproof}
Just observe that any path of $T$ starting at 
$y$ and not including $x$ is also a path in 
$T_{x,Y}$.
\end{claimproof}

We conclude from Claim \ref{clmtree1} that 
for each $i$ that is not the largest value of the counter of
the main loop, $T^{i+1}$ is a proper subset of $T^i$.
This means that the loop at line $4$ stops.
Let $a$ be the last $i$ considered by the loop. 

\begin{claim}  \label{clmtree2}
Suppose that $a>0$.
Then $T_{x_{a-1},-x_a}$ has less
than $1/3|V(T)|$ vertices.  
\end{claim}
\begin{claimproof}
    Indeed, it is not hard to notice that 
$V(T^a) \cup V(T_{x_{a-1}, -x_a})=V(T)$,
and that the union is disjoint.
According to the condition of the loop
$V(T^a)$ includes more than two third of the vertices 
of $T$. 
\end{claimproof}

Let us say that $y$ is a \emph{big neighbour} of $x_a$
if $|V(T_{y,-x_a})| >1/2|V(T)|$. 
Let us call the condition of line $8$ of the algorithm
\emph{the big neighbour condition};

First let us verify that the lemma holds 
when $T$ has three vertices.
In this case $T$ is just a path $(x,y,z)$
According to the description $x_0=y$ and $a=0$.
The big neighbour condition is false.
The algorithm returns $(y,Y)$ where $Y$ contains precisely
one neighbour of $y$, say $x$.
Clearly, both $T_{y,\{x\}}$ and $T_{y,\{z\}}$ contain $2$ vertices
which is $2/3 \cdot V(T)$. So, in the rest of the proof we assume
that $|V(T)>3$. 

Assume that the big neigbhour condition is satisfied and let
$y$ be a big neighbour of $x_a$ (in fact, it is 'the neigbhour' but this 
is not relevant to the present discussion).
As the condition of the main loop is not satisfied, we conclude that 
$|V(T_{y,-x_a})| \leq 2/3|V(T)|$. 
For the other tree in question,
as $|V(T_{y,-x_a})|>1/2|V(T)|$, we conclude that 
$|V(T_{y,x_a})|<1/2|V(T)|+1$.
The maximum ratio of the right-hand  side and $|V(T)|$ is reached when 
$|V(T)|=5$ and it is $3/5$. So, we conclude that
$|V(T_{y,x_a})| \leq 3/5|V(T)|$. 
Thus the lemma holds in case of the big neighbour condition. 

It remains to assume that 
the big neighbour condition 
does not hold.
We observe that, in this case, $|V(T_{y_i,-x_a})| \leq 1/2|V(T)|$
for each $1 \leq i \leq q$. For those $y_i$ that are vertices of
$T^a$, this follows from the failure of both the loop and the big 
neighbour conditions. For $x_{a-1}$ (in case $a>0$), this follows from
Claim \ref{clmtree2}. 

Assume first that $r=1$. 
Due to the failure of the big neighbour condition, this means 
that $|V(T_{x_a,y_1})| \leq 1/2|V(T)|+1$. 
The maximum ratio of the right-hand side and $|V(T)|$ 
is reached when $|V(T)|=4$ and it is $3/4$.
So, we conclude that 
$|V(T_{x_a,y_1})| \leq 3/4|V(T)|$. 

Assume now that $r>1$.
That is $\sum_{i=1}^{r-1} |V(T_{y_i,-x_a})| \leq 1/3(|V(T)|-1)$.
Due to the ordering of $y_i$, this is an upper bound on
every individual $|V(T_{y_i,-x_a})|$. 
We conclude that
$|V(T_{x_a, \{y_1, \dots, y_r\}})| \leq 2/3(|V(T)|-1)+1$.
The ratio of the right hand part to $|V(T)|$ reaches maximum 
when $|V(T)|=4$ and it is $3/4$. We conclude that 
$|V(T_{x_a, \{y_1, \dots, y_r\}}| \leq 3/4|V(T)|$

From the above upper bound, we conclude that $r<q$.
Also, according to the algorithm, we have a lower bound
$|V(T_{x_a, \{y_1, \dots, y_r\}}| > 1/3(|(V(T)|-1)+1$ 
We conclude that $|V(T_{x_a, \{y_{r+1}, \dots, y_q\}}| <2/3(|V(T)-1)+1$
leading us to the same lower bound of $3/4|V(T)|$. 
Thus $x_a$, $\{y_1, \dots, y_r\}$ and $\{y_{r+1}, \dots, y_q\}$
satisfy the conditions of the lemma.
  
\end{proof}

\RESTATEbalancedvert*
\begin{proof}
Apply Lemma \ref{treesplit1} to $T[X]$
to obtain $(t,W_1,W_2)$ as per the statement of the lemma
that, in particular, can be efficiently computed.
We set $Y_1=W_1$ and $Y_2=N_T(t) \setminus Y_1$
and claim that $(t,Y_1,Y_2)$ is the triple claimed by the theorem. 
To prove that it is enough to demonstrate that
$X_{t,Y_i}=V(T[X]_{t,W_i})$ for each $i \in \{1,2\}$. 

This will be immediate from the following statement.
Let $Y \subseteq N_T(t)$ and let $W=Y \cap X$.
Then $X_{t,Y}=V(T[X]_{t,W})$. 
Indeed, let $t' \in X_{t,Y}$. This means that in
the path $P$ from $t$ to $t'$ in $T$, the second vertex belongs 
to $Y$. Due to the connectedness of $T[X]$, $V(P)$ wholly lies in 
$X$. In particular, this means that the second vertex of $P$ belongs
to $W$ and $P$ is a path of $T[X]$. Consequently, 
$t' \in V(T[X]_{t,W})$. Conversely, suppose $t' \in V(T[X]_{t,W})$.
This means that $T[X]$ (and hence $T$) has a path from $t$ to $t'$ whose second vertex
belongs to $W$ and hence to $Y$. It follows from $t' \in V(T_{t,Y})$. 
Since $t' \in X$ (due to being a vertex of $T[X]$), we conclude that
$t'\in X_{t,Y}$. 
\end{proof}

\begin{lemma} \label{rejectjust}
Let $H$ be a hypergraph, $A,B \subseteq V(H)$, $k_0>0$ an integer,
$(T,{\bf B})$ a \textsc{td} of $H$ and $X \subseteq V(T)$.
Suppose that $\sep(H,A,B,k_0,(T,{\bf B}),X) \neq \emptyset$.
Let $t \in V(T)$, $Y_1,Y_2$ be the partition of $N_T(t)$. 
Assume that $\sep(H,A,B,k_0,(T^+_{t,Y_i},{\bf B}^+_{t,Y_i}),X_{t,Y_i})=\emptyset$
for each $i \in \{1,2\}$. 
Then there are $W \subseteq {\bf B}(t)$, a weak partition $C_1,C_2$
into unions of connected components of $H[{\bf B}(t) \setminus W]$ and positive integers $k_1,k_2$,
$k_1+k_2 \leq k_0$ so that 
$\sep_i=\sep(H_{t,Y_i}^{-W}, (A \setminus W)_{t,Y_i} \cup C_1,(B \setminus W)_{t,Y_i} \cup C_2,k_i,
 (T_{t,Y_i},{\bf B}_{t,Y_i}^{-W}),X_{t,Y_i}) \neq \emptyset$ for each $i \in \{1,2\}$. 
\end{lemma}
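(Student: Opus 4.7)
The plan is to construct the required witnesses $W$, $C_1, C_2$, $k_1, k_2$ directly from the assumed $W^* \in \sep(H, A, B, k_0, (T, {\bf B}), X)$. I set $W := W^* \cap {\bf B}(t)$ and, for $i \in \{1,2\}$, $W^*_i := (W^* \cap V(H_{t,Y_i})) \setminus W$. To build $C_1, C_2$, I classify the connected components of $H \setminus W^*$: since $W^*$ is an $(A,B)$-separator, no component contains both an $A$-vertex and a $B$-vertex, so I let $U_A$ be the union of components disjoint from $B$ and $U_B$ the union of those meeting $B$. Each connected component $C$ of $H[{\bf B}(t) \setminus W]$ is disjoint from $W^*$ and is connected in a subhypergraph of $H \setminus W^*$, so $C$ lies entirely in one of $U_A$ or $U_B$; this assignment determines the partition $C_1, C_2$.

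The basic verifications rely on the connectedness of $(T, {\bf B})$, which forces $V(H_{t,Y_1}) \cap V(H_{t,Y_2}) \subseteq {\bf B}(t)$. From this one sees that $W, W^*_1, W^*_2$ is a weak partition of $W^*$, and that every $v \in W^*_i$ lies in ${\bf B}(X_{t,Y_i})$: if $v$ were witnessed only by some $x \in X$ with $x \in V(T_{t,Y_{3-i}}) \setminus V(T_{t,Y_i})$, then connectedness would place $v$ into ${\bf B}(t) \cap W^* = W$, contradicting $v \notin W$. To verify that $W^*_i$ is an $((A \setminus W)_{t,Y_i} \cup C_1, (B \setminus W)_{t,Y_i} \cup C_2)$-separator of $H_{t,Y_i}^{-W}$, I adapt the argument of \Cref{composedsep}: a witnessing path would live in $V(H_{t,Y_i}) \setminus (W \cup W^*_i) = V(H_{t,Y_i}) \setminus W^*$, hence in $H \setminus W^*$, and its endpoints would lie in $U_A$ and $U_B$ respectively, contradicting the fact that these are unions of distinct components. \Cref{validtw} guarantees that $(T_{t,Y_i}, {\bf B}_{t,Y_i}^{-W})$ is a valid TD of $H_{t,Y_i}^{-W}$, so the invocations of $\sep$ in the lemma's statement are well-formed.

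The main obstacle is to show that $k_1, k_2$ can be chosen positive with $k_1 + k_2 \leq k_0$. For positivity, I argue by contradiction using the emptiness hypothesis: if $W^*_1 = \emptyset$ then $W^* = W \cup W^*_2$, and because $t \in X$ holds in every application of the lemma (enforced by $\mathit{GetBalVert}$ in Algorithm~\ref{alg:sepgen}), one obtains $W \subseteq {\bf B}(t) \subseteq {\bf B}(X_{t,Y_2})$, hence $W^* \subseteq {\bf B}^+_{t,Y_2}(X_{t,Y_2})$, placing $W^*$ into the assumed-empty $\sep(H, A, B, k_0, (T^+_{t,Y_2}, {\bf B}^+_{t,Y_2}), X_{t,Y_2})$, a contradiction. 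For the cover bound, I take a minimum cover $E_0$ of $W^*$ with $|E_0| \leq k_0$ and assign each $e \in E_0$ to $E_0^i$ according to whether some bag of $T$ containing $e$ lies in $V(T_{t,Y_i})$; this is well-defined because $V(T) = V(T_{t,Y_1}) \cup V(T_{t,Y_2})$. An edge in $E_0^{3-i}$ is contained in $V(H_{t,Y_{3-i}})$ and hence meets $V(H_{t,Y_i})$ only inside ${\bf B}(t)$, where $W^*_i$ has no vertex; consequently $E_0^i$ alone covers $W^*_i$. Setting $k_i$ to be the cover number of $W^*_i$ in $H_{t,Y_i}^{-W}$ (at most $|E_0^i|$) yields $k_1 + k_2 \leq |E_0| \leq k_0$, with positivity following from non-emptiness of $W^*_i$.
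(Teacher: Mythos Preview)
Your proof is correct and follows essentially the same approach as the paper's: take a witness $W^*$, split it as $W \cup W^*_1 \cup W^*_2$ along the bag ${\bf B}(t)$, derive $C_1,C_2$ from the component structure of $H \setminus W^*$, and use the emptiness hypotheses to force $W^*_1,W^*_2 \neq \emptyset$. The only differences are cosmetic: the paper defines $C_1$ as the vertices of ${\bf B}(t)\setminus W$ reachable from $A$ in $H\setminus W^*$ (you instead take components avoiding $B$), and the paper obtains $k_1+k_2\le k_0$ by observing directly that no edge of $H$ meets both $W^*_1$ and $W^*_2$, so $\rho(W^*_1)+\rho(W^*_2)=\rho(W^*_1\cup W^*_2)\le \rho(W^*)\le k_0$, whereas you route this through an explicit assignment of cover edges to sides. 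You are also more explicit than the paper about the tacit assumption $t\in X$ needed for the positivity step; the paper uses it silently in deriving $W_i\cup W\subseteq {\bf B}^+_{t,Y_i}(X_{t,Y_i})$.
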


\begin{proof}
Let $W_0 \in \sep(H,A,B,k_0,(T,{\bf B}),X)$ .
Let $W=W_0 \cap {\bf B}(t)$, $W_i=W_0 \cap {\bf B}(V(T_{t,Y_i})) \setminus W$ for each $i \in \{1,2\}$. 
The following statement is immediate.
\begin{claim} \label{split30}
$W_0$ is the disjoint union of $W_1,W_2,W$. 
\end{claim}
 
Since $W_0 \subseteq {\bf B}(X)$, it follows that
$W_i \cup W\subseteq {\bf B}(X_{t,Y_i})$ for each $i \in \{1,2\}$. 
Taking into account the relevant definitions we conclude that
\begin{equation} \label{splitcut31}
W_i \cup W \subseteq {\bf B}^+_{t,Y_i}(X_{t,Y_i}) \qquad \forall i \in \{1,2\} 
\end{equation}

and that

\begin{equation} \label{split32}
W_i \subseteq {\bf B}^{-W}_{t,Y_i}(X_{t,Y_i}) \qquad \forall i \in \{1,2\}
\end{equation} 

We next observe that both $W_1$ and $W_2$ are non-empty.
Indeed, suppose that say $W_2=\emptyset$.
Then it follows from the combination of Claim \ref{split30}
and \eqref{splitcut31} that 
$W_0 \in \sep(H,A,B,k_0,(T^+_{t,Y_1},{\bf B}^+_{t,Y_1}),X_{t,Y_1})$ 
in contradiction to our assumption. Let $k_i=\rho(W_i)$ for each $i \in \{1,2\}$.
We thus have just observed that 
\begin{equation} \label{split33}
k_i>0 \qquad \forall i \in \{1,2\}
\end{equation}

By the properties of tree decompositions
$\rho(W_1 \cup W_2)=\rho(W_1)+\rho(W_2)$.
As $W_1 \cup W_2 \subseteq W_0$, we conclude that

\begin{equation} \label{split34}
k_1+k_2 \leq k_0
\end{equation}

Let $C_1$ be the vertices of ${\bf B}(t) \setminus W$ reachable from $A$ in $H \setminus W_0$.
Let $C_2=({\bf B}(t) \setminus W) \setminus C_1$. 
Since $W=W_0 \cap {\bf B}(t)$, it is clear that $C_1,C_2$ is a weak partition of ${\bf B}(t) \setminus W$
into unions of connected components of $H[{\bf B}(t) \setminus W]$. 
It also follows from definitions of $C_1$ and $C_2$ that

\begin{equation} \label{split35}
W_0 \in \sep(H, A\cup C_1, B\cup C_2,(T,{\bf B}),X)
\end{equation}

We next observe that, for each $i \in \{1,2\}$,
$W_i$ separates $(A \setminus W)_{t,Y_i} \cup C_1$ 
from $(B \setminus W)_{t,Y_i} \cup C_2$ in $H_{t,Y_i} \setminus W$.
Indeed, suppose that there is a path $P$ in say $H_{t,Y_1} \setminus (W \cup W_1)$ 
between $u_1 \in (A \setminus W)_{t,Y_1} \cup C_1$ and $u_2 \in (B \setminus W)_{t,Y_2} \cup C_2$.
By definition, $H_{t,Y_1} \setminus W$ is disjoint with $W_2$.
Hence, $P$ is a path in $H \setminus W_0$ between $A \cup C_1$ and $B \cup C_2$
in contradiction to \eqref{split35}. 
Combining this observation with \eqref{split32} we observe that
\begin{equation} \label{36}
W_i \in \sep_i \qquad \forall i \in \{1,2\}
\end{equation}
The lemma follows from the combination 
of \eqref{split33}, \eqref{split34} and \eqref{split35}
\end{proof}

\subsection{The main proof.}
\begin{claim}  \label{clmdecomp1}
Theorem \ref{fptsep} holds for $X \leq 2$. 
\end{claim}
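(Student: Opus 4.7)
The plan is to analyse the subroutine $\mathit{SmallSep}$ (\Cref{alg:smallsep}) directly, since for $|X| \leq 2$ the algorithm $\mathit{AppSep}$ simply delegates to it. I will argue three things in turn: the \fpt running time, the correctness of a non-reject output, and the correctness of the \textbf{Reject} output.

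For the runtime, I will observe that the single call to $\gcaalg$ runs in \fpt time and, by \Cref{setapprox}, produces an \fpt-sized collection $Sets$. The outer loop then iterates over $Sets$ performing only a polynomial-time connectivity check per iteration, which yields an overall \fpt bound.

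For a returned set $W$, the key observations are that $\gcaalg$ is invoked on $U = {\bf B}(t)$ when $|X|=1$ and on $U = {\bf B}(t_1) \cup {\bf B}(t_2)$ when $|X|=2$, so in either case $U = {\bf B}(X)$, and that by construction every element of the returned gap cover approximator is a subset of $U$; hence $W \subseteq {\bf B}(X)$. The bound $\rho(W) \leq (3k+d+1)k_0 \leq (3k+d+1)(2k-1)k_0$ is immediate from the defining property of a $(3k+d+1)k_0,k_0$-gap cover approximator, and the explicit test in the loop guarantees that $W$ is an $(A,B)$-separator. Together these facts place $W$ in $\sep(H,A,B,(3k+d+1)(2k-1)k_0,(T,{\bf B}),X)$ as required.

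For \textbf{Reject}, two cases will arise. If $\gcaalg$ itself rejects, \Cref{setapprox} directly yields $\ghw(H) > k$. Otherwise the loop exhausts $Sets$ without finding a separator, and I will need to show that $\sep(H,A,B,k_0,(T,{\bf B}),X) = \emptyset$. The approach here is by contradiction: assume some $W^* \in \sep(H,A,B,k_0,(T,{\bf B}),X)$ exists. Then $W^* \subseteq {\bf B}(X) = U$ and $\rho(W^*) \leq k_0$, so the second defining property of a gap cover approximator supplies some $W \in Sets$ with $W^* \subseteq W$. The one nontrivial point in the whole proof — and the main (if mild) obstacle — is the observation that any superset of an $(A,B)$-separator is itself an $(A,B)$-separator: any path between $A \setminus W$ and $B \setminus W$ in $H \setminus W$ would also be a path between $A \setminus W^*$ and $B \setminus W^*$ in $H \setminus W^*$, contradicting separation by $W^*$. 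Hence such a $W$ would have been detected by the explicit check in the loop, yielding the contradiction.
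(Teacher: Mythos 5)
Your proof is correct and follows essentially the same route as the paper's: delegate to $\mathit{SmallSep}$, invoke \Cref{setapprox} for the running time and the cover bound $(3k+d+1)k_0$, and split the \textbf{Reject} case into the two subcases (rejection propagated from $\gcaalg$ versus exhaustion of $Sets$). The only place you go slightly beyond what the paper writes is that you explicitly state and justify the monotonicity fact that any superset of an $(A,B)$-separator is again an $(A,B)$-separator — the paper leaves this implicit when it says a superset of the purported small separator would appear in $Sets$, but it is precisely the observation that makes the exhaustive test sound, so spelling it out is a small improvement rather than a deviation.
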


\begin{claimproof}
In this case the algorithm $\mathit{AppSep}$ runs 
$\mathit{SmallSep}$. 
It is immediate from Theorem \ref{setapprox}
that the algorithm runs \fpt time in $p$ and $d$ 
If a set is returned, it follows from the description
of $\mathit{SmallSep}$ that the returned set is an $(A,B)$-separator
and it follows from Theorem \ref{setapprox} that
the edge cover number is the set is at most
$(3k+d+1) \cdot k_0$. 

Assume that $\mathit{SmallSep}$ returns \textbf{Reject}.
One reason for that is that one of the runs
of $SetApp$ returns \textbf{Reject}.
In this case, the correctness of the output 
follows from Theorem \ref{setapprox}. 

Otherwise, according to the algorithm description
\textbf{Reject} is returned because no element of $Sets$
is an $(A,B)$-separator. By Theorem \ref{setapprox},
existence of a separator with edge cover number at most
$k_0$ would imply its superset to be included in $Sets$.
We conclude that such a separator does not exist and
the output is correct.
\end{claimproof}

\begin{claim} \label{clmdecomp2}
Suppose that $\mathit{AppSep}(H,A,B,k_0,(T,{\bf B}),X)$
applies itself recursively to a tuple 
$(H',A',B',k;,(T',{\bf B'}.X')$.
Then the tuple satisfies the input constraints as specified
in Algorithm \ref{alg:sepgen}. 
\end{claim}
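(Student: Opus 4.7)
The plan is to enumerate the four recursive calls of $\mathit{AppSep}$ (lines 4, 7, 14, 15 of Algorithm \ref{alg:sepgen}) and, for each, to verify the five input constraints: (i) the hypergraph is a $(2,d)$-hypergraph, (ii) the tree decomposition is actually a decomposition of that hypergraph, (iii) the nodes outside the distinguished set $X'$ are all leaves, (iv) the integers satisfy $0 \leq k_0' \leq k \leq p$, and (v) $\ghw(T',{\bf B'},X') \leq p$. Constraint (iv) is immediate in all four cases since $k, p$ are unchanged and the first two calls pass $k_0$ unchanged while the latter two pass $k_0$ as well (the reduction from $k_0$ to smaller values in Lemma~\ref{numrecur} is implicit in how the subproblem on the inside of the loop of line~13 is split, not in the $k_0$ argument itself; in either case $k_0' \leq k_0 \leq k \leq p$). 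Constraint (i) follows because induced subhypergraphs and subhypergraphs obtained by removing a vertex set are themselves $(2,d)$-hypergraphs.

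For the calls at lines~4 and~7, which pass the tuple $(T^+_{t,Y_i}, {\bf B}^+_{t,Y_i})$, the validity as a TD of $H$ is exactly the second statement of Theorem~\ref{validtw}. The set $X_{t,Y_i}$ satisfies the leaf-outside-$X'$ condition by the fourth statement of Theorem~\ref{validtw}, noting that the newly added root $r$ of $T^+_{t,Y_i}$ is a leaf by construction and is not placed in $X_{t,Y_i}$. Constraint (v) holds since $X_{t,Y_i} \subseteq X$ and all the original bags at those nodes are unchanged under $(T^+_{t,Y_i}, {\bf B}^+_{t,Y_i})$, so their edge cover numbers are still bounded by $p$.

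For the calls at lines~14 and~15, which pass $(T_{t,Y_i}, {\bf B}^{-W}_{t,Y_i})$ together with the restricted hypergraph $H_{t,Y_i} \setminus W$, the decomposition validity requires combining the first and third statements of Theorem~\ref{validtw}: first $(T_{t,Y_i},{\bf B}_{t,Y_i})$ is a TD of $H_{t,Y_i}$, and then removing $W$ vertex-wise from the bags yields a TD of $H_{t,Y_i} \setminus W$. The leaf condition for $X_{t,Y_i}$ in $T_{t,Y_i}$ is again the fourth statement of Theorem~\ref{validtw}. For constraint (v), removing $W$ from the bags only decreases their edge cover numbers, so the bound carries over.

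The only step that requires genuine care is verifying the leaf condition after the tree-splitting performed by $\mathit{GetBalVert}$, since a node of $X$ that has only one neighbour inside $X_{t,Y_i}$ could, in principle, lose its status; however, since all non-$X$ nodes of $T$ are already leaves of $T$ by the input invariant, the operation of restricting to $T_{t,Y_i}$ (or prepending a new root in $T^+_{t,Y_i}$) cannot create fresh non-leaf vertices outside $X_{t,Y_i}$, as confirmed by the explicit calculation in Theorem~\ref{validtw}. This is the main technical point; everything else follows from unpacking definitions.
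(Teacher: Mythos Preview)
Your proof is correct and follows the same approach as the paper, namely reducing the two structural constraints (that $(T',{\bf B'})$ is a tree decomposition of $H'$ and that the vertices of $V(T')\setminus X'$ are leaves) to the appropriate statements of Theorem~\ref{validtw}. Your treatment is in fact more thorough than the paper's: you also explicitly dispose of the $(2,d)$-hypergraph property, the integer inequalities $0\le k_0'\le k\le p$, and the bound $\ghw(T',{\bf B'},X')\le p$, none of which the paper bothers to spell out. Your observation that the fresh root $r$ of $T^+_{t,Y_i}$ is a leaf not belonging to $X_{t,Y_i}$ is a necessary detail that the paper leaves implicit.
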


\begin{claimproof}
We only need to establish that $(T',{\bf B'}$ is
a tree decomposition of $H'$ and that
all the elements of $V(T') \setminus X'$ are 
leaves of $T'$. 
This follows from Theorem \ref{validtw}. 
For the recursive calls in lines $4$ and $7$ of Algorithm \ref{alg:sepgen}, use the second
and the last statements.
For the recursive calls in lines $14$ and $15$ of Algorithm \ref{alg:sepgen},
use the first, the third, and the last statements. 
\end{claimproof}

In the rest of the proof we often use 
induction on $|X|$. Specifically, we need to assume
that, by the induction assumption, the considered property is correct for 
recursive calls made by the considered call of $\mathit{AppSep}$.
This use of induction assumption is valid because of Claim \ref{clmdecomp2}.
In the rest of the proof, we use induction assumption
without explicit reference to Claim \ref{clmdecomp2}.

\begin{claim} \label{clmdecomp3}
$\mathit{AppSep}$ runs \fpt time parameterized
by $p$ and $d$. 
\end{claim}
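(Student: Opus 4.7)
The plan is to bound the total runtime as the product of (i) the work performed by a single invocation of $\mathit{AppSep}$, ignoring the cost of recursive sub-calls, and (ii) the total number of recursive invocations triggered from the top-level call, and then to show that both factors are FPT in $p$ and $d$ times a polynomial in $|H|$.

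For (i), I would walk through Algorithm \ref{alg:sepgen} step by step. The base case $|X|\leq 2$ delegates to $\mathit{SmallSep}$, which calls $\gcaalg$ once (FPT by \Cref{setapprox}), obtains a family $Sets$ of FPT size, and performs a polynomial-time separator check on each of its elements. In the general case, $\mathit{GetBalVert}$ is polynomial by \Cref{balancedvert}, and the call to $\gcaalg$ at line 10 is again FPT. The loop at line 13 nests three enumerations: $W$ ranges over the FPT-sized $Sets$; positive integers $k_1,k_2$ with $k_1+k_2\leq k_0\leq p$ contribute $O(p^2)$ choices; and weak partitions $C_1,C_2$ of ${\bf B}(t)\setminus W$ into unions of connected components of $H[{\bf B}(t)\setminus W]$ give the last factor. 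The key observation is that the number of such components is at most $p$: otherwise, each component would require a distinct hyperedge in any cover of ${\bf B}(t)$, contradicting $\rho({\bf B}(t))\leq p$, which follows from the hypothesis $\ghw(T,{\bf B},X)\leq p$. Hence only $O(2^p)$ partitions are considered and the per-invocation work is FPT.

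For (ii), I would set up a recurrence $F(k_0,m)$ on the number of recursive invocations with $m=|X|$. For $m\leq 2$ the algorithm simply runs $\mathit{SmallSep}$ without recursion, so $F(k_0,m)=1$. For $m\geq 3$, \Cref{balancedvert} (whose applicability is maintained by \Cref{validtw} together with Claim \ref{clmdecomp2}) guarantees that the recursive calls at lines 4 and 7 reduce $m$ to at most $3m/4$ while leaving $k_0$ fixed, and that each iteration of the line 13 loop spawns two recursive calls in which $k_0$ strictly decreases while $m$ is again bounded by $3m/4$. With the FPT bound $g(p)$ on the number of such iterations coming from part (i), this gives
\[
F(k_0,m)\leq 2\,F(k_0,3m/4) + g(p)\,F(k_0-1,3m/4),
\]
which \Cref{numrecur} resolves to $F(k_0,m)\leq f(p)\,m^3$ for some appropriate FPT function $f$. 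Multiplying by the per-invocation bound from (i) yields the claimed FPT runtime.

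The main technical obstacle is the per-iteration enumeration of partitions in step (i): a priori the number of connected components of $H[{\bf B}(t)\setminus W]$ is not bounded in $p$ or $d$, and the argument crucially uses the input invariant $\ghw(T,{\bf B},X)\leq p$ to force this number down to at most $p$. Once this is secured, the recursion unfolds cleanly via \Cref{numrecur}.
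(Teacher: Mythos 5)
Your proof is correct and follows essentially the same approach as the paper's: it bounds per-invocation work by the FPT size of the output of $\gcaalg$ together with the observation that $H[{\bf B}(t)\setminus W]$ has at most $p$ connected components (since $\rho({\bf B}(t))\leq p$ for $t\in X$), then counts recursive invocations via the recurrence $F(k_0,m)\leq 2F(k_0,3m/4)+g(\cdot)F(k_0-1,3m/4)$ resolved by \Cref{numrecur}. The only cosmetic difference is that the paper lets the loop-factor $g_{p,d}(k_0)$ depend on $k_0$ while you use a $k_0$-independent bound $g(p)$, which is a harmless overapproximation and still fits \Cref{numrecur}.
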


\begin{claimproof}
First, we verify fixed-parameter tractability of a single
recursive application of $\mathit{AppSep}$. We refer to the lines
of Algorithm \ref{alg:sepgen}.

Lines $1$ and $2$ are \fpt by Claim \ref{clmdecomp1}. Line 10 is \fpt
by Theorem \ref{setapprox}. In particular if the application of
$SetApp$ at line $10$ does not reject there is an \fpt upper bound on
the size of $Sets$. Hence, for the loop at line $13$, we only need to
demonstrate an \fpt upper bound on the number of partitions $C_1$ and
$C_2$. But this immediately follows from the number of connected
components of $H[{\bf B}(t) \setminus W]$ being at most
$\rho({\bf B}(t))$, which is at most $p$ by assumption.

We observe that the number of recursive applications depend 
on $p,d,X,k_0$. Since $p,d$ do not change during the recursive
call, we consider them fixed and provide as subscripts rather than
the arguments for the number of iterations. Thus we denote 
the function by $T_{p,d}(k_0,m)$ where $m=|X|$.
It is also easily established by induction that the function is \emph{monotone},
that is $k'_0 \leq k_0$ and $m' \leq m$, 
$T_{p,d}(k'_0,m') \leq T_{p,d}(k_0,m)$. 

By the monotonicity and the choice of $(t,Y_1,Y_2)$,
the number of recursive applications made by the call at
line $4$ is at most $T_{p,d}(k_0,3/4m)$ 
 and the same is true for line $7$. 
Due to the same reason the number of recursive applications
made by the call at line $14$ is at most $T_{p,d}(k_0-1,3/4m)$
and the same is true for line $15$. 
The total number of recursive applications made within
the loop of line $13$ is upper bounded by a function $g_{p,d}(k_0)$. 
Hence $T_{p,d}(k_0) \leq 2T_{p,d}(k_0,3/4)+g_{p,d}(k_0)T_{p,d}(k_0-1,3/4m)$. 
By Lemma \ref{numrecur}, 
$T_{p,d}(k_0,m) \leq (2.7^{k_0-1} \cdot \prod_{i=2}^{k_0} g_{p,d} (i)) m^3$.
Since $k_0 \leq k$, there is an \fpt upper bound for this function
parameterized by $p$ and $d$. 
\end{claimproof}

\begin{claim} \label{clmdecomp4}
Suppose that $\mathit{AppSep}(H,A,B,(T,{\bf B}), k_0, X)$ does not reject
and let $Out$ be the output.
Then $Out \in sep(H,A,B,(3k+d+1)(2k-1)k_0,(T,{\bf B}),X)$.  
\end{claim}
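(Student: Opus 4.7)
The plan is to prove Claim~\ref{clmdecomp4} by induction on $|X|$, establishing three properties of $Out$: that it is an $(A,B)$-separator, that $Out\subseteq \B(X)$, and that $\rho(Out)\leq (3k+d+1)(2k-1)k_0$. Because the factor $(2k-1)k_0$ does not compose cleanly under recursion, I will strengthen the inductive statement on the cover bound to: $Out$ can be written as a union of at most $2k_0-1$ sets, each of edge cover number at most $(3k+d+1)k_0$. The claimed bound is then recovered from $k_0\leq k$ since $(2k_0-1)k_0\leq (2k-1)k_0$.

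For the base case $|X|\leq 2$, the output is produced by $\mathit{SmallSep}$ (\Cref{alg:smallsep}), where the returned $W$ is explicitly verified to be an $(A,B)$-separator in the loop at line~10 of that algorithm. Since $W$ is drawn from the output of $\gcaalg$, \Cref{setapprox} guarantees $\rho(W)\leq (3k+d+1)k_0$, and $W$ is contained in $\B(t)$ or $\B(t_1)\cup \B(t_2)$, hence in $\B(X)$. As $1\leq 2k_0-1$ when $k_0\geq 1$ (and $Out=\emptyset$ is trivial when $k_0=0$), the strengthened claim holds.

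For the inductive step with $|X|\geq 3$ there are three return sites in \Cref{alg:sepgen}. Lines~5 and~8 return the output of a recursive call on $(T^+_{t,Y_i},\B^+_{t,Y_i})$ with $|X_{t,Y_i}|\leq \tfrac{3}{4}|X|$ by \Cref{balancedvert}; \Cref{validtw} ensures the recursive input is well-formed, so the induction hypothesis transfers the three properties directly, using $\B^+_{t,Y_i}(X_{t,Y_i})\subseteq \B(X)$ for containment. The substantial case is line~17, where $Out = Out_1\cup Out_2\cup W$. For the separator property I will invoke \Cref{composedsep} with $V_i = V(H_{t,Y_i})$, whose intersection $\B(t)$ is a $(V_1,V_2)$-separator by the connectedness property of the tree decomposition, together with the weak partition $C_1,C_2$ of $\B(t)\setminus W$ chosen by the algorithm; the induction hypothesis supplies that each $Out_i$ separates the appropriate $A$-side from the $B$-side in $H_{t,Y_i}\setminus W$. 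Containment in $\B(X)$ follows from $W\subseteq \B(t)\subseteq \B(X)$ together with $Out_i\subseteq \B^{-W}_{t,Y_i}(X_{t,Y_i})\subseteq \B(X)$ by the induction hypothesis.

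The principal obstacle is the cover-number bound under the strengthened hypothesis. By the induction hypothesis each $Out_i$ is a union of at most $2k_i-1$ sets, each of cover number at most $(3k+d+1)k_i\leq (3k+d+1)k_0$, while $W$ contributes one further set of cover number at most $(3k+d+1)k_0$ by \Cref{setapprox}. The algorithm's constraint $k_1+k_2\leq k_0$ with $k_i\geq 1$ then yields in total at most $(2k_1-1)+(2k_2-1)+1 = 2(k_1+k_2)-1\leq 2k_0-1$ sets in the decomposition of $Out$, each of cover number at most $(3k+d+1)k_0$. Summing gives $\rho(Out)\leq (2k_0-1)(3k+d+1)k_0$, which together with $k_0\leq k$ delivers the claimed $(3k+d+1)(2k-1)k_0$ bound.
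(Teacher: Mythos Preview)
Your proposal is correct and follows essentially the same approach as the paper. The paper formalises the decomposition of $Out$ via an explicit ``forming tree'' whose set-labelled nodes are the pieces contributed by $\gcaalg$, whereas you carry the same information through a strengthened induction hypothesis (``$Out$ is a union of at most $2k_0-1$ sets, each of cover at most $(3k+d+1)k_0$''); the counting $(2k_1-1)+(2k_2-1)+1\le 2k_0-1$ and the use of \Cref{composedsep} for the separator property are identical, and your argument in fact matches the informal overview given in Section~\ref{sec:appsepmain}.
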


\begin{claimproof}
It can be verified by a direct inspection that
$Out \in {\bf B}(X)$. Thus we need to verify that
$Out$ is an $(A,B)$-separator of $H$ and that
$\rho(Out) \leq (3k+d=1)(2k-1)k_0$. 

Let us first prove that $Out$ is an $(A,B)$-separator.
If the output is returned on line $2$ then the claim is 
immediate from Claim \ref{clmdecomp1}. 
If $Out$ is returned on line $6$ or line $9$ then the statement
follows from the induction assumption applied to the respective 
runs at lines $4$ and $7$. 
Otherwise, $Out$ is returned at line $17$. 
Then the statement follows from the induction assumption applied to lines $14$
and $15$ and from Theorem \ref{composedsep}. 

In order to prove the required upper bound
on the edge cover of $Out$, we define the \emph{forming tree}
$F(H,A,B,k_0,T,{\bf B},X)$. The nodes of the tree are associated 
with the set of parameters of the corresponding recursive call
and, possibly, with a set. 

We define the tree recursively. 
Suppose first that $|X| \leq 2$.
Then $F(H,A,B,k_0,T,{\bf B},X)$ consists of a single
node associated with the tuple $(H,A,B,k_0,T,{\bf B},X)$
and with the set returned by $\mathit{SmallSep}$ on this tuple of parameters.

Otherwise, suppose that $Out$ is returned on line $6$.
Then the root of $F(H,A,B,k_0,T,{\bf B},X)$ is associated 
with the tuple $(H,A,B,k_0,T,{\bf B},X)$ and not associated 
with any set.
The root has a single child which is the root of 
$F(H',A,B,k_0,T',{\bf B'},X')$ where  $(H',A,B,k_0,T',{\bf B'},X')$
is the tuple of parameters for the recursive application of
$\mathit{AppSep}$ on line 4. 
If $Out$ is returned on line $9$ then $F(H,A,B,k_0,T,{\bf B},X)$
is defined analogously with line $7$ used instead of line $4$.

It remains to assume that $Out$ is returned on line $17$.
Fix $W$, $k_1,k_2$ considered at the corresponding iteration 
of the loop at line $13$. 
Let $(H',A',B',k_1,T',{\bf B'},X')$ and $(H'',A'',B'',k_2,T'',{\bf B''},X'')$
be the tuples of parameters on which $\mathit{AppSep}$ recursively runs
in the respective lines $14$ and $15$ of the considered iteration.
Then the root of $F(H,A,B,k_0,T,{\bf B},X)$ is associated with $(H,A,B,k_0,T,{\bf B},X)$
and with $W$. The root has two children. The subtree rooted by one of them is
$F(H',A',B',k_1,T',{\bf B'},X')$  and the subtree rooted by the other one
is $F(H'',A'',B'',k_2,T'',{\bf B''},X'')$. 

It is not hard to observe by induction that $Out$ is the union of sets associated
with the nodes of $F(H,A,B,k_0,T,{\bf B},X)$.
We observe that the edge cover of the union is at most
$(3k+d+1)(2k_0-1)k_0 \leq (3k+d+1)(2k-1)k_0$.
The proof is by induction on the number of nodes of 
$F(H,A,B,k_0,T,{\bf B},X)$. In case of a single node, this is immediate
from Claim \ref{clmdecomp1}.

If the root has one child then the root, by construction, is not associated
with any set. Therefore, the upper bound is the same as for the subtree
rooted by the child and hence follows by the induction assumption.

Finally, if the root has two children
then the set associated with the root is an element of the family
of sets produced by $SepApp$ whose
last parameter is at most $k_0$. It follows from Theorem \ref{setapprox} 
that the edge cover number of this set is at most $(3k+d+1)k_0$.
The last parameters of the tuples associated with the children
are $k_1,k_2>0$ such that $k_1+k_2 \leq k_0$.
By the induction assumption, the unions of sets 
associated with the subtrees rooted by the children are
of edge cover at most $(3k+d+1)(2k_i-1)k_i$ for $i \in \{1,2\}$
So the edge cover of the union of sets associated with 
$F(H,A,B,k_0,T,{\bf B},X)$ is at most
$(3k+1+d)((2k_1-1)k_1+(2k_2-1)k_2+k_0) \leq (3k+d+1)k_0(2k_1-1+2k_2-1+k_0) \leq
(3k+d+1)(2k_0-1)k_0$ as required.
\end{claimproof}

\begin{claim} \label{clmdecomp5}
If $AppSep(H,A,B,T,{\bf B},k_0,X)$ returns \textbf{Reject}
then either the set of relevant separators $sep(H,A,B,k_0,T,{\bf B},X)$ is empty,
or $ghw(H)>k$,
\end{claim}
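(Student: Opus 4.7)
The plan is to proceed by induction on $|X|$ and treat the two places where Algorithm~\ref{alg:sepgen} can output \textbf{Reject}. For the base case $|X|\le 2$ the statement is immediate from Claim~\ref{clmdecomp1}, since in that case the output is produced by $\mathit{SmallSep}$. For the inductive step ($|X|\ge 3$), a \textbf{Reject} at line~12 is triggered by $\gcaalg$ itself rejecting, and Theorem~\ref{setapprox} directly yields $ghw(H)>k$; so the substantive case is a \textbf{Reject} at line~18.

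Reaching line~18 forces both recursive calls at lines~4 and~7 to have returned \textbf{Reject}. By the induction hypothesis, for each $i\in\{1,2\}$ either $ghw(H)>k$ (in which case we are done) or $\sep(H,A,B,k_0,(T^+_{t,Y_i},{\bf B}^+_{t,Y_i}),X_{t,Y_i})=\emptyset$. Assuming the second alternative for both $i$, the main task is to show $\sep(H,A,B,k_0,(T,{\bf B}),X)=\emptyset$. I plan to argue by contraposition: starting from a hypothetical $W_0\in\sep(H,A,B,k_0,(T,{\bf B}),X)$, I will exhibit an iteration of the loop at line~13 that succeeds, contradicting the \textbf{Reject} at line~18.

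Given $W_0$, Lemma~\ref{rejectjust} supplies a distinguished piece $W^{\mathrm{real}}=W_0\cap{\bf B}(t)$, a weak partition $(C_1,C_2)$ of ${\bf B}(t)\setminus W^{\mathrm{real}}$ into unions of connected components of $H[{\bf B}(t)\setminus W^{\mathrm{real}}]$, positive integers $k_1,k_2$ with $k_1+k_2\le k_0$, and witnesses $W_1,W_2$ for the two subproblem separator sets being nonempty. Since $W^{\mathrm{real}}$ need not itself belong to $Sets$, the key step is to invoke the gap cover approximator property of Theorem~\ref{setapprox} to obtain some $W^*\in Sets$ with $W^{\mathrm{real}}\subseteq W^*$, and then feed this $W^*$ into the loop together with $\tilde C_j:=C_j\setminus W^*$ and the same $k_1,k_2$. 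I will verify that $(\tilde C_1,\tilde C_2)$ is a valid weak partition of ${\bf B}(t)\setminus W^*$ into unions of connected components of $H[{\bf B}(t)\setminus W^*]$ (enlarging the removed set from $W^{\mathrm{real}}$ to $W^*$ can only refine existing components while keeping $C_1$ and $C_2$ separated), and that $W_i\setminus W^*$ witnesses non-emptiness of the $i$-th modified subproblem separator set. Then the induction hypothesis applied to the recursive calls at lines~14 and~15 of that iteration will rule out \textbf{Reject}, so the iteration returns output at line~17, contradicting the assumed \textbf{Reject} at line~18.

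The main technical obstacle I expect is the monotonicity verification in the previous step: confirming that replacing $W^{\mathrm{real}}$ by the possibly strictly larger $W^*$ preserves everything needed for $W_i\setminus W^*$ to witness non-emptiness of each modified subproblem separator set. Once the notational conventions ($A_{t,Y_i}$, ${\bf B}_{t,Y_i}^{-W}$, $H_{t,Y_i}\setminus W$) are carefully unwound, this should reduce to two routine monotonicity facts: a separator stays a separator when both the ambient hypergraph and both sides of the cut shrink, and an edge-cover bound is preserved under vertex deletion.
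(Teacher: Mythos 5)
Your proposal follows the same overall structure as the paper's proof (induction on $|X|$, dichotomy between the gca-triggered reject and the nonempty-separator case, use of Lemma~\ref{rejectjust} to exhibit a successful iteration of the loop at line~13), and is correct. The one substantive difference is worth noting: you explicitly fill a step the paper glosses over. Lemma~\ref{rejectjust} produces a set $W = W_0 \cap \mathbf{B}(t)$ that need not lie in $\mathit{Sets}$, yet the loop at line~13 only iterates over $W \in \mathit{Sets}$; the paper's proof simply says ``combining the lemma with the induction assumption, we observe that a non-rejection output must be returned at one of the iterations,'' whereas you supply the missing bridge: pass to the superset $W^* \in \mathit{Sets}$ guaranteed by the gap-cover-approximator property (applicable since $\rho(W) \leq \rho(W_0) \leq k_0$), replace $C_j$ by $C_j \setminus W^*$, and check the three monotonicity facts (components only refine, separators persist under shrinking the ambient hypergraph and both sides, and edge-cover number is monotone under subset). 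This is a genuine improvement in rigor over the paper's sketch, not merely a restatement. Your approach also dispenses with the paper's global ``\textbf{Reject} triggered by $\gcaalg$'' bookkeeping in favour of handling the $\ghw$ dichotomy locally at each level of the induction; the two are equivalent, and yours is arguably cleaner. One small imprecision to fix: when you say the induction hypothesis ``will rule out \textbf{Reject}'' at lines 14--15, the IH only lets you conclude no \textbf{Reject} if you also know $\ghw$ of the subproblem hypergraph is at most $k$. You should instead say: if either recursive call rejects, then by the IH and the established non-emptiness of its separator set, $\ghw$ of the (induced) subproblem hypergraph exceeds $k$, hence $\ghw(H) > k$ and you are done; otherwise neither rejects and you get the contradiction with line~18.
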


\begin{claimproof}
Let us say that a \textbf{Reject} returned by $\mathit{SmallSep}$ is \emph{triggered}
by $SetApp$ if the \textbf{Reject} is returned at line $8$ of  
Algorithm \ref{alg:smallsep}. 

Let us say that a \textbf{Reject} returned by $\mathit{AppSep}$ is \emph{triggered} by $SetApp$
if one of the following three conditions hold. 
\begin{enumerate}
\item The condition of the first line is true and the
corresponding run of $\mathit{SmallSep}$ returns \textbf{Reject} triggered by a large \ghw.
\item The \textbf{Reject} is returned at line $9$
\item The \textbf{Reject} is returned at line $18$ and at least one recursive
application made by $\mathit{AppSep}$ before line $18$ is triggered by $SetApp$. 
\end{enumerate}

We observe that if a \textbf{Reject} return by $\mathit{AppSep}$ is triggered by $SetApp$
then $ghw(H)>k$. 
This is immediate from the description and Theorem \ref{setapprox} in case
the first or the second condition of the above list are true.
In case of the third condition, it is not hard to argue by induction that 
the \ghw of an induced subgraph of $H$ is greater than $k$ and so
this holds for $H$ itself. 

We proceed to observe that if the \textbf{Reject} return by $\mathit{AppSep}$
is not triggered by $SetApp$ then $\sep(H,A,B,k_0,T,{\bf B},X)=\emptyset$
The proof is by induction on $|X|$. 
If $|X| \leq 2$,the statement is immediate from Claim \ref{clmdecomp1}.
Otherwise, the \textbf{Reject} is returned at line $18$ of Algorithm \ref{alg:sepgen}.
In order to trigger the return at line $18$, \textbf{Reject} must be returned at lines
$6$ and $9$. By the induction assumption applied to the recursive calls 
at lines $4$ and $7$, this means that
$\sep(H,A,B,k_0,(T^+_{t,Y_i},{\bf B}^+_{t,Y_i}),X_{t,Y_i})=\emptyset$ for each $i \in \{1,2\}$.
If we assume that $\sep(H,A,B,k_0,(T,{\bf B}),X) \neq \emptyset$ then the premises
of Lemma \ref{rejectjust} are satisfied. Combining the lemma with the induction
assumption, we observe that a non-rejection output must be returned at
one of the iterations of the loop at line 13 contradicting our assumption that
the algorithm reaches line 18.  
\end{claimproof}

Theorem \ref{fptsep} is immediate from Claims \ref{clmdecomp3}, \ref{clmdecomp4},
and \ref{clmdecomp5}.

\end{document}